\newcommand{\remove}[1]{}
\title{\textbf{Fast and Scalable Group Mutual Exclusion}}
\author{Shreyas Gokhale, Neeraj Mittal}
\date{}
\titlerunning{Fast and Scalable GME} 
\subjclass{Computing methodologies $\rightarrow$ Concurrent computing methodologies $\rightarrow$ Concurrent algorithms}
\newlength\algowd
\newcounter{propcounter}
\newcommand*{\propcount}{\refstepcounter{propcounter}\thepropcounter}
\crefname{figure}{figure}{figures}
\crefname{algocf}{Algorithm}{Algorithms}
\Crefname{algocf}{Algorithm}{Algorithms}
\crefname{AlgoLine}{Line}{Lines}
\Crefname{AlgoLine}{Line}{Lines}
\renewcommand{\eqref}[1]{(\ref{eq:#1})}
\newcommand{\Tabref}[1]{Table~\ref{tab:#1}}
\newcommand{\n}{n}
\newcommand{\m}{m}
\newcommand{\g}{\ell}
\newcommand{\s}{s}
\newcommand{\pc}{\dot{c}}
\newcommand{\ic}{\bar{c}}
\newcommand{\cp}{\dot{c}}
\newcommand{\ci}{\bar{c}}
\newcommand{\Ok}{$O(\p)$}
\newcommand{\ON}{$O(\n)$}
\newcommand{\OkLogNmin}{$O\big(\min \{\log \n, \p\}\big)$}
\newcommand{\GME}{GME}  
\newcommand{\cmark}{\ding{51}}%
\newcommand{\xmark}{\ding{55}}%
\newcommand{\LL}{\textsf{LL{}}\xspace}
\newcommand{\SC}{\textsf{SC{}}\xspace}
\newcommand{\LLSC}{\LL/\SC\xspace}
\newcommand{\FAI}{\textsf{FAI}}
\newcommand{\FAD}{\textsf{FAD}}
\newcommand{\FAA}{\textsf{FAA}}
\newcommand{\CAS}{\textsf{CAS}}
\newcommand{\FAS}{\textsf{FAS}}
\newcommand{\entry}{entry\xspace}
\newcommand{\exit}{exit\xspace}
\newcommand{\Entry}{Entry\xspace}
\let\oldnl\nl
\newcommand{\nonl}{\renewcommand{\nl}{\let\nl\oldnl}}
\newcommand{\RNum}[1]{\uppercase\expandafter{\romannumeral #1\relax}}
\newcommand{\size}{size\xspace}
\newcommand{\state}{state\xspace}
\newcommand{\open}{open\xspace}
\newcommand{\closed}{closed\xspace}
\newcommand{\adjourned}{adjourned\xspace}
\newcommand{\retired}{retired\xspace}
\newcommand{\close}[1][d]{close{#1}\xspace}
\newcommand{\adjourn}[1][ed]{adjourn{#1}\xspace}
\newcommand{\retire}[1][d]{retire{#1}\xspace}
\newcommand{\pack}[1]{\{{#1}\}}
\newcommand{\instance}{instance}
\newcommand{\session}{session}
\newcommand{\gate}{state}
\newcommand{\nextinlist}{next}
\newcommand{\previousinlist}{prev}
\newcommand{\pointer}{\rightarrow}
\newcommand{\head}{head\xspace}
\newcommand{\headArray}{head\xspace}
\newcommand{\announceArray}{announce\xspace}
\newcommand{\snapshotArray}{snapshot\xspace}
\newcommand{\hpArray}{hp\xspace}
\newcommand{\whichArray}{which\xspace}
\newcommand{\markerArray}{marker\xspace}
\newcommand{\other}{other}
\newcommand{\counter}{number}
\newcommand{\node}{node}
\newcommand{\current}{current}
\newcommand{\myid}{me}
\newcommand{\myinstance}{myinstance}
\newcommand{\mysession}{mysession}
\newcommand{\mynode}{mynode}
\newcommand{\successor}{successor}
\newcommand{\homogeneous}{homogeneous}
\newcommand{\homogeneity}{homogeneity}
\newcommand{\mine}{mine}
\newcommand{\helpee}{helpee}
\newcommand{\guard}{guard\xspace}
\newcommand{\vacant}{vacant\xspace}
\newcommand{\LEADERFLAG}{\texttt{LEADERLESS}\xspace}
\newcommand{\CONFLICTFLAG}{\texttt{CONFLICT}\xspace}
\newcommand{\VACANTFLAG}{\texttt{VACANT}\xspace}
\newcommand{\RETIREDFLAG}{\texttt{RETIRED}\xspace}
\newcommand{\flag}{\mathit{flag}}
\newcommand{\poolArray}{pool}
\newcommand{\owner}{owner}
\newcommand{\condition}{condition\xspace}
\newcommand{\CLEANFLAG}{\texttt{SAFE}\xspace}
\newcommand{\DIRTYFLAG}{\texttt{UNSAFE}\xspace}
\newcommand{\UNKNOWNFLAG}{\texttt{UNKNOWN}\xspace}
\newcommand{\IsClosed}{\textsc{IsClosed}}
\newcommand{\IsAdjourned}{\textsc{IsAdjourned}}
\newcommand{\IsRetired}{\textsc{IsRetired}}
\newcommand{\SetLeaderOrConflictFlag}{\textsc{SetGuardFlag}}
\newcommand{\SetVacantFlag}{\textsc{SetVacantFlag}}
\newcommand{\MarkAsRetired}{\textsc{MarkAsRetired}}
\newcommand{\ReadHead}{\textsc{ReadHead}}
\newcommand{\UpdateHead}{\textsc{AdvanceHead}}
\newcommand{\TestHead}{\textsc{TestHead}}
\newcommand{\InitializeNode}{\textsc{GetNewNode}}
\newcommand{\FetchNode}{\textsc{SelectNextNode}}
\newcommand{\AppendNode}{\textsc{AppendNextNode}}
\newcommand{\ReclaimNode}{\textsc{RetireNode}}
\newcommand{\Cleanup}{\textsc{Cleanup}}
\newcommand{\Notify}{\textsc{Notify}}
\newcommand{\NotifyAll}{\textsc{NotifyAll}}
\newcommand{\TryToEnter}{\textsc{Enter}}
\newcommand{\TryToLeave}{\textsc{Exit}}
\newcommand{\bitor}{\mathrel{|}}
\newcommand{\bitand}{\mathrel{\&}}
\newcommand{\na}{--}
\newcommand{\clean}{safe\xspace}
\newcommand{\dirty}{unsafe\xspace}
\newcommand{\unknown}{unknown\xspace}
\newcommand{\apool}{active\xspace}
\newcommand{\ppool}{passive\xspace}
\newcommand{\skiplistgme}{40\%}
\newcommand{\gainnextbest}{189\%}
\newcommand{\GLBGME}{GLB-GME}
\newcommand{\FSGME}{FS-FME}
\newcommand{\BHGME}{BH-GME}
\newcommand{\myparagraph}[1]{\paragraph{#1:}}
\newcommand{\sessionOf}[1]{s(#1)}
\newcommand{\safeSet}{\mathcal{S}}
\newcommand{\readyArray}{ready}
\pgfplotsset{compat=1.13}
\newcommand{\STAB}[1]{\begin{tabular}{@{}c@{}}#1\end{tabular}}
\newtheorem{lemma}{Lemma}
\newtheorem{theorem}{Theorem}
\begin{document}

\maketitle

\begin{abstract}

The \emph{group mutual exclusion (GME)} problem is a generalization of the classical mutual exclusion problem in which every critical section is associated with a \emph{type} or \emph{session}. Critical sections belonging to the same session can execute concurrently, whereas critical sections belonging to different sessions must be executed serially. The well-known read-write mutual exclusion problem is a special case of the group mutual exclusion problem.

In this work, we present a new GME algorithm for an asynchronous shared-memory system that, in addition to satisfying lockout freedom, bounded exit and concurrent entering properties, has $O(1)$ step-complexity 
when the system contains no conflicting requests \emph{as well as} $O(1)$ space-complexity per GME object when the system contains sufficient number of GME objects. 
To the best of our knowledge, no existing GME algorithm has $O(1)$ step-complexity for concurrent entering. 
The RMR-complexity of a request is only $O(\pc)$ in the amortized case, where $\pc$ denotes the point contention of the request.

In our experimental results, our GME algorithm vastly outperformed two of the existing GME algorithms especially for higher thread count values by as much as \gainnextbest{} in some cases.

\end{abstract}

\section{Introduction}

The \emph{group mutual exclusion (GME)} problem is a generalization of the classical mutual exclusion (ME) problem in which every critical section is associated with a \emph{type} or \emph{session}~\cite{Jou:2000:DC}. Critical sections belonging to the same session can execute concurrently, whereas critical sections belonging to different sessions must be executed serially. The GME problem models situations in which a resource may be accessed at the same time by processes of the same group, but not by processes of different groups. As an example, suppose data is stored on multiple discs in a shared CD-jukebox. When a disc is loaded into the player, users that need data on that disc can access the disc concurrently, whereas users that need data on a different disc have to wait until the current disc is unloaded~\cite{Jou:2000:DC}. Another example includes a meeting room for philosophers interested in different forums or topics~\cite{Jou:2002:DC,TakIga:2003:COCOON}.

The well-known readers/writers problem is a special case of the group mutual exclusion problem in which all read critical sections belong to the same session but every write critical section belongs to a separate session.

Note that any algorithm that solves the mutual exclusion problem also solves the group mutual exclusion problem. However, the solution is inefficient since critical sections are executed in a serial manner and thus the solution does not permit any concurrency. To rule out such inefficient solutions, a group mutual exclusion algorithm needs to satisfy concurrent entering property. Roughly speaking, the concurrent entering property states that if all processes are requesting the same session, then they \emph{must} be able to execute their critical sections concurrently.

The GME problem has been defined for both message-passing and shared-memory systems. The focus of this work is to develop an efficient GME algorithm for shared-memory systems. 
Recently, GME-based locks have been used to improve the performance of lock-based concurrent skip lists for multi-core systems using the notion of unrolling by storing multiple key-value pairs in a single node~\cite{Pla:2017:PhD}. Unlike in a traditional skip list, most update operations in an unrolled skip list do not need to make any structural changes to the list. This can be leveraged to allow multiple insert operations or multiple delete operations (but not both) to act on the same node simultaneously in most cases. To make structural changes to the list, an operation needs to acquire exclusive locks on the requisite nodes as before. Note that implementing this idea requires GME-based locks; read-write locks  do not suffice since a lock needs to support two distinct shared modes. Experimental evaluation showed that, using GME-based locks, can improve the performance of a concurrent (unrolled) skip list by more than \skiplistgme{}~\cite{Pla:2017:PhD}.

\subsection{Related Work} 
Since the GME problem was first introduced by Joung around two decades ago~\cite{Jou:2000:DC}, several algorithms  have been proposed to solve the problem for shared-memory systems~\cite{Jou:2000:DC,KeaMoi:1999:PODC,Had:2001:PODC,TakIga:2003:COCOON,JayPet+:2003:PODC,DanHad:2004:DISC,BhaHua:2010:PODC,HeGop+:2016:ICDCN}. These algorithms provide different trade-offs between fairness, concurrency, step complexity and space complexity. 
A detailed description of the related work is given in~\cref{sec:related}. 
To the best of our knowledge, all of the prior work suffers from at least one and possibly both of the following drawbacks. 

\medskip

\noindent
\textbf{Drawback 1 (high step complexity in the absence of any conflicting request):}
In a system using fine-gained locking, most of the lock acquisitions are likely to be uncontended 
in practice (\emph{i.e.}, at most one process is trying to acquire a given lock). Note that this is the primary motivation behind providing a fast-path mechanism for acquiring a lock~\cite{HerSha:2012:Book}. Moreover, in concurrent unrolled skip lists using GME-based locks~\cite{Pla:2017:PhD}, most of the lock acquisitions involve only two shared sessions. In many cases, all requests are likely to be for the same session. This necessiates the need for a GME algorithm that has low step-complexity when all requests for acquiring a given lock are for the same session, which we refer to as \emph{concurrent entry step complexity}. (Note that this includes the case where there is only one request for lock acquisition.)

To the best of our knowledge, except for two, all other existing GME algorithms have  concurrent entry step complexity of $\Omega(\n)$, where $\n$ denotes the number of processes in the system. The GME algorithm by Bhatt and Huang~\cite{BhaHua:2010:PODC} has concurrent entry step-complexity of $O(\min\{\log \n, \pc\})$, where $\pc$ denotes the point contention of the request. Also, one of the GME algorithms by Danik and Hadzilcos~\cite[Algorithm~3]{DanHad:2004:DISC} has concurrent entry step complexity of $O(\log \s \cdot \min\{\log \n, \pc\})$, where $\s$ denotes the number of different types of sessions.

\medskip

\noindent
\textbf{Drawback 2 (high space complexity with a large number of GME objects):}
All the existing work in this area has (implicitly) focused on a \emph{single} GME object. 
However, many systems use fine-grained locking to achieve increased scalability in multi-core/multi-processor systems.
For example, each node in a concurrent data structure is protected by a separate lock~\cite{HerSha:2012:Book,Pla:2017:PhD}.

All the existing GME algorithms that guarantee starvation freedom have a space-complexity of at least $\Theta(\n)$ for a single GME object. Note that this is expected because mutual exclusion is a special case of group mutual exclusion and any starvation-free mutual exclusion
algorithm requires $\Omega(\n)$ space even when powerful atomic instructions such as compare-and-swap are used~\cite{FicHen+:2006:DC}.
Some of these GME algorithms (\emph{e.g.}, \cite{KeaMoi:1999:PODC,DanHad:2004:DISC,HeGop+:2016:ICDCN}) can be modified relatively easily to share the bulk of this space among all GME objects and, as a result, the additional space usage for each new GME object is only $O(1)$. 
However, it is not clear how the other GME algorithms (\emph{e.g.}, \cite{Jou:2000:DC,Had:2001:PODC,TakIga:2003:COCOON,JayPet+:2003:PODC,DanHad:2004:DISC,BhaHua:2010:PODC,HeGop+:2016:ICDCN}) can be modified to achieve the same space savings. For these GME algorithms, to our understanding, the additional space usage for each new GME object is at least $\Theta(\n)$. We refer to the former set of GME algorithms as \emph{space-efficient} and the latter set of GME algorithms as \emph{space-inefficient}. 

Consider the example of a concurrent data structure using GME-based locks to improve performance~\cite{Pla:2017:PhD}. If $\n$ is relatively large, then the size of a node equipped with a lock based on a GME algorithm that is space-inefficient may be several factors more than its size otherwise. This will \emph{significantly} increase the memory footprint of the concurrent data structure, which, in turn, will adversely affect its performance and may even negate the benefit of increased concurrency resulting from using a GME-based lock.

\subsection{Our Contributions} 
In this work, we present a new GME algorithm that, in addition to satisfying the group mutual exclusion, lockout freedom, bounded exit, concurrent entering and bounded space variable properties, has the following desirable features.
First, it has $O(1)$ concurrent entry step complexity. Note that, as a corollary, a process can enter its critical section within a constant number of its own steps in the absence of any other request, which is typically referred to as contention-free step complexity. To the best of our knowledge, no existing GME algorithm has $O(1)$ concurrent entry step-complexity.
Second, it uses only $O(\m + \n^2)$ space for managing $\m$ GME objects, where $O(\n^2)$ space is shared among all $\m$ GME objects. In addition, each process needs only $O(\g)$ space, where $\g$ denotes the maximum number of GME objects (or locks) a process needs to hold at the same time, which is space-optimal. 
Third, the number of remote references made by a request under the cache-coherent model, which is referred to as RMR complexity, is $O(\min\{\ic,\n\})$ in the worst case and $O(\pc)$ in the amortized case, where $\ic$ denotes the interval contention of the request.
Finally, it uses bounded space variables.
As in~\cite{DanHad:2004:DISC}, our algorithm uses two read-modify-write (RMW) instructions, namely compare-and-swap (\CAS) and fetch-and-add (\FAA), both of which are commonly available on modern processors including x86\_64 and AMD64.

In our experimental results, our GME algorithm vastly outperformed two of the well-known existing GME algorithms~\cite{BhaHua:2010:PODC,HeGop+:2016:ICDCN} especially for higher thread count values by as much as \gainnextbest{} in some cases.

We show elsewhere that our algorithm can be easily adapted to achieve optimal RMR complexity of $O(\n)$ under the distributed shared memory (DSM) model, while maintaining all the other aforementioned desirable properties.

\subsection{Roadmap}
The rest of the text is organized as follows. 
We present the system model and describe the problem in \cref{sec:model|definition}.
In \cref{sec:algorithm}, we describe our GME algorithm, prove its correctness and analyze its complexity.
\Cref{sec:related} describes the related work. 
Finally, \cref{sec:conclusion|future} concludes the text and outlines directions for future work.

\section{System Model and Problem Specification}
\label{sec:model|definition}

\subsection{System Model}

We consider an asynchronous shared-memory system consisting of $\n$ processes labeled $p_1, p_2, \ldots, p_\n$. Each process also has its own private variables. Processes can only communicate by performing read, write and read-modify-write (RMW) instructions on shared variables.
A system execution is modeled as a sequence of process steps. In each step, a process  either performs  some local computation affecting only its private variables or executes one of the available instructions (read, write or RMW) on a shared variable. Processes take steps asynchronously. 
This means that in any execution, between two successive steps of a process, there can be an unbounded but finite number of steps performed by other processes. 


\subsection{Synchronization Instructions}

We assume the availability of two RMW instructions, namely \emph{compare-and-swap (\CAS)} and \emph{fetch-and-add (\FAA)}.

A compare-and-swap instruction takes a shared variable $x$ and two values $u$ and $v$ as inputs. If the current value of $x$ matches $u$, then it writes $v$ to $x$ and returns true. Otherwise, it returns false.

A fetch-and-add instruction takes a shared variable $x$ and a value $v$ as inputs, returns the current value of $x$ as output, and, at the same time, increments the value of $x$ by $v$.

\subsection{Problem Specification}

\begin{algorithm}[t]
\DontPrintSemicolon
\While{true}{
\textsc{Non-Critical Section (NCS)} \;
\textsc{Entry Section} 
\tcp*{try to enter critical section}
\textsc{Critical Section (CS)} 
\tcp*{execute critical section}
\textsc{Exit Section} 
\tcp*{exit critical section}
}

\caption{Structure of a GME Algorithm}
\label{algo:GAS}
\end{algorithm}

In the GME problem, each process repeatedly executes four sections of code, namely \emph{non-critical section (NCS)}, \emph{\entry section}, \emph{critical section (CS)}, and \emph{\exit section}, as shown in \cref{algo:GAS}.
Each critical section is associated with a \emph{type} or a \emph{session}. 
Critical sections belonging to the \emph{same session} can execute \emph{concurrently}, whereas critical sections belonging to \emph{different sessions} must be executed \emph{serially}. 
We refer to the code executed by a process from the beginning of its \entry section until the end of its \exit section as an \emph{passage}. Note that the session associated with a critical section may be different in different passages (and is selected based on the needs of the underlying application).
We say that a process has an \emph{outstanding} request if it is in one of its \emph{passages}. 

We assume that every process is \emph{live} meaning that, if it is not executing its non-critical section, then it will eventually execute its next step.

\subsubsection{Correctness Properties}

Solving the GME problem involves designing code for \entry and \exit sections in order to ensure the following four properties are satisfied in each passage:


\begin{description}
\item[(P\propcount) Group mutual exclusion] If two processes are in their critical sections at the same time, then they have requested the same session. 
\label{prop:gme}
\item[(P\propcount) Lockout freedom] If a process is trying to enter its critical section, then it is able to do so eventually (\entry section is finite).
\item[(P\propcount) Bounded exit] If a process is trying to leave its critical section, then it is able to do so eventually within a bounded number of its own steps (\exit section is bounded).
\item[(P\propcount) Concurrent entering] If a process is trying to enter its critical section and all current and future requests are for the same session, then the (former) process is able to enter its critical session eventually within a bounded number of its own steps (\entry section is bounded in the absence of a request for a different session). 
\end{description}

\subsubsection{Complexity Measures}
We say that two requests \emph{conflict} if they involve the same GME object but belong to different sessions. 
We say that a request is \emph{outstanding} until its process has finished executing the \exit section.
We use the following metrics to evaluate the performance of our GME algorithm:

\begin{description}

\item[Context-switch complexity]
It is defined as the maximum number of sessions that can be established while a process is waiting to enter its critical section. 
It is also referred to as session switch complexity elsewhere~\cite{KeaMoi:1999:PODC,Had:2001:PODC}.

\item[Concurrent entrering step complexity]
It is defined as the maximum number of steps a process has to execute in its \entry and \exit sections provided  no other process in the system  has an outstanding conflicting request during that period.

\item[Remote memory reference (RMR) complexity] 
It is defined as the maximum number of remote memory references required by a process in its \entry and \exit sections.

\end{description}

In addition, we also consider the memory footprint of the GME algorithm when the system contains multiple GME objects. 

\begin{description}

\item[Multi-object space complexity] 
It is defined as the maximum amount of space needed to instantiate and maintain a certain number of GME objects.

\end{description}

We analyze the RMR complexity of a GME algorithm is under the \emph{cache-coherent (CC)} model, which is the most common model used for RMR complexity analysis.

In the CC model, all shared variables are stored in a central location or global store. Each processor has a private cache. When a process accesses a shared variable, a copy of the contents of the variable is saved in the private cache of the process. Thereafter, every time the process reads that shared variable, it does so using its cached (local) copy until the cached copy is invalidated. Also, every time a process writes to a shared variable, it writes to the global store, which also invalidates all cached copies of the variable. In the CC model, spinning on a memory location generates at most two RMRs---one when the variable is cached and the other when the cached copy is invalidated.

In the DSM model, instead of having the shared memory in a central location or a global store, each process ``owns'' a part of the shared memory and keeps it in its own local memory. Every shared variable is stored in the local memory of some process. 
Accessing a shared variable stored in the local memory of a different process causes the process to make a remote memory reference. A reference to a variable stored in a non-local memory requires traversing the processor-to-memory interconnect, which takes much longer to access than to access a locally stored variable. In the DSM model, spinning on a variable that is stored in remote memory may generate an unbounded number of RMRs.

An algorithm is called \emph{local-spinning} if the maximum number of RMRs made in \entry and \exit sections is bounded. 
It is desirable to design algorithms that minimize the number of remote memory references because this factor can critically affect the performance of these algorithms~\cite{MelSco:1991:trcs}.

\remove{


The RMR complexity of a GME algorithm is typically analyzed under either 
\emph{cache-coherent (CC)} model or \emph{distributed shared memory (DSM)} model. The two models differ on where shared variables are physically stored and what is the overhead of accessing them. 

In the CC model, all shared variables are stored in a central location or global store. Each processor has a private cache. When a process accesses a shared variable, a copy of the contents of the variable is saved in the private cache of the process. Thereafter, every time the process reads that shared variable, it does so using its cached (local) copy until the cached copy is invalidated. Also, every time a process writes to a shared variable, it writes to the global store, which also invalidates all cached copies of the variable. In the CC model, spinning on a memory location generates at most two RMRs---one when the variable is cached and the other when the cached copy is invalidated.

In the DSM model, instead of having the shared memory in a central location or a global store, each process ``owns'' a part of the shared memory and keeps it in its own local memory. Every shared variable is stored in the local memory of some process. 
Accessing a shared variable stored in the local memory of a different process causes the process to make a remote memory reference. A reference to a variable stored in a non-local memory requires traversing the processor-to-memory interconnect, which takes much longer to access than to access a locally stored variable. In the DSM model, spinning on a variable that is stored in remote memory may generate an unbounded number of RMRs.

An algorithm is called \emph{local-spinning} (under  CC or DSM model) if the maximum number of RMRs made in \entry and \exit sections is bounded. 
It is desirable to design algorithms that minimize the number of remote memory references because this factor can critically affect the performance of these algorithms~\cite{MelSco:1991:trcs}. In general, it is difficult to achieve bounded number of RMRs in the DSM Model. In this work, we analyze the RMR complexity of our algorithm under the CC model.

}

We express the RMR complexity of our GME algorithm using the following measures of contention.
The \emph{interval contention} of a passage $\pi$, denoted by $\ci(\pi)$, is defined
as the total number of passages involving the same GME object as $\pi$ that overlap with $\pi$.
The \emph{point contention} of a passage $\pi$, denoted by $\cp(\pi)$, is defined
as the maximum number of passages involving the same GME object as $\pi$  that are simultaneously in progress in the system at any point
during $\pi$.

\section{The Group Mutual Exclusion Algorithm}
\label{sec:algorithm}

\subsection{The Main Idea}

Our GME algorithm is inspired by Herlihy's universal construction for deriving a wait-free linearizable implementation of a concurrent object from its sequential specification using consensus objects~\cite{Her:1991:TOPLAS,Her:1993:TOPLAS}. Roughly speaking, the universal construction works as follows. The state of the concurrent object is represented using
\begin{enumerate*}[label=(\roman*)]
\item its initial state and 
\item the sequence of operations that have applied to the object so far. 
\end{enumerate*}
The two attributes of the object are maintained using a singly linked list in which the first node represents the initial state and 
the remaining nodes represent the operations. To perform an operation, a process first creates a new node and initializes it with all the relevant details of the operation, namely its type and its input arguments. It then tries to append the node at the end of the list. To manage conflicts in case multiple processes are trying to append their own node to the list, a consensus object  is used to determine which of several nodes is chosen to be appended to the list. Specifically, every node stores a consensus object and the consensus object of the current last node is used to decide its successor (\emph{i.e.}, the next operation to be applied to the object). A process whose node is not selected simply tries again. A helping mechanism is used to guarantee that every process trying to perform an operation eventually succeeds in appending its node to the list.

We modify the aforementioned universal construction to derive a GME algorithm that satisfies several desirable properties. Intuitively, an operation in the universal construction corresponds to a critical section request in our GME algorithm. \emph{Appending a new node} to the list thus corresponds to \emph{establishing a new session}. However, unlike in the universal construction, a single session in our GME algorithm can be used to satisfy multiple critical section requests. This basically means that every critical section request does not cause a new node to be appended to the list. This requires some careful bookkeeping so that no ``useless'' sessions are established. Further, a simple consensus algorithm, implemented using \CAS{} instruction, is used to determine the next session to be established.

We describe our GME algorithm in an incremental manner. First, we describe a basic GME algorithm that is only deadlock-free (some session request is eventually satisfied but a given request may be starved), and uses unbounded space. Next, we enhance the basic algorithm to achieve starvation freedom (every session request is eventually satisfied) using a helping mechanism.  
Finally, we  enhance the algorithm to make it space-efficient by reusing nodes using a memory reclamation algorithm.
Note that all our algorithms are safe in the sense that they satisfy the group mutual exclusion property.

For ease of exposition, we describe the first two variants in the next section, \cref{sec:starvation|free:algorithm} along with a correctness proof and complexity analysis. We then describe the third (and the final) variant in the section thereafter, \cref{sec:space|efficient:algorithm}.


\subsection{A Starvation-Free Algorithm}
\label{sec:starvation|free:algorithm}

\begin{algorithm}[!t]
\tcp{Node of a list}
\Struct Node \{ \\
\label{line:node}
\Indp 
\nonl
\Integer $\session$\tcp*[r]{session associated with the node}
\nonl
\Integer $\instance$\tcp*[r]{instance identifier of the GME object}
\nonl
\Integer $\counter$\tcp*[r]{the next process to be helped}
\nonl
\pack{\Boolean,\Boolean,\Boolean,\Boolean} $\gate$\tcp*[r]{four flags representing state}
\nonl
\Integer $\size$\tcp*[r]{number of processes currently in the session}
\nonl
NodePtr $\previousinlist$, $\nextinlist$\tcp*[r]{address of the previous and next nodes}
\nonl
\Integer $\owner$\tcp*[r]{the last process to own the node}
\Indm
\nonl
\}\;
\BlankLine\BlankLine
\Shared \\
\Indp
$\headArray$: \Array[$1\ldots\m$] of NodePtr\tcp*[r]{to store references to head nodes of lists}
\label{line:head|array}
$\announceArray$: \Array[$1\ldots\n$] of NodePtr, \Initially [\Null,$\ldots$,\Null{}]\tcp*[r]{to announce CS requests}
\label{line:announce|array}
\Indm
\BlankLine\BlankLine
\Private \\
\Indp
$\snapshotArray$: \Array[$1\ldots\n$] of NodePtr\tcp*[r]{to store snapshots of the head nodes} 
\label{line:snapshot|array}
\tcp{$\snapshotArray[i]$ is a private variable of process $p_i$}
\Indm
\BlankLine\BlankLine
\nonl
\Initialization \\
\label{line:initialize|begin}
\Begin{
	\tcp{initialize variables}
	\ForEach{$i \in [1\ldots\m]$}
	{
	   $\head[i]$ := \New Node\tcp*[r]{create a new node}
	   $\head[i] \pointer \gate$ := \LEADERFLAG\tcp*[r]{session has no leader}
	   $\head[i] \pointer \size$ := 0\tcp*[r]{session has no processes}
	   $\head[i] \pointer \nextinlist$ := \Null{}\tcp*[r]{node has no successor}
	   \tcp{all other fields can be initialized arbitrarily}
	}
	\ForEach{$i \in [1\ldots\n]$}
	{
		$\announceArray[i]$ := \Null\tcp*[r]{process has no outstanding request}
	}
    \label{line:initialize|end}	
}

\caption{Data types and variables used.}
\label{algo:types|variables}
\end{algorithm}


\begin{algorithm}[!t]
\tcp{returns true if the session is \closed and false otherwise}
\Boolean \IsClosed(\Integer $\gate$) 
   \quad \{ \Return ($\gate$ $\bitand$ \LEADERFLAG)  \LAnd{} ($\gate$ $\bitand$ \CONFLICTFLAG); \} \\
   \label{line:isclosed}
\BlankLine\BlankLine
\tcp{returns true if the session is \adjourned and false otherwise}
\Boolean \IsAdjourned(\Integer $\gate$) 
   \quad\{ \Return ($\gate$ $\bitand$ \VACANTFLAG); \} \\
   \label{line:isadjourned}
\BlankLine\BlankLine
\tcp{returns true if the node is \retired and false otherwise}
\Boolean \IsRetired(\Integer $\gate$) 
   \quad \{ \Return ($\gate$ $\bitand$ \RETIREDFLAG); \} \\
   \label{line:isretired}
\BlankLine\BlankLine
\tcp{sets a given \guard flag (\LEADERFLAG or \CONFLICTFLAG) in the session state}
\SetLeaderOrConflictFlag(NodePtr $\node$, \Boolean $\flag$ ) \\
\Begin{
   \While{\True}{
      \label{line:guard:while|begin}
      \Integer $\gate$ := $\node \pointer \gate$ \tcp*[r]{read the current state}
      \lIf(\tcp*[f]{flag already set}){($\gate$ $\bitand$ $\flag$)}{\Return}
      \BlankLine
      \lIf(\tcp*[f]{successfully set the flag}){\CAS{}($\node \pointer \gate$, $\gate$, $\gate$ $\bitor$ $\flag$)}{ \Return}
     
   }
   \label{line:guard:while|end}

}
\BlankLine\BlankLine
\tcp{sets the \vacant flag in the state if possible}
\Boolean \SetVacantFlag(NodePtr $\node$) \\
\Begin{
   \Integer $\gate$ := $\node \pointer \gate$\tcp*[r]{read the current state}
   \label{line:vacant:state|read}
   \BlankLine
   \lIf(\tcp*[f]{session is still open}){\LNot{}(\IsClosed($\gate$))}{\Return}
   \lIf(\tcp*[f]{session still has participants}){($\node \pointer \size \neq 0$)}{\Return}
   \BlankLine
   \Return \CAS{}($\node \pointer \gate$, $\gate$, $\gate$ $\bitor$ \VACANTFLAG)\;
   \label{line:vacant:state|write}
}
\BlankLine\BlankLine
\tcp{mark the node as retired}
\MarkAsRetired(NodePtr $\node$) 
   \quad \{ $\node \pointer \gate$ := \LEADERFLAG $\bitor$ \CONFLICTFLAG $\bitor$ \VACANTFLAG $\bitor$ \RETIREDFLAG; \} \\
   \label{line:mark:retired}
\caption{Functions operating on session state.}
\label{algo:methods|state}
\end{algorithm}


\begin{algorithm}[!t]
\tcp{reads the current head pointer of the list}
\ReadHead(\Integer $\instance$) \\
\Begin{
   $\snapshotArray[\myid]$ := $\headArray[\instance]$\;
   \label{line:readhead:read}
}
\BlankLine\BlankLine
\tcp{returns true if the head of the list has not moved and false otherwise}
\Boolean \TestHead(\Integer $\instance$) \\
\Begin{
	 \lIf(\tcp*[f]{head has advanced}){($\headArray[\instance] \neq \snapshotArray[\myid]$)} 
	 {
	    \label{line:testhead:not|match}
      \Return \False
   } \lElse { \Return \True } 
   \label{line:testhead:match}

}
\BlankLine\BlankLine
\tcp{advances the head of a list to the given node if the head has not moved}
\UpdateHead(\Integer $\instance$, NodePtr $\successor$) \\
\Begin{
   
 	\CAS{}($\headArray[\instance]$, $\snapshotArray[\myid]$, $\successor$)\;
 	\label{line:updatehead:CAS}
}
\caption{Functions operating on list head.}
\label{algo:methods|metadata}
\end{algorithm}


\begin{algorithm}[!t]
\tcp{code for \entry section}
\TryToEnter(\Integer $\myinstance$, \Integer $\mysession$) \\
\Begin{
 
   \tcp{initialize a node and announce the request to other processes}
   \InitializeNode($\myinstance$, $\mysession$)\;
   \label{line:trytoenter|begin}
   NodePtr $\mynode$ := $\announceArray[\myid]$\;
   \BlankLine
   \While{\True}{
  	  \label{line:trytoenter:while|begin}	
      \ReadHead($\myinstance$)\tcp*[r]{read the head pointer of the list}
      \label{line:trytoenter:readhead}
      NodePtr $\current$ := $\snapshotArray[\myid]$\tcp*[r]{find the last node in the list}
      \BlankLine
      \If{($\current$ = $\mynode$)}{
         \tcp{join the session as a leader and \retire[] the predecessor node}
         \ReclaimNode($\mynode \pointer \previousinlist$)\;
      	 \label{line:trytoenter:retire|leader}
       	 \Return\;
      }
      \label{line:trytoenter:node|match}
      \BlankLine
      \uIf(\tcp*[f]{my request is compatible with the current session}){($\current \pointer \session$ = $\mysession$)}
      {
         \label{line:trytoenter:session|match}
         \If(\tcp*[f]{the session is open}){\LNot{}(\IsClosed($\current \pointer \gate$))}
         {
            \label{line:trytoenter:head:open}
		    \tcp{attempt to join the session as a follower}
            \FAA($\current \pointer \size$, 1)\tcp*[r]{increment the session \size{}}
            \label{line:trytoenter:size|increment}
            \uIf(\tcp*[f]{the session is still open}){\LNot{}(\IsClosed($\current \pointer \gate$))}
            {
               \label{line:trytoenter:session|still|open}
               \tcp{join the session as a follower and \retire[] own node}
               \ReclaimNode($\mynode$)\;
               \label{line:trytoenter:retire|follower}
               \Return\;
               \label{line:trytoenter:joined}
            } \Else(\tcp*[f]{the session is no longer open})
            {
               \FAA($\current \pointer \size$, -1)\tcp*[r]{abort the attempt and decrement the session \size{}}
               \label{line:trytoenter:size|decrement}
               \SetVacantFlag($\current$)\tcp*[r]{set \VACANTFLAG flag if applicable}
               \label{line:trytoenter:vacant|if}
            }
            
         }

      } \Else(\tcp*[f]{my request conflicts with the current session}) {
         
         \SetLeaderOrConflictFlag($\current$, \CONFLICTFLAG)\tcp*[r]{set \CONFLICTFLAG flag}
         \label{line:trytoenter:conflict}
         \SetVacantFlag($\current$)\tcp*[r]{set \VACANTFLAG flag if applicable}
         \label{line:trytoenter:vacant|else}
      }
      \BlankLine
	  \While(\tcp*[f]{spin}){\LNot{}(\IsAdjourned($\current \pointer \gate$))}           
      { 
		   \label{line:trytoenter:spin:adjourned|begin} 
		   \tcp*[r]{do nothing}
		 
	  }
	  \label{line:trytoenter:spin:adjourned|end} 	
      \BlankLine
	  \lIf(\tcp*[f]{establish a new session}){\TestHead($\myinstance$)}{\AppendNode($\myinstance$)}
	  \label{line:trytoenter:append}
	
   }
   \label{line:trytoenter:while|end}	
   \label{line:trytoenter|end}	
}
\BlankLine\BlankLine
\tcp{code for \exit section}
\TryToLeave(\Integer $\myinstance$) \\
\Begin{
   
   \ReadHead($\myinstance$)\tcp*[r]{find the head node of the list}
   \label{line:trytoleave:readhead}
   NodePtr $\current$ := $\snapshotArray[\myid]$\;
   \label{line:trytoleave|begin}
   \BlankLine
   \If(\tcp*[f]{joined the session as a leader}){($\current \pointer \owner = \myid$) }{ 
      \SetLeaderOrConflictFlag($\current$, \LEADERFLAG)\tcp*[r]{set the \LEADERFLAG flag}
      \label{line:trytoleave:leader}
   } 
   \BlankLine
   \FAA($\current \pointer \size$, -1)\tcp*[r]{decrement the session \size{}}
   \label{line:trytoleave:size|decrement}
   \SetVacantFlag($\current$)\tcp*[r]{set \VACANTFLAG flag if applicable}
   \label{line:trytoleave:vacant}
   \label{line:trytoleave|end}
}
\caption{\Entry and \exit sections.}
\label{algo:entry|exit}
\end{algorithm}


\begin{algorithm}[!t]
\tcp{get a new node, initialize it and announce it to other processes}
\InitializeNode(\Integer $\instance$, \Integer $\session$) \\
\Begin{
   
   \remove{
   NodePtr $\node$ := $\poolArray[\myid][\whichArray[\myid]][\markerArray[\myid]]$\tcp*[r]{get a \clean node from the \apool pool}
   }
   NodePtr $\node$ := get a new node\tcp*[r]{invoke dynamic memory manager}
   \label{line:initializenode|begin}	
   \label{line:initializenode:new}
   $\node \pointer \owner$ := $\myid$\tcp*[r]{set the owner as myself}
   \label{line:initializenode:owner}
   $\node \pointer \instance$ := $\instance$\tcp*[r]{initialize node's instance}
   \label{line:initializenode:instance}
   $\node \pointer \session$ := $\session$\tcp*[r]{initialize node's session}
   $\node \pointer \size$ := 1\tcp*[r]{initialize session \size{}}
   $\node \pointer \nextinlist$ := \Null\tcp*[r]{node has no successor}
   $\node \pointer \previousinlist$ := \Null\tcp*[r]{node has no predecessor}
   $\node \pointer \gate$ := 0\tcp*[r]{session is open with no \guard{} flag set}
   $\node \pointer \counter$ := 0\tcp*[r]{set the sequence number to a sentinel value}
   \label{line:initializenode:counter}
   $\announceArray[\myid]$ := $\node$\tcp*[r]{make the node visible to other processes}
   \label{line:initializenode:announce}
   \label{line:initializenode|end}	
}
\BlankLine\BlankLine
\tcp{get the next node to be appended to the list}
NodePtr \FetchNode(\Integer $\instance$) \\
\Begin{
   NodePtr $\mine$ := $\announceArray[\myid]$\tcp*[r]{my node}
   \label{line:fetchnode|begin}
   \label{line:fetchnode:mine}
   NodePtr $\helpee$ := $\announceArray[\snapshotArray[\myid] \pointer \counter]$\tcp*[r]{helpee's node}
   \label{line:fetchnode:helpee}
   \tcp{ascertain that the helpee's node is usable}
   \lIf(\tcp*[f]{no outstanding request}){($\helpee$ = \Null)}{\Return $\mine$}
	 \label{line:fetchnode:outstanding}	
   \lIf(\tcp*[f]{request is for a different GME object}){($\helpee \pointer \instance \neq \instance$)}{ \Return $\mine$}
   \label{line:fetchnode:same}
   \lIf(\tcp*[f]{node has been \retired}){\IsRetired($\helpee$)}{ \Return $\mine$}
   \label{line:fetchnode:retired} 
   \BlankLine
   \Return $\helpee$\tcp*[r]{helpee's node passed all the tests}
   \label{line:fetchnode:return|helpee}
   \label{line:fetchnode|end}
}
\BlankLine\BlankLine
\tcp{append a new node to the list}
\AppendNode(\Integer $\instance$) \\
\Begin{
   NodePtr $\current$ := $\snapshotArray[\myid]$\tcp*[r]{get the last node in the list}
   \label{line:appendnode|begin}
   \label{line:appendnode:snapshot}
   NodePtr $\successor$ := \FetchNode($\instance$)\tcp*[r]{choose a node to append}
   \label{line:appendnode:fetchnode}
   \BlankLine
   \CAS{}($\current \pointer \nextinlist$, \Null, $\successor$)\tcp*[r]{set the next field of the current last node}
   \label{line:appendnode:CAS}
   NodePtr $\successor$ := $\current \pointer \nextinlist$ \tcp*[r]{read the next field}
   \label{line:appendnode:successor}
   \lIf(\tcp*[f]{append operation already complete}){\LNot{}(\TestHead($\instance$))} { \Return}
   \label{line:appendnode:complete}
   $\successor \pointer \previousinlist$ := $\current$\tcp*[r]{set the previous field of the successor}
   \label{line:appendnode:previous}
   $\successor \pointer \counter$ := $(\current \pointer \counter + 1) \mod \n + 1$\tcp*[r]{set the sequence number used in helping}
   \label{line:appendnode:counter}
   \BlankLine
   \UpdateHead($\instance$,  $\successor$)\tcp*[r]{advance the head}
   \label{line:appendnode:advance|head}
   \label{line:appendnode|end}
}
\BlankLine\BlankLine
\tcp{retire the node}
\ReclaimNode(NodePtr $\node$) \\
\Begin{
   \label{line:reclaimnode|begin}
   $\announceArray[\myid]$ := \Null{}\tcp*[r]{help is no longer needed}
   \label{line:reclaimnode:revoke}
   \MarkAsRetired($\node$)\tcp*[r]{mark the node as retired}
   \label{line:reclaimnode:state}
   \label{line:reclaimnode|end}
}
\caption{Functions operating on a list node.}
\label{algo:methods|node}
\end{algorithm}


In this section, assume that nodes are \emph{never} reused. 
A pseudocode of our GME algorithm is given in \crefrange{algo:types|variables}{algo:methods|node}. In the pseudocode, $\myid$ refers to the identifier of the process (\emph{e.g.}, $\myid$ for process $p_i$ will evaluate to $i$).

\subsubsection{Data Structures Used}

\myparagraph{List node}
Central to our GME algorithm is a (list) node; it is used to maintain information about a session. As opposed to the linked list in the wait-free construction, which is a singly linked list, we maintain a doubly linked list. 
A node stores the following information (\cref{line:node}):  
\begin{enumerate*}[label=(\alph*)]
\item the session represented by the node, 
\item the instance identifier of the GME object to which the session belongs,
\item the state of the session,
\item the \size of the session, 
\item the address of the  previous and next nodes in the (doubly linked) list, and 
\item the owner of the node.
\end{enumerate*}

A session (or node) has four possible states: 
\begin{enumerate*}[label=(\roman*)]
\item \emph{\open:} it means that the session is currently in progress and new processes can join in,
\item \emph{\closed:} it means the session is currently in progress but no new processes can join in,
\item \emph{\adjourned:} it means that the session is no longer in progress and has no participating processes, and
\item \emph{\retired:} it means that  the node is no longer needed to either establish or maintain an already established session.
\end{enumerate*}
When a session is first established, it is in \open state. It stays  \open  as long as one of the following conditions still holds:
\begin{enumerate*}[label=(\arabic*)]
\item there is no conflicting request in the system, or
\item the request that established the session is still outstanding, \emph{i.e.}, executing its critical section.
\end{enumerate*}
Once both the conditions become false, the session moves to \closed state. Note that, in \closed state, the session may still have participants executing their critical sections. Once all such participants have left the session, the session moves to \adjourned state. Finally, the node associated with a request is \retired once either the session established by the node has \adjourned and a new session has been established or 
the node is no longer needed to establish a session.

We use four flags to represent session state: 
\begin{enumerate*}[label=(\arabic*)]
\item \LEADERFLAG flag to indicate that the session leader has left its critical section, 
\item \CONFLICTFLAG flag to indicate that some process has made a conflicting request, 
\item \VACANTFLAG flag to indicate that the session is empty or vacant, and 
\item \RETIREDFLAG flag to indicate that the node has been \retired.
\end{enumerate*}
For convenience, we refer to the first two flags as \emph{\guard} flags, the third flag as \emph{\vacant} flag and the fourth flag as \emph{\retired} flag.

The \vacant flag is set only after \emph{both} the \guard flags have been set.
Thus a session is \closed if both its \guard flags are set. It is \adjourned if its \vacant flag is also set. Finally, a node is considered retired if its \retired flag is set.
For convenience, when the \retired flag is set, we set the remaining three flags as well to simplify the algorithm. Thus if the \vacant flag is set, then both the \guard flags are also set; if the \retired flag is set, then the \vacant flag as well as both the \guard flags are also set. 
All the four flags are stored in a single word hence the value of session state can be easily read and updated atomically.

The \size of a session refers to the number of processes that have joined or trying to join the session, \emph{i.e.}, still executing their critical sections.

\myparagraph{Shared variables}
Each GME object has a separate linked list associated with it. Each list has a \emph{head}, which points to the last node in the list. Initially, the head of each list points to a ``dummy'' node representing an \adjourned session. For ease of exposition, we assume that pointers to all head nodes are stored in an array with one entry for each GME object, denoted by $\headArray$ (\cref{line:head|array}).

To enable helping, each process \emph{announces} its request by storing address of the node associated with its request in an array with one entry for each process, denoted by $\announceArray$ (\cref{line:announce|array}).

\remove{

To enable memory reclamation, each process maintains a small number of (specifically, two) hazard pointers in an array with one entry for each process, denoted by $\hpArray$; each entry is itself an array of size two (\cref{line:hp|array}).

}

\myparagraph{Private variables}
\remove{

In addition, each process uses private variables to maintain the following information:
\begin{enumerate*}[label=(\alph*)]
\item snapshot of the pointer to the head node of the list associated with the current request (\cref{line:snapshot|array}),
\item two disjoint pools of nodes (\cref{line:pool|array}),
\item which of the two pools is \apool, \emph{i.e.}, currently used to service requests (\cref{line:which|array}), and
\item the index of the first \clean node in the \apool pool (all \clean nodes are guaranteed to be toward the end of the pool) (\cref{line:marker|array}).
\end{enumerate*}
Note that private variables are modeled as arrays in our algorithm; the $i^{th}$ entry of each array is private to process $p_i$.

}

In addition, each process uses a private variable to maintain a snapshot of the pointer to the head node of the list associated with the current request (\cref{line:snapshot|array}).
Note that a private variable is modeled as an array in our algorithm; the $i^{th}$ entry of each array is private to process $p_i$.

\myparagraph{Managing session state}
\Cref{algo:methods|state} shows the pseudocode for accessing and manipulating session state. 
The methods for reading session state \IsClosed{} (\cref{line:isclosed}), \IsAdjourned{} (\cref{line:isadjourned}) and \IsRetired{} (\cref{line:isretired}) follow from the discussion earlier and are self-explanatory. 
The method \SetLeaderOrConflictFlag{} repeatedly attempts to set the given  \guard{} flag in the session state, if not already set, using a \CAS{} instruction until it succeeds (\crefrange{line:guard:while|begin}{line:guard:while|end}).
The method \SetVacantFlag{}  attempts to set the \vacant flag in the session state using a \CAS{} instruction provided the session has \closed and has no participants (\crefrange{line:vacant:state|read}{line:vacant:state|write}).
The method \MarkAsRetired{}  sets all the four flags in the session state  (\cref{line:mark:retired}).

The following lemma limits the number of times the loop in \SetLeaderOrConflictFlag{} method is executed:

\begin{lemma}
\label{lem:guard|flag:constant}
The while-do loop in \SetLeaderOrConflictFlag{} method (\crefrange{line:guard:while|begin}{line:guard:while|end}) is executed only $O(1)$ times per invocation of the method.
\end{lemma}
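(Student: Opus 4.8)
The plan is to reduce the claim to a small pigeonhole argument, once we know that a node's session-state word can only ever \emph{gain} flags: flags get set, but are never cleared. Fix an invocation of \SetLeaderOrConflictFlag{} and let $\node$ be its argument (the target node). First I would establish the invariant that, over the lifetime of this invocation, the set of flags recorded in $\node\pointer\gate$ is non-decreasing, no matter how many other processes act on $\node$ concurrently. The justification: \SetLeaderOrConflictFlag{} is only ever invoked on a node obtained from \ReadHead{} (the two call sites are \cref{line:trytoenter:conflict} and \cref{line:trytoleave:leader}), so $\node$ has already been installed as the head of a list and therefore has already been initialized (in the global initialization block, \crefrange{line:initialize|begin}{line:initialize|end}, or in \InitializeNode{}) and published on \cref{line:initializenode:announce}. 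From then on, the only writes that can touch $\node\pointer\gate$ are the bitwise-OR via \CAS{} inside \SetLeaderOrConflictFlag{}, the bitwise-OR via \CAS{} inside \SetVacantFlag{} (\cref{line:vacant:state|write}), and the plain assignment of $\LEADERFLAG \bitor \CONFLICTFLAG \bitor \VACANTFLAG \bitor \RETIREDFLAG$ in \MarkAsRetired{} (\cref{line:mark:retired}). The first two clearly only add flags; the third --- although a plain (atomic) write rather than a \CAS{} --- writes the maximal value and hence also leaves the flag set a superset of whatever it was before. And $\node$ cannot be re-initialized to $0$ during this window, since \InitializeNode{} always allocates a \emph{fresh} node. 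This gives the invariant.

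Next I would run the counting argument on the loop in \crefrange{line:guard:while|begin}{line:guard:while|end}. Let $g_k$ denote the value of $\node\pointer\gate$ read at the top of the $k$-th iteration of that loop during this invocation. If iteration $k$ does not return, then the first test fails, so $\flag \bitand g_k = 0$, and the \CAS{} must then have failed, which means $\node\pointer\gate$ differed from $g_k$ at the instant of the \CAS{}; by the invariant that later value is a \emph{strict} superset of $g_k$, and hence so is $g_{k+1}$, read still later. So each non-returning iteration strictly increases the number of flags in the value it reads. Since there are only four flags, this can happen at most four times, so the loop performs at most five iterations per invocation, i.e., $O(1)$.

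I expect the one point needing genuine care to be making the monotonicity invariant airtight --- specifically, formally excluding the initialization write $\node\pointer\gate := 0$ (and, once node reuse is introduced later, the re-initialization of a recycled node) from occurring during a live invocation of \SetLeaderOrConflictFlag{}. This reduces to the structural facts that a node is fully initialized and published before any process can read it out of the head array via \ReadHead{}, and that a node being reclaimed is first passed through \MarkAsRetired{}, whose write conveniently already sets \emph{all} four flags and so cannot violate monotonicity even under concurrency. Everything else is just the ``at most four bits'' pigeonhole, which is immediate.
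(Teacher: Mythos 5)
Your proof is correct and follows essentially the same approach as the paper's: the \CAS{} can fail only because the state word changed, and since flags are only ever set (never cleared) this can happen only a bounded number of times. Your version is in fact somewhat more careful than the paper's three-sentence argument, which counts only the two \guard{} flags, whereas your monotonicity-plus-pigeonhole formulation cleanly accounts for all four flags and for the plain write in \MarkAsRetired{}.
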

\begin{proof}
A new iteration of the while-do loop is executed only if the \CAS{} instruction performed on the session \state fails. The 
failure occurs only if one of the two \guard{} flags in the session state has been set by another \CAS{} instruction. This can only happen at most two times. 
\end{proof}

\begin{lemma}
\label{lem:adjourn:closed}
A session can \adjourn[] only after it has \closed.
\end{lemma}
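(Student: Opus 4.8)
The plan is to argue from the definitions of the session states in terms of the flags, together with the structure of the code that sets those flags. Recall that a session is \adjourned exactly when its \VACANTFLAG is set (so that \IsAdjourned returns true), and it is \closed exactly when both \guard{} flags (\LEADERFLAG and \CONFLICTFLAG) are set (so that \IsClosed returns true). The excerpt already states, in the discussion of session state, that ``the \vacant flag is set only after \emph{both} the \guard flags have been set''. So the whole content of the lemma is to verify this invariant against the code: that every write which sets \VACANTFLAG occurs only when both \guard{} flags are already set, and that a \guard{} flag, once set, is never cleared.

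First I would establish a monotonicity sub-claim: once any of the four flags in a node's \gate{} field is set, it stays set for the lifetime of the node (until the node is reused, which we are told to ignore in this section). This follows by inspecting every write to $\node \pointer \gate$: in \SetLeaderOrConflictFlag{} the \CAS{} only ever ORs in a flag; in \SetVacantFlag{} the \CAS{} only ORs in \VACANTFLAG; in \MarkAsRetired{} all four flags are set; and in \InitializeNode{} the field is initialized to $0$ before the node is published. No code ever clears a bit. Hence the set of set flags is monotonically non-decreasing over time.

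Next I would pin down the only places where \VACANTFLAG can become set: the \CAS{} on \cref{line:vacant:state|write} inside \SetVacantFlag{}, and the assignment in \MarkAsRetired{} (\cref{line:mark:retired}). In the \SetVacantFlag{} case, the \CAS{} succeeds only if the observed state $\gate$ (read on \cref{line:vacant:state|read}) still equals the current value, and the method has already returned early unless \IsClosed($\gate$) holds — i.e. unless both \guard{} flags were set at the time of the read. By the monotonicity sub-claim, those \guard{} flags are therefore still set when the \CAS{} takes effect, so \VACANTFLAG is set only in a state where both \guard{} flags are set. In the \MarkAsRetired{} case, \VACANTFLAG and both \guard{} flags are set by the same (single-word) assignment, so again \VACANTFLAG is never set without both \guard{} flags. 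Combining: at the first moment \VACANTFLAG becomes set, both \guard{} flags are already (or simultaneously) set, so \IsClosed held no later than when \IsAdjourned first holds; i.e., the session was \closed before (or at the instant) it became \adjourned.

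The main obstacle — really the only subtlety — is the concurrency of the \CAS{} in \SetVacantFlag{}: between reading $\gate$ on \cref{line:vacant:state|read} and performing the \CAS{} on \cref{line:vacant:state|write}, another process could modify the state, in which case the \CAS{} fails and nothing is set (fine), or it could only add more flags, in which case the \CAS{} could still succeed but the \guard{} flags are still set by monotonicity (also fine). So the monotonicity sub-claim is exactly what closes this gap, and once it is in hand the rest is a short case analysis. I would present the monotonicity claim first as a small observation and then derive the lemma in two or three sentences.
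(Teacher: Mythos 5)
Your proposal is correct and follows the same underlying argument as the paper's proof, which simply observes that for a session to \adjourn{} both \guard{} flags must already be set, hence the session must be \closed{} first. You merely spell out the supporting details (flag monotonicity and the case analysis of the writes to the \gate{} field) that the paper leaves implicit.
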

\begin{proof}
For a session to \adjourn[], both the \guard flags (\LEADERFLAG and \CONFLICTFLAG) must be set in the session state. This implies that the session must be \closed before it can be \adjourned.
\end{proof}

\myparagraph{Managing list head}
\Cref{algo:methods|metadata} shows the pseudocode for accessing and manipulating list head.
\remove{

The method \ReadHead{} repeatedly reads the pointer to the current head of the list, declares it as a hazard pointer and then verifies that the head of the list is still the same until the verification succeeds (\crefrange{line:readhead:repeat|begin}{line:readhead:repeat|end}).

}
The method \ReadHead{} reads the pointer to the current head of the list and stores it in its private variable (\cref{line:readhead:read}).
The method \TestHead{} checks whether the head of the list is still the same since it was declared to be a hazard pointer (\crefrange{line:testhead:not|match}{line:testhead:match}). 
The method \UpdateHead{} advances the head of the list to its successor (\cref{line:updatehead:CAS}).

\subsubsection{Achieving Deadlock-Freedom}
\label{sec:deadlock|freedom}

\myparagraph{Entering critical section}
Whenever a process generates a critical section request, it obtains a new node  and initializes it appropriately (\crefrange{line:initializenode:instance}{line:initializenode:counter}). Specifically, all flags in the session \state are cleared, the number of processes in the session is set to \emph{one}, and the address of the previous and next nodes are set to \emph{null}. The process then repeatedly performs the following steps until it is able to enter its critical section (\crefrange{line:trytoenter:while|begin}{line:trytoenter:while|end} in \TryToEnter{}  method):  
\begin{enumerate}[label=(\arabic*)] 
\item It locates the current head of the linked list associated with the GME object (\cref{line:trytoenter:readhead}). 
\item If the head node (of the list) matches its own node (may happen because of helping described in \cref{sec:starvation|freedom}), 
it \retire[s] its predecessor node (\cref{line:trytoenter:retire|leader}) and enters its critical section (\cref{line:trytoenter:node|match}).  Otherwise, if
\begin{enumerate*}[label=(\roman*)]
\item the session is compatible with its own request, and
\item the session is \open (\cref{line:trytoenter:session|match,line:trytoenter:head:open}), 
\end{enumerate*}
it attempts to join the session by incrementing the session \size using an \FAA{} instruction (\cref{line:trytoenter:size|increment}). It then ascertains that the session is still in \open state (\cref{line:trytoenter:session|still|open}). If so, it \retire[s] its own node (\cref{line:trytoenter:retire|follower}) and enters its critical section (\cref{line:trytoenter:joined}). If not, it aborts the attempt, decrements the session \size using an \FAA{} instruction (\cref{line:trytoenter:size|decrement}) and attempts to \adjourn[] the session if possible (\cref{line:trytoenter:vacant|if}). Finally, if the session is not compatible with its own request, it sets the \CONFLICTFLAG{} flag in the session \state (\cref{line:trytoenter:conflict}) and attempts to \adjourn[] the session if applicable (\cref{line:trytoenter:vacant|else}).
\item If it is unable to join the session in the previous step for any reason (\emph{e.g.}, the session was not compatible with its own request or was not \open or was \closed before it could join), it busy waits for the session \state to change to \adjourned 
(\crefrange{line:trytoenter:spin:adjourned|begin}{line:trytoenter:spin:adjourned|end}). 
\item If the head of the list has not yet moved, then it attempts to establish a new session by appending a new node to the list (\cref{line:trytoenter:append}).

\item To append a new node to the list (\crefrange{line:appendnode|begin}{line:appendnode|end}), it first obtains a node to be used for appending (for now assume its own node) (\cref{line:appendnode:fetchnode}) and attempts to set the next pointer of the current head to that node using a \CAS{} instruction (\cref{line:appendnode:CAS}). Note that, irrespective of the outcome (of the \CAS{} instruction), a new node is guaranteed to be appended to the list. It then sets the previous pointer and the sequence number of the newly appended node (\cref{line:appendnode:previous,line:appendnode:counter}). Finally, it attempts to advance the head of the list to the newly appended node using a \CAS{} instruction (\cref{line:appendnode:advance|head}). 
\end{enumerate}

The following lemmas characterize the working of the \entry section:

\begin{lemma}
\label{lem:increment|open:joins}
A process starts executing its critical section as a follower only if the session it joins is compatible with its request and the session is still \open after it incremented the session \size.
\end{lemma}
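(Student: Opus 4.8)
The plan is to argue entirely by inspection of the control flow of the \TryToEnter{} method (\cref{algo:entry|exit}); no reasoning about interleavings with other processes is needed. First I would fix the meaning of ``starts executing its critical section as a follower'': a process does so exactly when it returns from \TryToEnter{} at \cref{line:trytoenter:joined}, i.e., after having called \ReclaimNode{} on its \emph{own} node at \cref{line:trytoenter:retire|follower}. This is to be contrasted with the ``leader'' case, in which the process returns at \cref{line:trytoenter:node|match} after calling \ReclaimNode{} on its \emph{predecessor}'s node. Since the body of \TryToEnter{} is the infinite while-do loop of \crefrange{line:trytoenter:while|begin}{line:trytoenter:while|end}, whose only exit points are the two \texttt{return} statements at \cref{line:trytoenter:node|match} and \cref{line:trytoenter:joined}, every process that becomes a follower must reach \cref{line:trytoenter:joined} in its final loop iteration, and I would restrict attention to that iteration.

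Next I would trace that branch backwards within the single iteration. The key bookkeeping observation is that the local pointer $\current$ is assigned exactly once per iteration, immediately after the \ReadHead{} call (\cref{line:trytoenter:readhead}), and is never overwritten before the iteration ends; hence, throughout the iteration, $\current$ denotes one fixed node --- ``the session the process joins.'' To reach \cref{line:trytoenter:joined} the process must, in this order: (i) evaluate the guard of \cref{line:trytoenter:session|match} to true, i.e., $\current \pointer \session = \mysession$; (ii) evaluate the guard of \cref{line:trytoenter:head:open} to true and then execute the \FAA{} of \cref{line:trytoenter:size|increment}, which increments $\current \pointer \size$ by one; and (iii) have \IsClosed{}($\current \pointer \gate$) return false at \cref{line:trytoenter:session|still|open}. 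Point (i) is precisely the statement that the joined session is compatible with the process's request. Point (iii), being evaluated strictly after the increment of (ii), is precisely the statement that the session is still \open{} --- equivalently, not \closed{}, by the definition of \IsClosed{} at \cref{line:isclosed} --- at a point after the process has incremented the session \size{}. Conjoining (i) and (iii) gives exactly the two clauses of the lemma.

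I do not anticipate a genuine obstacle, since this is a static walk of the pseudocode rather than a concurrency argument. The only two points that need care are: (a) identifying ``follower'' with the return site at \cref{line:trytoenter:joined} and keeping it distinct from the leader return at \cref{line:trytoenter:node|match}; and (b) observing that $\current$ is local to the loop iteration, so that the three guards examined above genuinely refer to one and the same node. Beyond the literal meaning of \IsClosed{}, no earlier result from the excerpt is required.
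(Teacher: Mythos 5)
Your proof is correct and follows essentially the same route as the paper's: the paper's one-sentence argument simply observes that after incrementing the session \size{} a process enters its critical section only after re-checking that the session is still \open{}, with compatibility implied by the enclosing branch, which is exactly the control-flow walk you spell out in more detail.
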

\begin{proof}
After incrementing the session \size, a process joins the session  (and starts executing its critical section) only after ascertaining that 
the session is still \open.
\end{proof}

\begin{lemma}
\label{lem:not|adjourned:no|append}
No new node can be appended to a list until the session associated with the current head of the list has \adjourned.
\end{lemma}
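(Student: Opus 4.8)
The plan is to pin down exactly what event ``a new node is appended to the list'' refers to, and then walk backwards through the control flow of whichever process causes it. Appending a node amounts to changing the \nextinlist{} field of some node $u$ from \Null{} to a fresh node $v$. Scanning \crefrange{algo:types|variables}{algo:methods|node}, the only write that sets a \nextinlist{} field to a non-null value is the \CAS{} at \cref{line:appendnode:CAS} inside \AppendNode{}; every other write to a \nextinlist{} field sets it to \Null{} and occurs in \InitializeNode{} or when the lists are first set up, acting on fresh nodes (recall that in this variant nodes are never reused). So that \CAS{} succeeds at most once per node $u$, and I would fix the process $p$ whose successful \CAS{} at \cref{line:appendnode:CAS} appends $v$, with $\current = \snapshotArray[\myid] = u$ at that instant, and prove that then $u$ is the current head and $u$'s session is \adjourned{}.

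First I would trace how $p$ reached that \CAS{}. \AppendNode{} is invoked only from \cref{line:trytoenter:append} of \TryToEnter{}, and only when the immediately preceding \TestHead{} returned true, i.e.\ $\headArray[\myinstance] = \snapshotArray[\myid]$. Following $\snapshotArray[\myid]$ through the loop body of \TryToEnter{}, it is written exactly once, by \ReadHead{} at \cref{line:trytoenter:readhead}, and nothing on the path from there down to the \CAS{} at \cref{line:appendnode:CAS} rewrites it (neither \ReclaimNode{}, \SetVacantFlag{}, \SetLeaderOrConflictFlag{}, the spin loop, \TestHead{}, nor \AppendNode{} before its own \CAS{}). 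Hence $u = \snapshotArray[\myid]$ was copied from $\headArray[\myinstance]$, so $u$ was the head at that read. Moreover the only route from \cref{line:trytoenter:readhead} to \cref{line:trytoenter:append} passes through the spin loop \crefrange{line:trytoenter:spin:adjourned|begin}{line:trytoenter:spin:adjourned|end}, whose exit is guarded by $\IsAdjourned(u \pointer \gate)$; so $p$ observed $u$'s session in the \adjourned{} state before it even called \AppendNode{}.

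Next I would argue this persists. The session-state word of a node is modified only by \CAS{}es that turn a flag on (\SetLeaderOrConflictFlag{}, \SetVacantFlag{}) or by \MarkAsRetired{}, which sets all four flags at once; none of these ever clears a flag, and the plain writes $\node \pointer \gate := 0$ and $\headArray[i] \pointer \gate := \LEADERFLAG$ touch only fresh/initial nodes. Therefore, once the \VACANTFLAG{} flag is set it stays set, and $u$'s session is still \adjourned{} when $p$'s \CAS{} at \cref{line:appendnode:CAS} succeeds. Finally I would close the loop by showing $u$ is still the head at that instant: the head pointer advances only via the \CAS{} in \UpdateHead{} at \cref{line:appendnode:advance|head}, which moves $\headArray$ from $\current$ to $\current \pointer \nextinlist$, so the head can move past $u$ only after $u \pointer \nextinlist$ has become non-null, i.e.\ only after some successful \CAS{} at \cref{line:appendnode:CAS} on $u \pointer \nextinlist$. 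But at the moment $p$'s successful \CAS{} fires, $u \pointer \nextinlist$ is still \Null{}; hence the head has not yet moved past $u$, and since it sat at $u$ when $p$ ran \ReadHead{} and only advances forward along the list, it is exactly $u$. Combining the three steps gives that the current head $u$ has an \adjourned{} session at the moment $v$ is appended. I expect this last step to be the only delicate part — carefully ruling out a concurrent process sneaking the head past $u$ before $u \pointer \nextinlist$ is written (equivalently, observing that any \CAS{} at \cref{line:appendnode:CAS} attempted after the head has moved on simply fails, so no node is actually appended), and being explicit that ``the current head'' in the statement is precisely the node whose \nextinlist{} field is being set.
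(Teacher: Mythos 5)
Your proof is correct and follows the same core argument as the paper's (one-sentence) proof: the only call site of \AppendNode{} is reached after the spin loop has observed the head node's session to be \adjourned{}. Your additional bookkeeping — that flags are never cleared so \adjourned{} is stable, and that the head cannot advance past $u$ before $u \pointer \nextinlist$ is set — is a sound and welcome elaboration of details the paper leaves implicit.
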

\begin{proof}
Only a process that is unable to join a session tries to append a new node to the list, but only after it has detected that the session has \adjourned. 
\end{proof}

\myparagraph{Leaving critical section}
We say that a process enters its critical section as a \emph{leader} if its node is used to establish a new session. Otherwise, we say that it enters as a \emph{follower}.
On leaving the critical section, a process performs the following steps (\crefrange{line:trytoleave|begin}{line:trytoleave|end} in \TryToLeave{} method): 
\begin{enumerate}[label=(\arabic*)]
\item If it owns the head node, then it sets the \LEADERFLAG{} flag in the session \state{}. 
\item It then decrements the session \size using an \FAA{} instruction (\cref{line:trytoleave:size|decrement}).
\item It finally attempts to \adjourn[] the session if applicable (\cref{line:trytoleave:vacant}). 
\end{enumerate}

\begin{lemma}
\label{lem:joined|adjourn:exit}
If a process has successfully joined a session, then the session cannot \adjourn[] until after it starts executing its \exit{} section.
\end{lemma}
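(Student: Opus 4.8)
The plan is to split on how the process $p$ that (by hypothesis) has joined the session entered its critical section, using the leader/follower dichotomy defined just above the lemma; in each case I would exhibit a flag of that session's \state{} word that cannot be set before $p$ reaches its \exit{} section, and then conclude non-adjournment either from \lemref{adjourn:closed} (a missing \guard{} flag forces the session to be un-\closed{}, hence un-\adjourned{}) or directly from the definition of \adjourned{} (a missing \VACANTFLAG{}). Writing $N$ for the session's node, I would first record two auxiliary facts. (A) $N\pointer\size$ is changed only by the increment at \cref{line:trytoenter:size|increment} and the two decrements at \cref{line:trytoenter:size|decrement} and \cref{line:trytoleave:size|decrement}; each increment is matched by at most one later decrement (an abort or an \exit{}) and the initial value is $1$, so $N\pointer\size$ always equals $1$ plus the number of processes currently inside the session and, in particular, $N\pointer\size \ge 1$ for as long as $p$ is inside. (B) After \InitializeNode{} sets $N\pointer\gate := 0$, before $N$ is published via $\announceArray$ or the list, the \emph{first} flag ever set on $N$ must be OR-ed in by a \CAS{} inside \SetLeaderOrConflictFlag{} or \SetVacantFlag{}: the only other writer, \MarkAsRetired$(N)$, is reached only through \ReclaimNode$(N)$ at \cref{line:trytoenter:retire|leader} (the follower-side \ReclaimNode{} at \cref{line:trytoenter:retire|follower} retires the caller's own node, never $N$, as nodes are not reused here), and that requires a node appended after $N$, which by \lemref{not|adjourned:no|append} presupposes the session already \adjourned{}, i.e.\ \VACANTFLAG{} already set on $N$; hence \MarkAsRetired{} is never the first event to set a flag on $N$ --- in particular never the first to set \VACANTFLAG{} --- and consequently $N\pointer\gate$ is monotone in its set bits from publication onward.

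In the leader case $N$ is $p$'s own node, so $N\pointer\owner = p$, and I would argue as follows: since \LEADERFLAG{} is OR-ed into a node's \state{} only at \cref{line:trytoleave:leader}, and only by a process that owns that node, it can be set on $N$ only by $p$ itself and only after $p$ has entered its \exit{} section; by fact (B) nothing set it on $N$ earlier. So \LEADERFLAG{} is unset in $N\pointer\gate$ throughout $p$'s critical section, the session is therefore not \closed{} then, and \lemref{adjourn:closed} shows it is not \adjourned{} before $p$ begins its \exit{} section.

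In the follower case, \lemref{increment|open:joins} gives that $p$ incremented $N\pointer\size$ at \cref{line:trytoenter:size|increment} and then observed the session still \open{}; since $p$ joined, it never performed the aborting decrement at \cref{line:trytoenter:size|decrement}, so by fact (A) $N\pointer\size \ge 1$ from that increment until the decrement at \cref{line:trytoleave:size|decrement} in $p$'s \exit{} section. It then suffices to show \VACANTFLAG{} is unset in $N\pointer\gate$ before then; by fact (B) it could be set first only by a successful \SetVacantFlag$(N)$, whose winning \CAS{} at \cref{line:vacant:state|write} requires that $N\pointer\size$ was read as $0$ at some instant $\tau$ after the \state{} read at \cref{line:vacant:state|read}. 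If $\tau$ preceded $p$'s increment then, by the monotonicity from (B), the session was already \closed{} when \SetVacantFlag{} read the \state{}, hence still \closed{} at $p$'s own \IsClosed{} test at \cref{line:trytoenter:session|still|open} --- contradicting that $p$ joined; and $\tau$ cannot lie in the window where $N\pointer\size \ge 1$; so $\tau$, and a fortiori the \CAS{}, occurs only after the decrement at \cref{line:trytoleave:size|decrement}, i.e.\ after $p$ has entered its \exit{} section, as required.

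The step I expect to be the main obstacle is this last one: the apparent race inside \SetVacantFlag{} between reading $N\pointer\size$ at \cref{line:vacant:state|read} and the \CAS{} at \cref{line:vacant:state|write} --- nothing re-reads \size{} in between, so in isolation that method could install \VACANTFLAG{} on a node whose \size{} has since become positive. Fact (B)'s monotonicity of the \state{} word is what closes the gap, by turning a ``too early'' reading of $N\pointer\size = 0$ into a contradiction with $p$ having ever observed the session \open{}; and the companion subtlety --- that \MarkAsRetired{} might set \VACANTFLAG{} during $p$'s critical section --- is handled only indirectly, through the circularity it would force via \lemref{not|adjourned:no|append}. I would therefore isolate facts (A) and (B) as explicit observations before the case split so that both the leader argument and the \SetVacantFlag{} analysis fall out cleanly.
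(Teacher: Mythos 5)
Your proof is correct, and it is both more rigorous than the paper's and partly different in structure. The paper's proof handles \emph{both} cases through the session \size{}: for a leader the \size{} is already $1$ before the node is appended, for a follower \lemref{increment|open:joins} guarantees the increment ``counted,'' and it then asserts that the \vacant{} flag is not set until the \size{} reaches zero. Your follower case follows the same skeleton, but you additionally close a gap the paper silently steps over: \SetVacantFlag{} reads $\node\pointer\size$ strictly before its \CAS{} at \cref{line:vacant:state|write}, so in isolation it could install \VACANTFLAG{} after the \size{} has become positive again; your monotonicity fact (B) converts a \size{}-read of $0$ that precedes $p$'s increment into the conclusion that the session was already \closed{} before $p$'s check at \cref{line:trytoenter:session|still|open}, contradicting \lemref{increment|open:joins} --- this is exactly the missing step, and isolating \MarkAsRetired{} via \lemref{not|adjourned:no|append} is needed to make the monotonicity claim legitimate. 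Your leader case takes a genuinely different route: instead of the \size{} argument you observe that \LEADERFLAG{} can only be OR-ed into the session node by its owner at \cref{line:trytoleave:leader}, so the session cannot even \close{} (hence, by \lemref{adjourn:closed}, cannot \adjourn[]) before the leader reaches its \exit{} section. Both routes are sound; the paper's is more uniform across the two cases, while yours avoids having to reason about the \SetVacantFlag{} race at all in the leader case and makes explicit the two auxiliary invariants (the \size{} accounting and the monotone growth of the \gate{} word before retirement) that the paper's one-line assertions implicitly rely on. The only blemish is your parenthetical claim in fact (A) that $\node\pointer\size$ \emph{equals} one plus the number of processes currently in the session --- spurious increments and the leader's own contribution make that equality false in general --- but you only ever use the inequality $\node\pointer\size \ge 1$ while $p$ holds an unmatched increment, which is correctly justified.
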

\begin{proof}
If a process enters its critical section as a leader, then the session \size is incremented even before its node is appended to the list. 
If a process enters its critical section as a follower, then, from \cref{lem:increment|open:joins}, the session was \open after the process incrementing the session \size. 

Clearly,  when the session \close[s], the value of the session \size is greater than or equal to the number of processes in the session that are executing their critical sections. And, no process sets the \vacant flag in the session \state until the session \size reaches zero. 
\end{proof}

The algorithm is not starvation free since there is no guarantee that a session compatible with the request of the process is ever established.

\subsubsection{Achieving Starvation-Freedom}
\label{sec:starvation|freedom}

To achieve starvation-freedom,  when selecting a node to append to the list, we use the \emph{helping} mechanism used in many wait-free algorithms. This requires making changes to  \InitializeNode{}, \FetchNode{}, \AppendNode{} and \ReclaimNode{} methods.

After obtaining a new node and initializing it  
(\crefrange{line:initializenode:new}{line:initializenode:counter} in \InitializeNode{} method), 
the process \emph{announces} its request to other processes by storing the node's address in a shared array, which has one entry for each process, denoted by $\announceArray$ (\cref{line:initializenode:announce}). 

When selecting a node to establish a new session (\FetchNode{} method), instead of always choosing its own node (\cref{line:fetchnode:mine}), it selects another process to help and chooses its node if 
the helpee process has an outstanding request (\cref{line:fetchnode:outstanding}) for the same GME object (\cref{line:fetchnode:same}) and the node has not been retired yet (\cref{line:fetchnode:retired}).

We use a simple \emph{round-robin} scheme to determine which process to help by storing a sequence number in every node. Every time a new node is appended to the list (\crefrange{line:appendnode:snapshot}{line:appendnode:advance|head} in \AppendNode{}), the sequence number of the (appended) node is set to one more than that of its predecessor using modulo $\n$ arithmetic (\cref{line:appendnode:counter}).

Finally, in \ReclaimNode{}, the process also revokes its announcement by clearing its entry in $\announceArray$ array (\cref{line:reclaimnode:revoke}).

\begin{lemma}
\label{lem:append:same|pending}
At the time a node is appended to the list, the request associated with the node is 
\begin{enumerate*}[label=(\alph*)]
\item for the GME object that owns the list and 
\item still outstanding. 
\end{enumerate*}
\end{lemma}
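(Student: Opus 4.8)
The plan is to follow the node that actually lands in the list and then reason about the critical-section request attached to it. A node $N$ is appended to a list $L$ exactly when the \CAS{} at \cref{line:appendnode:CAS} first succeeds on the \nextinlist{} field of the head of $L$; the installed node is the one that the winning process, call it $p$, obtained from \FetchNode{}, and for that \CAS{} to succeed $p$'s head snapshot $c$ must still be the last node of $L$ (so $N$ is genuinely appended to $L$). By inspection of \FetchNode{} (\crefrange{line:fetchnode|begin}{line:fetchnode|end}), $N$ is either $p$'s own announced node or the node announced by the helpee whose identifier $p$ reads from the sequence number stored in $c$, and in the latter case only after all three tests at \cref{line:fetchnode:outstanding,line:fetchnode:same,line:fetchnode:retired} pass. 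So the proof splits into an \emph{own-node} case and a \emph{helpee-node} case.

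Part (a) is immediate. Since nodes are never reused in this variant, a node's \instance{} field is written once, in \InitializeNode{}, and never changes. In the own-node case $p$ reached \AppendNode{} from \cref{line:trytoenter:append} of its own \TryToEnter{} invocation, so it had initialized its node with \instance{} equal to the identifier of $L$; in the helpee-node case \FetchNode{} returned the helpee's node only after comparing that node's \instance{} field against the same identifier at \cref{line:fetchnode:same}. Either way $N$'s \instance{} names the GME object owning $L$. Part (b) in the own-node case is equally short: $p$ is executing its \entry{} section while it appends, so its request is outstanding, and $N$ is precisely the node $p$ announced at the start of this passage --- $p$ has not invoked \ReclaimNode{} since, because both \ReclaimNode{} calls in \TryToEnter{} are immediately followed by a return. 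Hence $N$'s request is $p$'s current, outstanding request.

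The real content is part (b) in the helpee-node case. Let $p_k$ own $N$, let $\tau_0$ be the time at which $p$ reads $p_k$'s announce entry inside \FetchNode{} and finds it equal to $N$ (non-null, $N$ not retired), and let $\tau\ge\tau_0$ be the time of the winning \CAS{}. First a snapshot observation: $N$ is stored into $p_k$'s announce entry exactly once, in the \InitializeNode{} of the passage $\pi$ that created $N$, and is reset to null within $\pi$ by the \ReclaimNode{} $p_k$ runs immediately before entering its critical section (\cref{line:trytoenter:retire|leader} or \cref{line:trytoenter:retire|follower}); so at $\tau_0$ process $p_k$ is inside the \entry{} section of $\pi$, and $N$'s request is outstanding at $\tau_0$. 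It remains to exclude that $p_k$ completes $\pi$ by $\tau$. Suppose it does; then $p_k$ enters its critical section during $\pi$ at some time in $(\tau_0,\tau)$, either as a follower or as a leader. I would first establish the auxiliary invariant that the head of $L$ advances monotonically, one node at a time (each \UpdateHead{} moves it from a node to its direct successor), which together with \lemref{not|adjourned:no|append} shows that $c$ has \adjourned[] by $\tau_0$ and that the head equals $c$ throughout $[\tau_0,\tau]$. If $p_k$ joins as a follower, then by \lemref{increment|open:joins} the session $s$ it joins is \open at that moment, hence $s\neq c$; but then the head is at $c$ at $\tau_0$, at $s$ (strictly after $c$ in $L$) at some later moment, and back at $c$ at $\tau$, impossible since the head never retreats. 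If $p_k$ enters as a leader, then $N$ has been the head, so by monotonicity $N$ lies at or before $c$ in $L$, and in either subcase \lemref{not|adjourned:no|append} (with \lemref{adjourn:closed}) forces $N$ to have \adjourned[] no later than $\tau_0$; by the ``size reaches zero'' reasoning behind \lemref{joined|adjourn:exit}, a leader's node can \adjourn[] only after that leader has left its critical section, so $p_k$ had already cleared its announce entry before $\tau_0$ --- contradicting the non-null read at $\tau_0$. Either branch is contradictory, so $N$'s request is outstanding at $\tau$.

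The hard part is exactly this helpee-node argument: an asynchronous schedule can let the helpee $p_k$ take arbitrarily many steps --- possibly completing its entire passage --- between the moment $p$ reads $p_k$'s announce entry in \FetchNode{} and the moment $p$'s \CAS{} installs that node, so the proof cannot be local and must be a careful ordering/contradiction argument. The only levers are the monotone single-step advance of the head, \lemref{not|adjourned:no|append} (appends happen only past an \adjourned[] head), and the exact instant the helpee's announce entry is cleared (when it enters its critical section); I would therefore prove the head-advances-monotonically invariant explicitly before handling this case.
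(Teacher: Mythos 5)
The paper states this lemma with no proof at all, so there is nothing to compare your argument against; what you have written is a reconstruction of a missing proof. Your decomposition is the right one: part (a) and the own-node half of part (b) are immediate from the write-once \instance{} field and the placement of \ReclaimNode{}, and the entire difficulty sits in showing that a \emph{helpee's} request cannot complete between the read of its announce entry in \FetchNode{} and the winning \CAS{} in \AppendNode{}. Your key levers --- the head is pinned at $c$ on $[\tau_0,\tau]$ because it can only advance after $c$'s \nextinlist{} field becomes non-null, $c$ and all of its predecessors are \adjourned before $\tau_0$, and the announce entry is cleared exactly at the point of critical-section entry and can never again hold $N$ since nodes are not reused in this variant --- are exactly the right ones, and the leader sub-case is handled correctly.

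One step in the follower sub-case is imprecise as written. You argue that ``by \lemref{increment|open:joins} the session $s$ it joins is open at that moment, hence $s \neq c$,'' where ``that moment'' is taken to lie in $(\tau_0,\tau)$. But \lemref{increment|open:joins} only guarantees that $s$ is open at the instant of the re-check at \cref{line:trytoenter:session|still|open}, and an asynchronous schedule can place that successful re-check \emph{before} $\tau_0$ while delaying the subsequent \ReclaimNode{} and critical-section entry into $(\tau_0,\tau)$; in that scenario openness ``at that moment'' does not exclude $s = c$. The repair is the same tool you already deploy in the leader sub-case: since the join is non-spurious, $s$ cannot \adjourn[] until $p_k$ performs its decrement in \TryToLeave{}, which happens strictly after $\tau_0$ (because $\announceArray[k]$ still holds $N$ at $\tau_0$, so \ReclaimNode{}, the critical section, and the \exit section all come later); yet $c$ and every predecessor of $c$ is \adjourned before $\tau_0$, and a strict successor of $c$ cannot be returned by \ReadHead{} before $\tau$. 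All three possibilities for $s$ are therefore contradictory, which closes the case. With that patch the proof is complete.
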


\begin{lemma}
\label{lem:atmost|n+1}
After a process has announced its request, at most $\n+1$ new sessions can be established until its request is fulfilled.
\end{lemma}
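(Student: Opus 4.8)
The plan is to fix the announcing process $p_i$, its announced node $\mynode$, and the GME object $\myinstance$ of its current request, and to follow the list associated with $\myinstance$ through the sequence of nodes $M_0, M_1, M_2, \dots$, where $M_0$ is the head of that list at the instant $p_i$ executes \cref{line:initializenode:announce} and $M_{j+1}$ is the node eventually linked as the successor of $M_j$. The first ingredient is an \emph{eligibility invariant}: at every point strictly before $p_i$'s request is fulfilled, $\announceArray[i] = \mynode$, the $\instance$ field of $\mynode$ equals $\myinstance$, and $\mynode$ is not \retired. The only code that clears $\announceArray[i]$ or raises the \RETIREDFLAG{} of $\mynode$ is \ReclaimNode{}; $p_i$ invokes \ReclaimNode{} only at the moment it is about to enter its critical section, and on \cref{line:trytoenter:retire|leader} it reclaims the \emph{predecessor} of $\mynode$ rather than $\mynode$ itself. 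So it suffices to show that no other process reclaims $\mynode$ beforehand. Any such process reclaims $\mynode$ only as the predecessor of its own node, hence only after $\mynode$ has acquired a successor in the list; but $\mynode$'s session (if it is ever established) has $p_i$ as its leader, so by \lemref{joined|adjourn:exit}, \lemref{adjourn:closed} and \lemref{not|adjourned:no|append} that session cannot \adjourn[] --- hence cannot be superseded --- until $p_i$ has entered and left its critical section, which is impossible before $p_i$ is fulfilled.

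With the invariant in hand, the core of the argument is a round-robin counting step. Assume that at least $\n+1$ new sessions are established before $p_i$'s request is fulfilled (otherwise the claim holds trivially), so that $M_1, \dots, M_{\n+1}$ all exist. The $\counter$ fields of consecutive nodes run cyclically through $\{1, \dots, \n\}$ by the assignment on \cref{line:appendnode:counter}, so some $M_k$ with $1 \le k \le \n$ has $\counter = i$. Since $M_k$ is appended after $p_i$'s announcement, it becomes the head of the list only after the announcement; therefore the process that wins the \CAS{} on \cref{line:appendnode:CAS} linking $M_k$'s successor has read $M_k$ as the head and then, inside \AppendNode{}, calls \FetchNode{} with $\snapshotArray[\myid] = M_k$ --- strictly after the announcement and strictly before $p_i$ is fulfilled. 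By the eligibility invariant, at that moment $\announceArray[i]$ equals $\mynode$, which is non-\Null{}, has $\instance = \myinstance$, and is not \retired, so none of the conditions on \crefrange{line:fetchnode:outstanding}{line:fetchnode:retired} holds and \FetchNode{} returns $\helpee = \mynode$ rather than $\mine$. Consequently $M_{k+1} = \mynode$, with $k + 1 \le \n + 1$.

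It remains to check that no session beyond $M_{k+1}$ is established before $p_i$ is fulfilled. Once $\mynode = M_{k+1}$ is in the list it is the leader's node, so (again by \lemref{joined|adjourn:exit}, \lemref{adjourn:closed} and \lemref{not|adjourned:no|append}) it cannot acquire a successor until $p_i$ leaves its critical section, which it does only after it has entered it. Meanwhile $p_i$ keeps going round the loop of \TryToEnter{}: every node preceding $\mynode$ in the list has already \adjourn[ed] (each is the predecessor of a node appended later), and $\mynode$ is the head, so whatever node $p_i$ reads as $\current$ in any iteration is either \adjourn[ed] or equal to $\mynode$; hence its busy-wait on \crefrange{line:trytoenter:spin:adjourned|begin}{line:trytoenter:spin:adjourned|end} always terminates, and within a bounded number of its own iterations $p_i$ performs a \ReadHead{} returning $\current = \mynode$ and enters its critical section via \cref{line:trytoenter:node|match,line:trytoenter:retire|leader}. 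Thus the new sessions established between $p_i$'s announcement and the fulfilment of its request are exactly $M_1, \dots, M_{k+1}$, at most $\n + 1$ in number.

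The step I expect to be the main obstacle is the timing half of the second paragraph: ruling out that the process which links $M_k$'s successor snapshotted $M_k$ as the head and committed to its replacement node \emph{before} $p_i$'s announcement --- which would let it bypass $\mynode$. The clean resolution is that $M_k$, being one of $M_1, M_2, \dots$, does not exist as the head of the list until after the announcement, so any matching \ReadHead{} and subsequent \FetchNode{} must run afterwards. A secondary point needing care is the precise behaviour of the recurrence on \cref{line:appendnode:counter}: the bound ``$\n+1$'' is exactly right because successive $\counter$ values cycle through \emph{every} element of $\{1, \dots, \n\}$, so the value $i$ is certain to appear among the first $\n$ of them and $\mynode$ is certain to be appended as the $(\n+1)$-th node at the latest.
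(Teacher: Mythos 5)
Your proof is correct and follows essentially the same route as the paper's: the paper's argument is exactly the round-robin observation that the $\counter$ field cycles through $\{1,\dots,\n\}$, so within $\n$ appended nodes the value $i$ appears, at which point every helper's \FetchNode{} must return $p_i$'s still-announced, still-unretired node. You simply make explicit the supporting facts the paper leaves implicit (the eligibility invariant and the fact that $\mynode$, once appended, cannot acquire a successor before $p_i$ exits), and your caveat about the literal recurrence on \cref{line:appendnode:counter} matches the cyclic behaviour the paper's own proof assumes.
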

\begin{proof}
Every time a new node is appended to the list and its head pointer updated, the sequence number in the new head node of the list is incremented by one using module $\n$ arithmetic. Let the sequence number of the head node when a process, say $p_i$ with $i \in [1\ldots\n]$, announces its request be $x$. Among the next $\n$ values, given by $\{(x+1) \mod n + 1\}$, $\{(x+2) \mod n + 1\}$, $\ldots$, $\{(x+\n) \mod n + 1\}$, at least one value matches $i$. Clearly, when the sequence number of the head node reaches $i$, every process that tries to append a new node to the list chooses the node for $p_i$ as the one to append unless it is already \retired.
\end{proof}


Note that the algorithm is still space-inefficient since a new node is allocated for every request.


\subsubsection{Correctness Proof}
\label{sec:correctness|proof}

In this section, unless explicity mentioned, we focus on a single GME object. Our correctness proof easily carries over to multiple GME objects.
We first prove the group mutual exclusion property.

\begin{theorem}[group mutual exclusion]
The GME algorithm satisfies the group mutual exclusion property.
\end{theorem}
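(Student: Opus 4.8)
The plan is to tag every process that is currently inside its critical section with the list node it ``belongs to'', and then split the group mutual exclusion property into two sub-claims: (i) two processes that are in their critical sections and belong to the \emph{same} node requested the same session, and (ii) no two processes can be in their critical sections at overlapping times while belonging to \emph{different} nodes. A process that returned from the \TryToEnter{} method did so either as a \emph{leader}, in which case it belongs to its own node $\mynode$ (the branch taken when the head it just read equals $\mynode$), or as a \emph{follower}, in which case it belongs to the node $\current$ it read from $\headArray[\myinstance]$ and on whose $\size$ field it performed an \FAA{}; since these are the only two exits from the loop, this dichotomy is exhaustive.

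Sub-claim (i) is immediate once one notes that the session field of a node is written exactly once, inside the \InitializeNode{} method, and is never touched again (in particular the \AppendNode{} method does not modify it). A leader created its node with $\session$ equal to its own requested session; a follower explicitly verified $\current \pointer \session = \mysession$ before it attempted to join. Hence every process that is in its critical section and belongs to a node $N$ requested the session $N \pointer \session$, so any two such processes belonging to the same $N$ requested the same session.

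For sub-claim (ii) I would argue by contradiction: suppose $p_1$ belongs to $N_1$, $p_2$ belongs to $N_2$, $N_1 \neq N_2$, and both are in their critical sections at a common instant $t$. Each of $N_1$ and $N_2$ was the value of $\headArray[\myinstance]$ at some point --- a follower reads it directly, and a leader's node becomes the head before the leader returns --- and the head only ever advances along $\nextinlist$ links, so $N_1$ and $N_2$ lie on one chain of nodes and, without loss of generality, the head reaches $N_1$ before $N_2$. Walking that chain from $N_1$ to $N_2$ and applying \lemref{not|adjourned:no|append} to each consecutive pair shows that the session of $N_1$ \adjourn[s] before $N_2$ is appended to the list, hence before $N_2$ becomes the head, hence before $t$ (since $p_2$ belonging to $N_2$ requires the head to have reached $N_2$ by time $t$). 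On the other hand, $p_1$ genuinely joined $N_1$: as a leader, its contribution to $N_1 \pointer \size$ is present from the \InitializeNode{} method onward; as a follower, this is exactly \lemref{increment|open:joins}. Then \lemref{joined|adjourn:exit} tells us the session of $N_1$ cannot \adjourn[] until after $p_1$ starts its \exit section, so $p_1$ has left its critical section strictly before $t$ --- contradicting that it is in its critical section at $t$. Together with (i), this establishes group mutual exclusion.

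The main obstacle will be making the ``single chain'' bookkeeping rigorous: showing that the head only moves forward through the $\previousinlist/\nextinlist$ links, that both a leader's node and a follower's observed node always lie on that forward path, and --- the crux --- that \lemref{not|adjourned:no|append} really can be iterated along the chain, i.e. that each node is appended only after the session of its \emph{immediate} predecessor has \adjourn[ed], and that a node whose session has \adjourn[ed] is already on the list at that time. The leader-versus-follower split is the other delicate point: one must check that a follower's \FAA{} on $\size$ makes it a participant of exactly the node it read, and that a leader counts as a participant from the instant $\size$ is initialized to one, so that \lemref{joined|adjourn:exit} applies uniformly in both cases.
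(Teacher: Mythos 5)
Your proposal is correct and follows essentially the same route as the paper: the paper's proof also rests on exactly the three lemmas you invoke (\lemref{increment|open:joins} for session compatibility of joiners, \lemref{joined|adjourn:exit} to keep a session from \adjourn[ing] while a participant is in its critical section, and \lemref{not|adjourned:no|append} to prevent a new session before the current one \adjourn[s]). Your leader/follower tagging and the explicit chain argument are a more rigorous unpacking of the same reasoning, which the paper leaves terse.
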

\begin{proof}
\Cref{lem:increment|open:joins} implies that only those processes whose request is compatible with the session can join the session and execute their critical sections within the session. 
\Cref{lem:joined|adjourn:exit} implies that, as long as a process is executing its critical section within a session, the session cannot \adjourn[].
Finally, \cref{lem:not|adjourned:no|append}  implies that no new session can be established until the current session has \adjourned. 
\end{proof}


We next prove the bounded exit property.

\begin{theorem}[bounded exit]
The GME algorithm satisfies the bounded exit property.
\end{theorem}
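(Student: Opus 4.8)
The plan is to verify that the \exit section, implemented by the \TryToLeave{} method (\crefrange{line:trytoleave|begin}{line:trytoleave|end}), contains no busy-waiting loop and that every subroutine it invokes terminates within $O(1)$ of the calling process's own steps; bounded exit --- in fact, constant exit --- then follows at once. First I would go through \TryToLeave{} line by line. The call to \ReadHead{} performs a single read of $\headArray[\myinstance]$ followed by a write to the private variable $\snapshotArray[\myid]$; copying it into $\current$ and testing $\current \pointer \owner = \myid$ take $O(1)$ steps; the \FAA{} on $\current \pointer \size$ is a single instruction; and \SetVacantFlag($\current$) is straight-line code consisting of one read of the session \state, two constant-time tests, and a single \CAS{} instruction, with no loop.

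The only construct in \TryToLeave{} that is not obviously bounded is the conditional call \SetLeaderOrConflictFlag($\current$, \LEADERFLAG) made when the leaving process owns the head node. That method does contain a while-do loop, but \lemref{guard|flag:constant} already shows that the loop runs only $O(1)$ times per invocation: a fresh iteration happens only when a \CAS{} on the session \state fails, and this can occur at most twice, since only the two \guard{} flags can ever be newly set. Hence this call also returns within a constant number of the caller's own steps.

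Combining the pieces, every line of \TryToLeave{} is executed at most once, and each one either does $O(1)$ work directly or invokes a subroutine (\SetLeaderOrConflictFlag{} or \SetVacantFlag{}) that, by the above, terminates within $O(1)$ of the process's own steps, irrespective of how the other processes are scheduled. Therefore the \exit section always completes within a bounded number of the process's own steps, which is precisely the bounded exit property. I do not expect a genuine obstacle here; the only points needing care are confirming that no helper routine reachable from the exit section --- in particular \SetVacantFlag{} --- conceals an unbounded loop, and appealing to \lemref{guard|flag:constant} to dispose of the single potentially-looping call.
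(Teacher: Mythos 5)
Your proposal is correct and follows essentially the same route as the paper's own proof: both observe that \TryToLeave{} makes at most one call each to \ReadHead{}, \SetLeaderOrConflictFlag{} and \SetVacantFlag{}, that only the second contains a loop, and that \lemref{guard|flag:constant} bounds that loop by $O(1)$ iterations. Your version merely spells out the line-by-line accounting in more detail than the paper does.
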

\begin{proof}
The body of \exit section (\TryToLeave{} method) includes up to one invocation of \ReadHead{}, \SetLeaderOrConflictFlag{} and \SetVacantFlag{} methods. Only the second method contains a loop; \cref{lem:guard|flag:constant} implies that the loop is only 
executed $O(1)$ times.
\end{proof}


We now prove the concurrent entering property. To that end, we start by establishing some properties of our GME algorithm.

\begin{lemma}
\label{lem:close:conflicting}
An \open session can \close[] only if the system contains a conflicting request.
\end{lemma}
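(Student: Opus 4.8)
The plan is to follow the \CONFLICTFLAG{} flag. By definition a session is \closed exactly when both of its \guard flags (\LEADERFLAG{} and \CONFLICTFLAG{}) are set in its \state, so for an \open session to become \closed the \CONFLICTFLAG{} flag must be set on its node. I would first check, by inspecting \cref{algo:methods|state,algo:entry|exit,algo:methods|node}, that the only stores that can turn a node's \CONFLICTFLAG{} flag on are the invocation of \SetLeaderOrConflictFlag{} with the \CONFLICTFLAG{} flag at \cref{line:trytoenter:conflict} of \TryToEnter{} and the method \MarkAsRetired{}: every other store to a node's \state either zeroes it (\InitializeNode{} and the dummy initialization), sets only the \LEADERFLAG{} flag, or adds only the \VACANTFLAG{} flag and only when the session is already \closed. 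The call at \cref{line:trytoenter:conflict} is reached only in the branch guarded by $\current \pointer \session \neq \mysession$ of a process executing \TryToEnter{}; such a process has a still-outstanding request for the GME object that owns this list --- namely $\headArray[\myinstance]$, all of whose nodes carry instance $\myinstance$ by \lemref{append:same|pending} --- but for a session different from that of the node it read, which is exactly a conflicting request. So it only remains to rule out \MarkAsRetired{} as the operation that \close[s] a still-\open session.

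For this I would show that \MarkAsRetired{} is invoked only on a node whose session has already \adjourned --- hence, by \lemref{adjourn:closed}, has already \closed --- or on a node that was never appended and so represents no session. \MarkAsRetired{} is called only from \ReclaimNode{}, which happens at \cref{line:trytoenter:retire|leader} on the predecessor of a node that has just become the list head, where a successor of that predecessor has been appended and so, by \lemref{not|adjourned:no|append}, its session has \adjourned; or at \cref{line:trytoenter:retire|follower} on the caller's own node while the caller joins a different head node, say $u$, as a follower. In the latter case the caller reaches \cref{line:trytoenter:retire|follower} only after re-checking, immediately after its \FAA{}, that $u$ is still \open, and $u$ cannot \adjourn[] while the caller remains in its \entry section (\lemref{joined|adjourn:exit}); hence by \lemref{not|adjourned:no|append} the list head has not advanced past $u$ at that moment, so the caller's node is not the current head, and if it is in the list at all it lies strictly before $u$ and therefore --- again by \lemref{not|adjourned:no|append} --- its session has already \adjourned. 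Combining the two paragraphs, the \CONFLICTFLAG{} flag of an \open session is first turned on only at \cref{line:trytoenter:conflict}, which requires a conflicting request to already be present, and the lemma follows.

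I expect the \cref{line:trytoenter:retire|follower} case to be the delicate one, since the helping mechanism lets a helper append the caller's own node concurrently, so one has to pin down exactly when that node can have entered the list and what its \state can be then. The structural fact doing the work is monotonicity of the list --- a node acquires a successor only after its session has \adjourned --- which prevents a process that retires its own node after joining some other session from ever catching a session that is currently \open. I would also handle the dummy head node explicitly: it starts with only the \LEADERFLAG{} flag set, and \MarkAsRetired{} can reach it only after a successor has been appended, hence after it has \closed, so it too \close[s] only when a conflicting request is present.
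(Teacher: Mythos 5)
Your proposal is correct and follows essentially the same route as the paper: the paper's proof is simply that \close{}ing requires the \CONFLICTFLAG{} flag to be set, and that flag is only set by a process whose request conflicts with the current session. Your additional care in ruling out \MarkAsRetired{} as a way of \close{}ing a still-\open{} session (via \lemref{not|adjourned:no|append} and \lemref{adjourn:closed}) addresses a case the paper's one-sentence argument silently elides, and it is handled correctly.
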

\begin{proof}
For a session to \close[], \CONFLICTFLAG flag in the session \state must be set. The flag can only be set by a process whose request conflicts
with the current session. 
\end{proof}


As part of joining a session as a follower, a process first increments the session \size and then rechecks if the session is still \open. If not, it decrements the session \size immediately without executing its critical section. We refer to such an increment as \emph{spurious}. Note that spurious increments may prevent a session from moving to \adjourned state. 

\begin{lemma}
A process spuriously increments the \size of a session at most once. 
\end{lemma}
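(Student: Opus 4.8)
The plan is to fix a process $p$ and a session node $N$ and prove something slightly stronger: $p$ executes the \FAA{} at \cref{line:trytoenter:size|increment} with $\current = N$ during \emph{at most one} iteration of the while-loop of \TryToEnter{} (\crefrange{line:trytoenter:while|begin}{line:trytoenter:while|end}). That \FAA{} is the only instruction in the algorithm that increments the \size{} of an \emph{already-established} session (the assignment $\node \pointer \size := 1$ in \InitializeNode{} merely initializes a fresh node, and every other \size{} update is a decrement), and it occurs once in the loop body; so the lemma follows at once from this stronger claim.

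The argument rests on one structural fact: the head pointer $\headArray[\instance]$ of a list moves only \emph{forward}, and never equals a given node more than once. This holds because $\headArray[\instance]$ is initialized to the dummy node and is thereafter modified only by the \CAS{} in \UpdateHead{} (\cref{line:updatehead:CAS}), which replaces the current head by a node read from the $\nextinlist$ field of the current head; since nodes are never reused in this variant, each node is appended at most once and its $\nextinlist$ link, once written, is never overwritten, so the head simply walks along a fixed acyclic chain and occupies each node during a single contiguous stretch of time. In particular, once $\headArray[\instance]$ has left $N$ it never again equals $N$. I would make sure to note explicitly that no code path ever stores a non-forward node into $\headArray[\cdot]$ and that $\nextinlist$ links are immutable once set, since this is the hinge of the whole argument.

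I would then show that, in whichever iteration $p$ first sets $\current = N$ (i.e.\ first reads $\headArray[\instance] = N$ at \cref{line:trytoenter:readhead}), the head no longer equals $N$ by the time that iteration ends. If $p$ does not get past \cref{line:trytoenter:head:open} (because $N$'s session conflicts with $p$'s request, or because $N$ is already \closed), or if $p$ performs the \FAA{} at \cref{line:trytoenter:size|increment} but then finds $N$ \closed and backs out via \crefrange{line:trytoenter:size|decrement}{line:trytoenter:vacant|if} --- the only situations in which a spurious increment can occur --- then $p$ reaches the busy-wait at \crefrange{line:trytoenter:spin:adjourned|begin}{line:trytoenter:spin:adjourned|end} and stays there until $N$ is \adjourned. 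Once $N$ is \adjourned, a new node gets appended and the head advanced: by \cref{lem:not|adjourned:no|append} appending is now permitted, and either the head has already left $N$ (so \TestHead{} is false and $p$ simply re-iterates), or $p$ calls \AppendNode{}, whose \UpdateHead{} step (\cref{line:appendnode:advance|head}) installs $N$'s (now non-null) successor as the new head. In every case $\headArray[\instance] \neq N$ when the iteration completes, so combined with the monotonicity fact $p$ never again reads $\headArray[\instance] = N$ and hence never again has $\current = N$. (In the remaining case, $p$ executes \cref{line:trytoenter:size|increment} and the session is still \open on the recheck, so $p$ joins as a follower and returns --- there is no further iteration at all.)

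Putting the pieces together: $p$ has $\current = N$ in at most one iteration of its while-loop, so it executes \cref{line:trytoenter:size|increment} with $\current = N$ at most once, so it spuriously increments the \size{} of $N$ at most once. I expect the only delicate point to be the head-monotonicity fact of the second paragraph; the rest is a direct reading of the pseudocode, using \cref{lem:not|adjourned:no|append} to know that the head eventually leaves $N$ after $N$ \adjourn[s].
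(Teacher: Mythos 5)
There is a genuine gap: the strengthened claim you reduce the lemma to --- that $p$ executes the \FAA{} at \cref{line:trytoenter:size|increment} with $\current = N$ in at most one iteration, hence at most once ever --- is false. Your case analysis establishes ``the head has left $N$ by the end of the iteration'' only in the branches where $p$ fails to join; in the remaining branch ($p$ increments, finds $N$ still \open, and joins as a follower) the head has \emph{not} moved, and ``there is no further iteration at all'' holds only for the current invocation of \TryToEnter{}. In a later passage $p$ may call \TryToEnter{} again, read $\headArray[\myinstance] = N$ once more (nothing forces the head to advance while $N$ remains \open), and perform a second \FAA{} on $N$. So a process can increment the \size{} of a single session many times across passages; head monotonicity cannot rule this out, and your final count ``at most one \FAA{} on $N$'' is wrong.

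The lemma survives because only the \emph{last} such increment can be spurious, and your argument is missing exactly that observation. A spurious increment means $N$ was \open at \cref{line:trytoenter:head:open} but \closed by the recheck at \cref{line:trytoenter:session|still|open}; since the \guard{} flags are never cleared (no reuse in this variant), $N$ stays \closed forever afterward, so $p$ can never again pass the \IsClosed{} test for $N$ and hence never again increments $N$'s \size{} at all. This stable-flag fact is all that is needed; it replaces both your head-monotonicity invariant (which is a heavier, global structural claim) and the per-iteration case analysis. The paper's proof is essentially this one-liner: after a spurious increment and its matching decrement, the process busy-waits until the (now permanently \closed) session \adjourn[s], after which it can only move on to other nodes.
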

\begin{proof}
After performing a spurious increment followed by a matching decrement, a process busy waits until the session has \adjourned.
\end{proof}

\begin{lemma}
A process spuriously increments the \size of a session only if some other process in the system has a request that conflicts with its own request. 
\end{lemma}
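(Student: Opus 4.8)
The plan is to unpack what a \emph{spurious increment} means directly from the code of the \TryToEnter{} method and then reuse the argument behind \cref{lem:close:conflicting}. Suppose process $p_i$ spuriously increments the \size{} of a session. Then, in some iteration of the while-loop of \TryToEnter{} (\crefrange{line:trytoenter:while|begin}{line:trytoenter:while|end}), $p_i$ read the current head node $N$ of the list of its GME object; it found at \cref{line:trytoenter:session|match} that $N$'s session, say $\sigma$, equals the session $p_i$ requested (so $p_i$'s request is compatible with $\sigma$); it found at \cref{line:trytoenter:head:open} that $N$ was \open{}; it executed the \FAA{} at \cref{line:trytoenter:size|increment}; and it then found at \cref{line:trytoenter:session|still|open} that $N$ was no longer \open{}, whereupon it decremented the \size{} at \cref{line:trytoenter:size|decrement} without entering its critical section --- this is exactly the spurious increment. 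In particular, at \cref{line:trytoenter:session|still|open} both \guard{} flags of $N$ are set, so \CONFLICTFLAG{} is set on $N$; and since $p_i$ found $N$ \open{} at \cref{line:trytoenter:head:open}, \CONFLICTFLAG{} was set on $N$ no earlier than when $p_i$ read the head in this iteration.

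Next I would determine who set \CONFLICTFLAG{} on $N$, exactly as in the proof of \cref{lem:close:conflicting}. The only line at which a process sets \CONFLICTFLAG{} on a node (other than the blind write discussed below) is \cref{line:trytoenter:conflict}, inside \SetLeaderOrConflictFlag{}; a process reaches that line only in an iteration in which it has read that node as the head of the list of its GME object --- the object whose identifier is stored in the node --- and has found its own requested session to differ from that node's session. Hence the process $p_j$ that set \CONFLICTFLAG{} on $N$ has, at that moment, an outstanding request for the same GME object as $p_i$'s request but for a session different from $\sigma$, so $p_j$'s request conflicts with $p_i$'s request. Finally $p_j \neq p_i$: $p_i$'s current request is for $\sigma$ whereas $p_j$'s request is for a different session, and a process runs its passages one at a time, so the only alternative would be that a completed earlier passage of $p_i$ (for a session $\neq \sigma$) set the flag --- which a short argument via group mutual exclusion rules out, since for that earlier passage to have executed its critical section the session $\sigma$ must have already \adjourned{}, forcing the owner of $N$ to have already left its critical section and hence the \LEADERFLAG{} flag to already be set on $N$ when $p_i$ reread $N$ at \cref{line:trytoenter:head:open}, contradicting that $p_i$ found $N$ \open{} there. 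This gives exactly the claim.

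The step I expect to be the main obstacle is the one I set aside above: ruling out that \CONFLICTFLAG{} first became set on $N$ via the blind write in \MarkAsRetired{} (\cref{line:mark:retired}), which also sets that flag; if it had, no conflicting request could be inferred. I would dispose of this the way the proof of \cref{lem:close:conflicting} does implicitly. Since $N$ was obtained by reading the \headArray{} array, $N$ has been the head of its list at some point, and such a node is \retired{} (and hence has \MarkAsRetired{} invoked on it) only through the predecessor reclamation at \cref{line:trytoenter:retire|leader}; by \cref{lem:not|adjourned:no|append} together with \cref{lem:adjourn:closed}, that can occur only after $N$ has already \adjourned{}, hence only after both \guard{} flags --- in particular \CONFLICTFLAG{} --- are already set on $N$. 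Therefore the \emph{first} write that sets \CONFLICTFLAG{} on $N$ is the one performed at \cref{line:trytoenter:conflict} by a process holding a conflicting request, and the argument of the previous paragraph applies unchanged.
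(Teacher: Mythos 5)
Your overall route is the same as the paper's: the paper's proof simply observes that a spurious increment means the session \close[d] between the increment and the recheck, and then invokes \cref{lem:close:conflicting} to produce the conflicting request; you inline the argument behind \cref{lem:close:conflicting} (tracing who set \CONFLICTFLAG{}) and additionally patch two corner cases the paper leaves implicit (that the conflicting requester is a \emph{different} process, and that \CONFLICTFLAG{} was not first set by the blind write in \MarkAsRetired{}). Both of those additions are sound and worth having.

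There is, however, one step that fails as stated: ``since $p_i$ found $N$ \open{} at \cref{line:trytoenter:head:open}, \CONFLICTFLAG{} was set on $N$ no earlier than when $p_i$ read the head in this iteration.'' \IsClosed{} returns true only when \emph{both} \guard{} flags are set, so finding the session \open{} only tells you that at least one of the two flags was still clear at that point. A perfectly ordinary execution violates your claim: a process with a conflicting request sets \CONFLICTFLAG{} while the leader is still in its critical section; $p_i$ then reads $N$, finds it \open{} (because \LEADERFLAG{} is still clear), performs the \FAA{} at \cref{line:trytoenter:size|increment}; the leader exits and sets \LEADERFLAG{}; $p_i$ rechecks at \cref{line:trytoenter:session|still|open} and finds the session \closed. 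Here \CONFLICTFLAG{} was set long before $p_i$ read the head. You use this false timing claim to place the conflicting request concurrently with $p_i$'s attempt, so the temporal part of your conclusion is not actually established. The repair is easy and stays within your framework: at \cref{line:trytoenter:head:open} the session of $N$ has not yet \closed (hence not \adjourned), and the process $p_j$ that set \CONFLICTFLAG{} cannot complete its passage before $N$'s session \adjourn[s] and a new session is established, so $p_j$'s conflicting request is still outstanding at that point and therefore overlaps $p_i$'s spurious increment. With that substitution the rest of your argument goes through.
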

\begin{proof}
Note that if the increment of session \size turns out to be spurious then it implies that the session \closed \emph{after} the increment step
but \emph{before} the decrement step. \Cref{lem:close:conflicting} implies that the system has a conflicting request at 
the point the session \closed.
\end{proof}

We consider an iteration of a while-do loop to start just after the boolean condition is evaluated and end just after the boolean condition is evaluated next or the loop is quit, whichever case applies.

\begin{lemma}
\label{lem:outer|while:new|session}
Consider an execution of one iteration of the outer while-do loop at \crefrange{line:trytoenter:while|begin}{line:trytoenter:while|end} in the \entry{} section. At the end of the iteration, either the process  joins the current session or a new session is established.
\end{lemma}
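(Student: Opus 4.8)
The plan is a case analysis on where an iteration of the outer loop of \TryToEnter{} (\crefrange{line:trytoenter:while|begin}{line:trytoenter:while|end}) can terminate. Inspecting the loop body shows it has exactly three exit points: (i) the return at \cref{line:trytoenter:node|match}, reached when $\current = \mynode$; (ii) the return at \cref{line:trytoenter:joined}, reached after the process increments the session \size and re-checks that the session is still \open; and (iii) falling through to the test at \cref{line:trytoenter:append} and then jumping back to the top of the loop. These are exhaustive since the body contains no other branch out. In case (i) the process returns having joined the session of the current head node $\current$, and in case (ii) it returns having joined $\current$'s session as a follower, which is a legitimate join by \cref{lem:increment|open:joins}; in both cases the process has joined the current session, so the claim holds and only case (iii) requires real work.

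For case (iii) the plan is to show that, by the time the iteration loops back, the node $\current$ has acquired a non-null successor, i.e., a new node (session) has been appended to the list. First observe that to reach \cref{line:trytoenter:append} the process must have exited the spin loop at \crefrange{line:trytoenter:spin:adjourned|begin}{line:trytoenter:spin:adjourned|end}, so $\current$'s session is \adjourned at that point. Now split on the value returned by \TestHead{} at \cref{line:trytoenter:append}. If it returns true, the head still equals $\current$ and the process invokes \AppendNode{}, which executes at \cref{line:appendnode:CAS} a \CAS{} attempting to set $\current \pointer \nextinlist$ from \texttt{null} to a non-null node $\successor$ returned by \FetchNode{}; regardless of whether that \CAS{} succeeds, $\current \pointer \nextinlist$ is non-null afterwards, so a node has been appended after $\current$. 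If \TestHead{} returns false, then the head has changed since \ReadHead{} read it as $\current$ at \cref{line:trytoenter:readhead}; since the head is moved only by the \CAS{} in \UpdateHead{} (invoked from \AppendNode{} only after the appended node has already been installed as $\current \pointer \nextinlist$) and, with no node reuse, the head merely walks forward along $\nextinlist$ links, a changed head again forces $\current \pointer \nextinlist$ to be non-null. Either way, a new session has been established by the end of the iteration.

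\textbf{Anticipated obstacle.} The step needing the most care is the \TestHead{}-returns-false branch of case (iii): one must justify that a changed head really does imply $\current$ has a successor. This rests on the internal ordering inside \AppendNode{} (the \CAS{} that links $\successor$ as $\current \pointer \nextinlist$ precedes the \UpdateHead{} that advances the head to $\successor$) together with monotonicity of the head pointer in the no-reuse setting assumed in this section. A minor subtlety, which does not affect the statement, is that $\current \pointer \nextinlist$ might already be non-null at the instant \ReadHead{} reads $\current$ (the brief window between appending a node and advancing the head); this is harmless, since the lemma only asserts that by the end of the iteration $\current$ has a successor.
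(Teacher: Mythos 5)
The paper states this lemma without any proof, so there is no authorial argument to compare yours against; judged on its own merits, your proof is correct and supplies exactly the missing reasoning. Your three exit points are exhaustive, cases (i) and (ii) are immediate, and your handling of case (iii) correctly splits on the outcome of \TestHead{} at \cref{line:trytoenter:append}: when it returns true, the \CAS{} at \cref{line:appendnode:CAS} leaves $\current \pointer \nextinlist$ non-null no matter which process wins it, and the remainder of \AppendNode{} (either the failed test at \cref{line:appendnode:complete} or the \UpdateHead{} call) forces the head to actually advance to that successor before the iteration ends; when it returns false, the head has moved past $\current$ since \ReadHead{} read it at \cref{line:trytoenter:readhead}, which under this section's no-reuse assumption can only happen via a completed append. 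One small point of interpretation: ``a new session is established'' is best read as ``the head advances to a node it did not point to when the iteration began,'' rather than merely ``$\current$ acquires a successor.'' Under that reading your flagged corner case (the $\nextinlist$ link being set just before \ReadHead{}) is genuinely harmless for the reason you want but do not quite state: the head still equals $\current$ at \cref{line:trytoenter:readhead} and provably advances past it during the iteration, and that head movement---one per non-returning iteration, each from a distinct $\current$---is the fact the downstream counting arguments such as \cref{lem:entry:sessions} actually rely on.
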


We say that a system state is \emph{\homogeneous} if no two requests, current or future, are for different sessions. 
Note that \homogeneity{} is a \emph{stable} property; once the system  enters a \homogeneous{} state, it stays in a \homogeneous{} state.

\begin{lemma}
\label{lem:homogeneous:atmost|one}
Once the system reaches a \homogeneous{} state, at most one new session can be established thereafter.
\end{lemma}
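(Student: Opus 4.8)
The goal is to show that once the system is in a \homogeneous{} state, at most one new session can be established after that point. The natural strategy is to argue by contradiction: suppose two (or more) new sessions $S_1$ and $S_2$ are established after the system becomes \homogeneous, with $S_2$ established strictly after $S_1$. By \lemref{not|adjourned:no|append}, before $S_2$ can be established, the session $S_1$ must first \adjourn[], and by \lemref{adjourn:closed} it must \close[] before it can \adjourn[]. So it suffices to show that $S_1$ cannot \close[] once the system is \homogeneous.

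\textbf{Key steps.} First I would invoke \lemref{close:conflicting}: an \open session can \close[] only if the system contains a conflicting request. But in a \homogeneous{} state, by definition no two requests — current or future — are for different sessions, so there is no conflicting request for $S_1$ (or any session). Hence the \CONFLICTFLAG{} for $S_1$'s node is never set after the system is \homogeneous. Second, I need to be careful about timing: $S_1$ itself might have been established \emph{before} the system became homogeneous, in which case its \CONFLICTFLAG{} could already be set, or a conflicting request might have existed at the moment $S_1$ was established. To handle this cleanly, I would split into cases based on whether $S_1$ is established before or after the system becomes \homogeneous. If $S_1$ is established at or after the homogeneous point, then \lemref{close:conflicting} directly forbids it from \close[ing]. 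If instead $S_1$ was already established earlier, then $S_1$ is exactly the ``current'' session at the homogeneous point, and since no \emph{future} request conflicts with it, again no process will ever set its \CONFLICTFLAG{} — so it stays \open forever and never \adjourn[s], blocking $S_2$. Either way, $S_2$ cannot be established, contradicting the assumption of two new sessions.

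\textbf{Filling a gap.} One subtlety: \lemref{outer|while:new|session} guarantees that in each iteration of the outer loop, a process either joins the current session or a new session is established — so the homogeneous-state process does make progress, but I should make sure the argument doesn't accidentally prove that \emph{zero} new sessions can be established (which would be false, since the head node at the homogeneous point might be \adjourned{} and a genuinely new session is needed). The resolution is that the ``at most one'' count correctly allows establishing the one session that the waiting processes actually need; what's ruled out is establishing a \emph{second} one, because that second establishment would require the first to \adjourn[], which requires a conflict that \homogeneity{} forbids. I'd also lean on the stability remark — once \homogeneous, always \homogeneous — so that the ``no future conflicting request'' clause applies throughout the relevant suffix of the execution.

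\textbf{Main obstacle.} The trickiest part is pinning down precisely which session plays the role of ``$S_1$'' and handling the boundary case where a session straddles the transition into the \homogeneous{} state (established with a conflict pending, or \CONFLICTFLAG{} set just before the transition). The clean way through is to observe that whatever session is current at the homogeneous point, and whatever sessions are established afterward, at most one of them can ever \close[]/\adjourn[] — because \close[ing] strictly requires a conflicting request and \homogeneity{} provides none going forward — and a new session can only be appended after the previous one \adjourn[s] (\lemref{not|adjourned:no|append}). So only one ``append'' event can occur in the homogeneous suffix. I expect the write-up to be short once the case split and the appeal to \lemsref{close:conflicting}{not|adjourned:no|append}{adjourn:closed} are set up correctly.
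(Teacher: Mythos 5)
The paper states this lemma without any proof, so there is no author argument to compare against; your proposal is correct and is clearly the intended justification, since it chains together exactly the neighbouring lemmas: a second new session would require the first one to \adjourn[] (\cref{lem:not|adjourned:no|append}), hence to \close[] (\cref{lem:adjourn:closed}), hence the existence of a conflicting request (\cref{lem:close:conflicting}), which \homogeneity{} and its stability rule out for any session established after the \homogeneous{} point. One minor remark: the case split on whether $S_1$ predates the \homogeneous{} point is not needed for the statement as written, because a ``new session established thereafter'' is by definition established after that point and so \cref{lem:close:conflicting} applies to it directly; that discussion instead explains why the bound is one rather than zero (the session current at the \homogeneous{} point may already be \closed or \adjourned because of an earlier conflict, so one further append may still occur), which is worth keeping as motivation but is not part of the contradiction.
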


\begin{lemma}
\label{lem:homogeneous|outer:twice}
Assume that the system is in a \homogeneous{} state 
when a process starts executing an iteration of the  outer while-do loop 
at \crefrange{line:trytoenter:while|begin}{line:trytoenter:while|end}. Then the process executes the body of the while-do loop at most twice. 
\end{lemma}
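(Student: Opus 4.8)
The plan is to combine the structural characterization of a single outer-loop iteration (Lemma~\ref{lem:outer|while:new|session}) with the bound on new sessions in a homogeneous state (Lemma~\ref{lem:homogeneous:atmost|one}). The key observation is that in a homogeneous state, every request in the system is for the same session, so a process $p$ executing \TryToEnter{} with session $\mysession$ can never encounter a conflicting session: by Lemma~\ref{lem:close:conflicting}, an open session can only close if the system contains a conflicting request, which a homogeneous state forbids. Hence whenever $p$ reads the current head node in \linref{trytoenter:readhead}, that node either (a) is $p$'s own node, or (b) carries session $\mysession$ and is open when $p$ reaches \linref{trytoenter:head:open}.

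First I would dispatch case (a): if \linref{trytoenter:node|match} ever holds, $p$ retires its predecessor and returns, finishing the iteration and the method, so at most one iteration is spent in that case. Next, in case (b), $p$ takes the ``$\current \pointer \session = \mysession$'' branch and, since the session is open and cannot close (homogeneity + Lemma~\ref{lem:close:conflicting}), the recheck at \linref{trytoenter:session|still|open} succeeds, so $p$ reclaims its own node at \linref{trytoenter:retire|follower} and returns. So a second possibility is that the iteration ends with $p$ joining the current session and returning. The only remaining way an iteration does \emph{not} terminate the method is if $p$ neither matches the head nor successfully joins — but we have just argued that in a homogeneous state, reaching \linref{trytoenter:head:open} with session $\mysession$ forces a successful join. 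The subtle point is the \emph{timing}: homogeneity is stated at the start of the iteration, but since homogeneity is a stable property (once homogeneous, always homogeneous), it holds throughout the rest of the iteration and indeed forever after, so the argument that the session stays open between \linref{trytoenter:head:open} and \linref{trytoenter:session|still|open} is valid.

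The one case that needs genuine care — and which I expect to be the main obstacle — is what happens when $p$ reads the head at \linref{trytoenter:readhead} and finds a node that is \emph{neither} $p$'s own node \emph{nor} carrying session $\mysession$. In a homogeneous state, can such a node be the current head? The head was created as a dummy \adjourned node (session field arbitrary, possibly $\neq \mysession$) or by some earlier \AppendNode{}; a node appended before the system became homogeneous could carry a stale session value different from $\mysession$. If $p$'s first iteration sees such a node, $p$ falls into the \emph{else} branch, sets \CONFLICTFLAG{} and tries to adjourn it, then spins until it adjourns, then appends a new node at \linref{trytoenter:append}. By Lemma~\ref{lem:homogeneous:atmost|one} this is the (at most) one new session that can be established after homogeneity, and it carries session $\mysession$ (because, by Lemma~\ref{lem:append:same|pending} and homogeneity, the appended node's request is for $\mysession$). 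So on $p$'s \emph{second} iteration, $p$ reads the head and now finds either its own node (case (a)) or a node with session $\mysession$ that is open (case (b), since it cannot close), and the method terminates. Thus $p$ executes the loop body at most twice.

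To assemble the proof I would: (1) invoke stability of homogeneity to fix that it holds from the start of the first considered iteration onward; (2) show that in any iteration, if the head node $\current$ satisfies $\current = \mynode$ the method returns that iteration; (3) show that if $\current \pointer \session = \mysession$, then $\current$ is open and stays open through the recheck (Lemma~\ref{lem:close:conflicting} + homogeneity), so the method returns that iteration; (4) show that the only other possibility, $\current \pointer \session \neq \mysession$ and $\current \neq \mynode$, causes $p$ to append a new node (by Lemma~\ref{lem:outer|while:new|session}, which guarantees the iteration ends with either a join or a new session — and here it must be a new session), and this appended node has session $\mysession$ and exhausts the budget of Lemma~\ref{lem:homogeneous:atmost|one}; (5) conclude that on the following iteration we are necessarily in case (2) or (3), so the method returns. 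Hence the loop body runs at most twice.
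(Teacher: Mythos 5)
Your proposal is correct and follows essentially the same route as the paper, whose entire proof is the one-line citation of Lemmas~\ref{lem:append:same|pending}, \ref{lem:outer|while:new|session} and \ref{lem:homogeneous:atmost|one}; your write-up simply expands that argument with the case analysis (own node / compatible open session / stale incompatible head) and the supporting facts (stability of homogeneity, Lemma~\ref{lem:close:conflicting}) that the paper leaves implicit.
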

\begin{proof}
Follows from \cref{lem:append:same|pending,lem:outer|while:new|session,lem:homogeneous:atmost|one}.
\end{proof}

\begin{lemma}
\label{lem:homogeneous|inner:once}
Assume that the system is in a \homogeneous{} state  at the beginning of an iteration 
of the inner while-do loop at \cref{line:trytoenter:spin:adjourned|begin}. Then the process executes the body of the while-do loop at 
most once. 
\end{lemma}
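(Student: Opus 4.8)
The plan is to prove the slightly stronger statement that, whenever the inner spin loop (beginning at \cref{line:trytoenter:spin:adjourned|begin}) starts an iteration in a \homogeneous{} state, the node $\current$ on which the process spins is \emph{already} \adjourned; the loop condition is then false, so the body is never entered, and the ``at most once'' slack only absorbs the instant-level case in which $\current$ \adjourn[s] in the tiny window between the loop-condition read and the onset of \homogeneity{}. Two elementary facts will be used throughout: no flag of a session \state{} is ever cleared (immediate from \cref{algo:methods|state}: every method there only sets bits), and a \homogeneous{} state contains no outstanding request for a session other than the common one (immediate from the definition, since such a request together with an outstanding request for the common session would violate \homogeneity{}).

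First I would pin down the possible states of $\current$. Because the process is in the spin loop, it reached it without joining $\current$, so either (i) $\current \pointer \session$ equals its own requested session, in which case it must have found $\current$ \closed at \cref{line:trytoenter:head:open} (otherwise it would have attempted to join and returned) and $\current$ is therefore still \closed; or (ii) $\current \pointer \session$ differs from its request, in which case it set \CONFLICTFLAG{} on $\current$ at \cref{line:trytoenter:conflict}, and, since the leader of $\current \pointer \session$ cannot still be in its critical or \exit{} section in a \homogeneous{} state, that leader has already executed \cref{line:trytoleave:leader} and \LEADERFLAG{} is set as well. Hence, in a \homogeneous{} state, $\current$ is \closed or \adjourned, and it remains to rule out that $\current$ is \closed but not yet \adjourned.

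Let $t_h$ be the time at which the relevant iteration begins (a \homogeneous{} time). In case (i), $\current$ can have become \closed only because some process issued a conflicting request (\cref{lem:close:conflicting}); that process set \CONFLICTFLAG{} on $\current$ after reading $\current$ as the list head, and by \cref{lem:not|adjourned:no|append} $\current$ remains the list head until it \adjourn[s], so that process is blocked in the very same spin loop on $\current$ until $\current$ \adjourn[s]; its request conflicts with the common session $\current \pointer \session$ and so cannot be outstanding at $t_h$, hence its passage completes before $t_h$, which forces $\current$ to \adjourn[] beforehand. In case (ii) I would instead track $\current \pointer \size$: it is incremented only by requesters for $\current \pointer \session$, all of which have finished by $t_h$, so every increment is matched by a decrement and $\current \pointer \size$ equals $0$ and is stable from the time $t_1 \le t_h$ of its last decrement; that last decrementer, or---if \CONFLICTFLAG{} is placed on $\current$ only after $t_1$---the spinning process itself at \cref{line:trytoenter:vacant|else}, then calls \SetVacantFlag{} while $\current$ is \closed and of \size{} $0$, which \adjourn[s] it before $t_h$. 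Either way $\current$ is \adjourned at $t_h$, so the loop-condition read sees it \adjourned and the body is not executed, which establishes the lemma.

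I expect the delicate part to be case (ii): one has to argue that among the several \SetVacantFlag{} calls that can race on $\current$ near $t_h$, at least one observes $\current$ simultaneously \closed and of \size{} $0$ and therefore succeeds. This rests on $\current \pointer \size$ being incremented only by requesters for $\current \pointer \session$ (so it is stable at $0$ past $t_h$), on \LEADERFLAG{} being set no later than the leader's own decrement of $\current \pointer \size$ (from the order of steps in \TryToLeave{}), and on the spinning process issuing a \SetVacantFlag{} immediately after setting \CONFLICTFLAG{} (\cref{line:trytoenter:conflict,line:trytoenter:vacant|else}); together these guarantee that some \SetVacantFlag{} runs with both guard flags set and $\current \pointer \size = 0$, which is the only point of the argument that is not purely syntactic.
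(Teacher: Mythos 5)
Your proof is correct, and its core coincides with the paper's: both arguments reduce the lemma to showing that the session hosted by the node the process spins on is already \adjourned{} at the \homogeneous{} time $t$, and both settle the hard case by a race analysis on \SetVacantFlag{} between the spinning process (which calls it right after setting \CONFLICTFLAG{}) and the last process to decrement the \size{} field, observing that whichever call comes later sees both \guard{} flags set and \size{} zero. Where you differ is in the outer decomposition and in the matching-session case. The paper splits on whether the process's snapshot of the head is stale ($H \neq U$, in which case the old head is \adjourned{} by \cref{lem:not|adjourned:no|append}) versus current ($H = U$), and in the latter case rules out $\sessionOf{H} = \sessionOf{R}$ by a rather terse appeal to \cref{lem:close:conflicting} to manufacture a conflicting request ``at $t$.'' You instead split on whether $\current\pointer\session$ matches the process's own session, and in the matching case you give a direct, self-contained argument: the conflicting requester that \close[d] the session is itself trapped in the same spin loop until $\current$ \adjourn[s], its passage must have completed before $t$ by \homogeneity{}, hence $\current$ \adjourn[ed] before $t$. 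This is arguably tighter than the paper's contradiction, which elides the gap between ``a conflicting request existed when the session \close[d]'' and ``a conflicting request is pending at $t$''; your version closes exactly that gap. What the paper's staleness split buys is that the stale-head subcase is dispatched in one line, whereas your two cases each carry the full burden; the net bookkeeping is about the same. One small slip: in your case (i) the parenthetical ``otherwise it would have attempted to join and returned'' is not quite right---the process may have found the session \open{} at \cref{line:trytoenter:head:open}, incremented the \size{}, and only then found it \closed{} at \cref{line:trytoenter:session|still|open}---but in both subcases $\current$ is \closed{} by the time the process reaches the spin loop, which is all your argument uses, so nothing breaks.
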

\begin{proof}
We first define some notation. Given a node $X$, let $\sessionOf{X}$ denote the session hosted by $X$.
Also, given a request $R$, let $\sessionOf{R}$ denote the session that $R$ wants to join.

Let $p$ denote the process executing the loop mentioned in the lemma statement, let $t$ denote the time at which it starts executing the current iteration and let $R$ denote the pending request of $p$ at $t$.
Also, let $H$ denote the head of the list when $p$ starts executing the iteration, and let $U$ denote the head of the list read by $p$ most recently (using the \ReadHead{} method).
Note that, by assumption, the system is in a \homogeneous{} state at $t$, which, in turn, implies that there is no pending 
request at $t$ that conflicts with $R$.
There are two cases to consider:
\begin{description}

\item[Case 1 ($H \neq U$):] This implies that $U$ is an \emph{old} head node of the list and $\sessionOf{U}$ is already \adjourned at $t$.

\item Case 2 ($H = U$): We claim that $\sessionOf{H} \neq \sessionOf{R}$. Otherwise, $\sessionOf{H}$ is already \closed at $t$. \Cref{lem:close:conflicting} implies that there exists a pending request at $t$ that conflicts with $R$---a contradiction. 

Now, let $q$ denote the last process to leave $\sessionOf{H}$. Note that both $p$ and $q$ invoke \SetVacantFlag{} method---$p$ after setting the \CONFLICTFLAG flag and $q$ after decrementing the \size field of $H$. Let $X \in \{P, Q\}$ denote the process that invoked the method \emph{later}. 
Note that, when $X$ invokes the \SetVacantFlag{} method, the following must hold:
\begin{enumerate*}[label=(\alph*)]
\item the \LEADERFLAG flag is already set, and
\item the \size field of $H$ is never incremented spuriously. 
\end{enumerate*}
The latter holds because, otherwise, it would imply that exists a pending request at $t$ that conflicts with $R$---a contradiction.
Clearly, when the \SetVacantFlag{} method invoked by $X$ returns, $\sessionOf{H}$ is guaranteed to be \adjourned{}. 

It now remains to argue that  the method returns before $t$. If $X = P$, then it follows trivially from the code. If $X = Q$, then, by definition,
the system cannot be in \homogeneous{} state until $Q$ has finished executing its \exit section.
\end{description}
In both cases, $\sessionOf{U}$ is guaranteed to be \adjourned at $t$ and thus $p$ quits the loop after finishing the current iteration.
\end{proof}

We are now ready to prove the concurrent entering property.

\begin{theorem}[concurrent entering]
The GME algorithm satisfies the concurrent entering property.
\end{theorem}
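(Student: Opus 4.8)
The plan is to combine the concurrent-entering property's two obligations — that the entry section terminates within a bounded number of the process's own steps once the system is homogeneous, and (as a special case) that the exit section does too — and reduce everything to bounding the iteration counts of the two nested loops in \TryToEnter{} together with the per-iteration cost. The bounded-exit theorem already handles the \exit section, so I would state up front that it suffices to bound the number of steps a process $p$ takes in \TryToEnter{} after the system has entered a homogeneous state (which, by the remark following \lemref{outer|while:new|session}, is a stable condition, so once it holds it holds forever).

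First I would invoke \lemref{homogeneous|outer:twice}: once the system is homogeneous when $p$ begins an iteration of the outer while-do loop at \crefrange{line:trytoenter:while|begin}{line:trytoenter:while|end}, $p$ executes the loop body at most twice. But the concurrent entering property requires a bound from the moment $p$ \emph{starts} its passage, not from the moment it happens to begin an outer iteration in a homogeneous state, so I would add a short argument that at most one ``extra'' outer iteration can occur before homogeneity takes hold from $p$'s perspective — more precisely, $p$ may be mid-iteration when the system becomes homogeneous, so we account for one in-progress iteration plus the at-most-two guaranteed by \lemref{homogeneous|outer:twice}, giving a constant bound of at most three outer iterations total in the relevant regime. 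Then, within each outer iteration, the only unbounded-looking construct is the inner spin loop at \crefrange{line:trytoenter:spin:adjourned|begin}{line:trytoenter:spin:adjourned|end}; here I would apply \lemref{homogeneous|inner:once}, which says that if the system is homogeneous at the start of an inner-loop iteration then $p$ executes the inner body at most once, i.e. the spin terminates after a bounded number of $p$'s own steps (again allowing one possibly-in-progress iteration before homogeneity is detected). Every other line in the method — \ReadHead{}, the various flag tests, the \FAA{} increments/decrements, \SetVacantFlag{}, \SetLeaderOrConflictFlag{} (bounded by \lemref{guard|flag:constant}), \ReclaimNode{}, and \AppendNode{} (whose \CAS{}es and the embedded \FetchNode{} are all straight-line) — contributes only $O(1)$ steps per outer iteration, so the total is $O(1)$.

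The main obstacle I anticipate is the bookkeeping at the ``seam'': the two homogeneity lemmas are phrased conditionally on the system already being homogeneous at the start of a given loop iteration, whereas the theorem must bound $p$'s steps measured from an arbitrary earlier point. The delicate part is arguing that only a bounded amount of work can happen in the window between ``the system becomes homogeneous'' and ``$p$ reaches the next clean loop-iteration boundary where the lemmas bite.'' I would handle this by observing that $p$'s code between two consecutive evaluations of either loop's boolean guard is itself straight-line (finitely many steps), so at most one unfinished iteration of each loop straddles the transition; after that, \lemref{homogeneous|outer:twice} and \lemref{homogeneous|inner:once} apply verbatim. A secondary subtlety is confirming that within a homogeneous system the inner spin truly does make progress — i.e. that $\sessionOf{U}$ is already \adjourned when $p$ reaches the spin — but this is exactly the content of \lemref{homogeneous|inner:once}, so I would simply cite it rather than re-derive the case analysis. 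Assembling these pieces yields that $p$ completes its \entry section within $O(1)$ of its own steps once no request for a different session is present, which is the concurrent entering property.
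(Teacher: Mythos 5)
Your proposal is correct and follows essentially the same route as the paper's proof: bound the inner spin via \lemref{homogeneous|inner:once}, the guard-flag loop via \lemref{guard|flag:constant}, and the number of outer iterations via \lemref{homogeneous|outer:twice}. Your explicit handling of the ``seam'' (the possibly in-progress iteration when homogeneity first holds) is a careful refinement of a point the paper leaves implicit, but it does not change the argument.
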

\begin{proof}
\Cref{lem:guard|flag:constant,lem:homogeneous|inner:once} imply that, once the system is in a \homogeneous{} state, a process finishes executing an iteration of the outer while-do loop of its \entry{} section within a bounded number of its own steps.  The property then follows from \cref{lem:homogeneous|outer:twice}.
\end{proof}


For the lockout freedom property, we need the following additional lemmas.

\begin{lemma}
\label{lem:conflicting:close}
If the system contains a conflicting request while a session is in progress, then the session eventually \close[s].
\end{lemma}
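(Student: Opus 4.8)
The plan is to argue by contradiction: assume the session $S$ never \close[s], and show that both \guard{} flags of $S$'s node must eventually get set, which forces \IsClosed{} to return true for $S$. Two structural facts set this up. Since a session can \adjourn[] only after it has \closed{} (\cref{lem:adjourn:closed}), if $S$ never \close[s] it never \adjourn[s], so it stays \open forever; and since no new node can be appended to the list until the session at the current head has \adjourned{} (\cref{lem:not|adjourned:no|append}), $S$'s node remains the head of the list forever and hosts the only in-progress session from the moment $S$ is established.

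First I would show that \CONFLICTFLAG{} eventually gets set on $S$'s node. Let $q$ be a process holding a request that conflicts with $S$ at the relevant time. By \cref{lem:increment|open:joins,lem:joined|adjourn:exit}, $q$ cannot currently be executing (nor can it later execute) a critical section: the only in-progress session is $S$, which is incompatible with $q$'s request, and a process is in a critical section only of a session compatible with its request. Hence $q$ is in its \entry{} section with an outstanding request; and since the head never moves past $S$'s node, no session compatible with $q$ is ever established, so $q$'s request is never fulfilled and $q$ stays in its \entry{} section forever. Since $q$ is live, within finitely many of its steps $q$ performs a fresh iteration of the outer loop in which it reads the head (which is $S$'s node), takes the ``conflicting'' branch at \cref{line:trytoenter:conflict}, and calls \SetLeaderOrConflictFlag{} with \CONFLICTFLAG{}; by \cref{lem:guard|flag:constant} this terminates and leaves \CONFLICTFLAG{} set on $S$'s node. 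A small case analysis disposes of the possibility that $q$ is mid-iteration when we start tracking it --- if $q$ is spinning in the inner loop on $S$'s node it has already passed \cref{line:trytoenter:conflict}, and if it is spinning on an older node that session is already \adjourned{} so $q$ leaves the inner loop and returns to the outer loop.

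Next I would show that \LEADERFLAG{} eventually gets set on $S$'s node. Let $r$ be the owner of $S$'s node, i.e., the leader that established $S$. By the time $r$ is in its critical section as leader, $S$'s node is already the head and, by the above, it stays the head forever. Since critical sections are finite and $r$ is live, $r$ eventually runs \TryToLeave{}, reads the head (still $S$'s node), sees that it owns that node, and invokes \SetLeaderOrConflictFlag{} with \LEADERFLAG{}, which sets \LEADERFLAG{} on $S$'s node. No other process can set \LEADERFLAG{} on $S$'s node through \TryToLeave{} (its owner is $r$), and \MarkAsRetired{} never touches $S$'s node since a node is not \retire[d] while it remains the head. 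As no step ever clears a flag, once both \guard{} flags are set $S$ is \closed, contradicting the assumption; hence $S$ eventually \close[s].

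The step I expect to be the main obstacle is the \CONFLICTFLAG{} argument: verifying that the conflicting process $q$ genuinely reaches \cref{line:trytoenter:conflict} acting on $S$'s node within finitely many steps, ruling out that it gets wedged in the inner spin loop or that its request is quietly satisfied or replaced by a non-conflicting one. The observations that make it go through are that, while $S$ stays \open, no new --- hence no compatible --- session is ever established (so $q$ can neither enter its critical section nor finish its current passage), and that spinning in the inner loop on any non-head node is harmless because that node's session is already \adjourned{}.
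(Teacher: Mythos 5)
Your proposal is correct and follows essentially the same route as the paper's proof: the paper's two-sentence argument is exactly that every conflicting process eventually invokes \SetLeaderOrConflictFlag{} to set \CONFLICTFLAG{} and the session leader eventually invokes it to set \LEADERFLAG{} on exit, so both \guard{} flags get set and the session \close[s]. Your version merely wraps this in a contradiction framing and supplies the supporting details the paper leaves implicit (the head cannot advance, the conflicting process cannot enter a critical section or get wedged in the inner spin loop), all of which check out against \cref{lem:adjourn:closed,lem:not|adjourned:no|append,lem:increment|open:joins,lem:guard|flag:constant}.
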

\begin{proof}
All processes with a conflicting request eventually invoke \SetLeaderOrConflictFlag{} method to set \CONFLICTFLAG flag in the session \state, which terminates only after the flag has been set. Further, when the leader of the session leaves its critical section, it invokes  \SetLeaderOrConflictFlag{} method to set \LEADERFLAG flag in the session \state, which terminates only after the flag has been set. 
\end{proof}

\begin{lemma}
\label{lem:close:spurious}
Once a session is \closed, its \size can be incremented spuriously at most $\n$ times.
\end{lemma}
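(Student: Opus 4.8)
The plan is to obtain this as an immediate corollary of the earlier observation that \emph{each} process spuriously increments the \size of a fixed session at most once, combined with the fact that the system contains only $\n$ processes.

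First I would recall the bookkeeping: a spurious increment on a session is an execution of the \FAA{} at \cref{line:trytoenter:size|increment} that the executing process later cancels with the matching \FAA{} at \cref{line:trytoenter:size|decrement}, because its recheck at \cref{line:trytoenter:session|still|open} finds the session no longer \open. Every such increment is performed by exactly one process, so the number of spurious increments on a given session equals the sum, over the $\n$ processes, of the number of spurious increments each of them performs on that session. Next I would invoke the already-established lemma that a single process performs at most one spurious increment on a given session (its proof being that, right after such an increment–decrement pair, the process busy-waits at \crefrange{line:trytoenter:spin:adjourned|begin}{line:trytoenter:spin:adjourned|end} until the session has \adjourned, and therefore never returns to increment that same session). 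Summing the bound $1$ over all $\n$ processes gives that the session incurs at most $\n$ spurious increments over its entire lifetime, and in particular at most $\n$ from the moment it \close[s] onward, which is the form needed later.

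The argument is essentially routine; the only point requiring care is that the "at most one per process" bound must be read for a \emph{fixed} session (it is, since the post-spurious busy-wait is on that session's \state), so that the pigeonhole sum over processes is legitimate. One could additionally note that restricting to spurious increments that occur after the session has \closed does not help sharpen the count below $\n$ — a process whose \FAA{} hits a \closed session must have read its \state as \open at \cref{line:trytoenter:head:open} beforehand, and a \closed session never reopens, so no accounting changes — but since the bound $\n$ already suffices, no refinement is pursued.
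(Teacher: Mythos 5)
Your proposal is correct and follows the same route as the paper: the paper's proof is exactly the one-line observation that each process is responsible for at most one spurious increment to the session \size, summed over the $\n$ processes. Your additional remarks about fixing the session and about not needing to restrict attention to post-\close{} increments are sound but not needed beyond what the paper states.
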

\begin{proof}
Each process is responsible for at most one spurious increment to the session \size. 
\end{proof}

\begin{lemma}
\label{lem:close:zero}
Once a session is \closed, eventually the session \size becomes zero and stays zero thereafter.
\end{lemma}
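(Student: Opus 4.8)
The plan is to track every \FAA{} performed on the \size field of the \closed session $S$, pair each increment with a matching decrement, and show that the net value settles to $0$ in finite time and then cannot change. Throughout I would use that being \closed is a stable property — the two \guard flags (\LEADERFLAG and \CONFLICTFLAG) are only ever set, never cleared — so once $S$ is \closed it stays \closed (and may later become \adjourned or \retired, both of which still have both \guard flags set); fix $t_0$ as the time at which $S$ closes.

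First I would classify the writers of $S.\size$. Its value is initialized to $1$ in \InitializeNode{} by the process that created the node (its \emph{leader}); every later change is an \FAA{} at \cref{line:trytoenter:size|increment} by a \emph{joiner}, who is \emph{successful} if it passes the recheck at \cref{line:trytoenter:session|still|open} (and then executes its critical section inside $S$) and \emph{spurious} otherwise (in which case it immediately undoes the increment at \cref{line:trytoenter:size|decrement}). By \cref{lem:increment|open:joins} a successful joiner finds $S$ still \open right after its increment, so by stability every successful joiner incremented \emph{before} $t_0$; there are at most $\n$ of them (each process joins $S$ at most once per passage, and never returns to $S$ afterwards), and by the liveness assumption each of them — and the leader — eventually leaves its critical section, executes its \exit section, and decrements $S.\size$ exactly once at \cref{line:trytoleave:size|decrement}. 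For spurious joiners, \cref{lem:close:spurious} caps their total number by $\n$, and each performs its matching decrement within a bounded stretch of code, hence eventually. Finally, I would observe that the \FAA{} at \cref{line:trytoenter:size|increment} is reached only after the guard at \cref{line:trytoenter:head:open} confirms $S$ is not \closed; therefore any increment of $S.\size$ that occurs at or after $t_0$ is by a process that read $S$'s state as \open strictly before $t_0$, is necessarily spurious (its recheck, happening after $t_0$, observes \closed), and — each such process contributing at most one — is already among the $\le \n$ spurious increments counted above.

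Combining these, there are only finitely many writes to $S.\size$ in total (the initialization to $1$, at most $\n$ successful-joiner increments, at most $\n$ spurious increments), each increment is matched one-to-one with a decrement (the leader's and every successful joiner's \exit-section decrement, every spurious joiner's immediate decrement, and the initialization to $1$ matched by the leader's \exit-section decrement), and by liveness all of these operations occur within finite time after $t_0$. Once they all have occurred, each $+1$ has been cancelled by its matching $-1$, so $S.\size = 0$; and since no further increment of $S.\size$ can ever occur, it stays $0$ thereafter.

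I expect the delicate step to be establishing that the family of increments to $S.\size$ happening after $t_0$ is finite and is drained in finite time — that is, ruling out a slow or returning process that keeps bumping $S.\size$ indefinitely. That argument leans on the guard at \cref{line:trytoenter:head:open} (which rejects a \closed session), on \cref{lem:close:spurious} (a process spuriously increments a given session at most once, after which it only busy-waits on it at \cref{line:trytoenter:spin:adjourned|begin}), and on the fact that once a process is finished with $S$ it never again reads the list head to be $S$ while $S$ is \open; everything else is routine bookkeeping.
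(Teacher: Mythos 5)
Your proof is correct and is essentially an expanded version of the paper's own (three-sentence) argument: no new successful join can occur once the session is \closed, every participant eventually leaves and decrements, and \cref{lem:close:spurious} caps the spurious increments, so all $+1$'s are eventually cancelled and no further write can occur. One inessential misstatement: your bound of at most $\n$ \emph{successful} joiners, justified by ``never returns to $S$ afterwards,'' is not quite right --- a process that joins $S$, exits, and issues a fresh compatible request while $S$ is still \open and still the head can join $S$ again in a later passage --- but your argument only needs the number of successful increments before the closing time to be \emph{finite} (which it is, since only finitely many steps precede that time), so the conclusion is unaffected.
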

\begin{proof}
After a session \close[s], no new process can join the session. Every process that is in the session at the point the session \close[s] eventually leaves the session. The result then follows from \cref{lem:close:spurious}.
\end{proof}

\begin{lemma}
\label{lem:close:adjourn}
A \closed session is eventually \adjourned.
\end{lemma}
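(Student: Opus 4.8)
The plan is to combine the three immediately preceding lemmas with the behavior of the \SetVacantFlag{} method. By \lemref{close:zero}, once a session has \closed, its \size eventually becomes zero and stays zero thereafter; fix a time $t_0$ after which this holds. To conclude that the session \adjourns, I need to exhibit some process that, at or after $t_0$, successfully executes the \CAS{} in \SetVacantFlag{} (\linref{vacant:state|write}) to set the \VACANTFLAG flag. Once that \CAS{} succeeds the \state satisfies \IsAdjourned{}, and since the flag is never cleared, the session stays \adjourned{}.

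First I would identify a process that is guaranteed to invoke \SetVacantFlag{} on this node at a point after $t_0$. The natural candidate is the \emph{last} process to decrement the \size field of the node to zero: by the code of \TryToLeave{} (\linesref{trytoleave:size|decrement}{trytoleave:vacant}) and of the abort branch in \TryToEnter{} (\linesref{trytoenter:size|decrement}{trytoenter:vacant|if}), every \FAA{} that decrements \size is immediately followed by a call to \SetVacantFlag{}. Since \size{} reaches zero and stays zero (\lemref{close:zero}), the decrement that brings it to zero happens at some time $\ge t_0$ minus the step in between, and the subsequent \SetVacantFlag{} call executes after that decrement. I also need that at least one such decrementing process exists; if the session \size was already zero when the session \closed, then the process that set the second \guard{} flag (which exists by \lemref{conflicting:close} / definition of \closed{}) itself invokes \SetVacantFlag{} right afterwards, so the argument still goes through. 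In either case, some process $q$ invokes \SetVacantFlag($\node$) at a time $t_1 \ge t_0$.

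Next I would argue that this particular invocation of \SetVacantFlag{} succeeds in \adjourn[ing] the session, or else the session was already \adjourned{} before it. Inside \SetVacantFlag{}, $q$ reads the \state{} (\linref{vacant:state|read}); since the session is \closed, \IsClosed{} returns true, so $q$ does not return early at the first guard. It then reads $\node \pointer \size$: because $t_1 \ge t_0$, this read returns $0$, so $q$ does not return early at the second guard either. Hence $q$ reaches the \CAS{} at \linref{vacant:state|write}. If the \CAS{} succeeds, \VACANTFLAG is set and we are done. If it fails, it is because the \state{} word changed between $q$'s read and its \CAS{}; but after the session is \closed the only flag that can still be newly set is \VACANTFLAG{} (and \RETIREDFLAG{}, which by convention also sets \VACANTFLAG{}) — the \guard{} flags were already set, and \SetLeaderOrConflictFlag{} returns immediately without writing when a flag is already present. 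So a failed \CAS{} here means \VACANTFLAG{} (or \RETIREDFLAG{}) was set by someone else, i.e., the session is already \adjourned{}. Either way the session is \adjourned{} by time $t_1$ (plus $O(1)$ steps), and it remains so.

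The main obstacle I anticipate is the bookkeeping around the \emph{second guard read} of \size{} inside \SetVacantFlag{}: the read at \linref{vacant:state|read} and the read of \size{} happen at (slightly) different times, and one must be careful that the \CAS{}-based state word is consistent with what was read — in particular, that no process re-opens or spuriously re-increments the session between $q$'s two reads in a way that would invalidate the conclusion. This is handled by \lemref{close:zero} (\size{} is \emph{stably} zero after $t_0$, ruling out spurious re-increments at $t_1 \ge t_0$) together with the observation that a \closed{} session's \guard{} flags are monotone, so the only possible change to the \state{} word after $t_0$ is the very flag we are trying to set. A secondary subtlety is making precise ``the last process to decrement \size{} to zero'' when decrements and (bounded, by \lemref{close:spurious}) spurious increments interleave; I would phrase this purely in terms of the final, stable zero guaranteed by \lemref{close:zero} rather than tracking individual \FAA{} operations.
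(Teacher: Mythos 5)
Your proposal is correct and follows essentially the same route as the paper's (much terser) proof: the paper also observes that every guard-flag setting and every \size{} decrement is followed by a \SetVacantFlag{} attempt, and combines this with \lemref{close:zero} (and \lemref{guard|flag:constant}). Your version simply spells out the case analysis (stable zero before vs.\ after the close, and why a failed \CAS{} at \linref{vacant:state|write} can only mean the \vacant{} flag is already set) that the paper leaves implicit.
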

\begin{proof}
Whenever a process either sets one of the \guard flags in the session \state or decrements the session \size, it attempts to set the \vacant flag afterward.  The result then follows from \cref{lem:guard|flag:constant,lem:close:zero}.
\end{proof}

\begin{lemma}
\label{lem:adjourned|new}
Once a session is \adjourned, a new session is eventually established.
\end{lemma}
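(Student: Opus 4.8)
The plan is to argue by contradiction: assume that, after the session $S$ hosted by node $N$ \adjourn[s] (for some GME object $x$), no new session for $x$ is ever established --- equivalently, no further node is ever appended to the list of $x$. I would first derive that the head of this list is frozen at $N$ forever. A node is given a successor only after the inner while-do loop at \crefrange{line:trytoenter:spin:adjourned|begin}{line:trytoenter:spin:adjourned|end} of \TryToEnter{} has observed that very node to be \adjourned, so $N$ gains no successor before $S$ \adjourn[s], and by the assumption none afterwards either; since the head only ever advances onto a freshly appended node (via \AppendNode{}), it reaches $N$ when $S$ is established and stays there. Hence, from this point on, the only session any process can read via \ReadHead{} is $S$.

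Next I would invoke \lemref{adjourn:closed} and \lemref{close:conflicting}: since $S$ \adjourn[ed], it must first have \close[d], and for an \open session to \close[] the system must contain a conflicting request, i.e., a request for $x$ for a session other than that of $S$ --- say from a process $q$. At that instant $q$ must be in its \entry section: it cannot be inside the critical or \exit section of $S$ (it does not share that session), nor of an earlier session on the list (such a session has already \adjourn[ed], which by \lemref{joined|adjourn:exit} is impossible while a participant remains). I would then argue that $q$ never leaves its \entry section: by \lemref{increment|open:joins} a process joins a session only if it is compatible with its request, and the only session $q$ will ever see is the incompatible $S$; and $q$ cannot enter as a leader, since that would append $q$'s node. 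So $q$ loops forever in the outer while-do loop of \TryToEnter{}; being live and with every step of a single iteration terminating --- \ReadHead{}, the constant-time branch work (\lemref{guard|flag:constant} for \SetLeaderOrConflictFlag{}), the inner while-do loop (which exits at once because $S$ is \adjourned), \TestHead{}, and \AppendNode{} --- $q$ completes a full iteration of the outer loop after $S$ has \adjourn[ed].

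To finish, \lemref{outer|while:new|session} says that at the end of that iteration $q$ either joins the current session or a new session is established; the former is impossible, so a new session is established, contradicting the assumption. I expect the main obstacle to be the first two steps: establishing that the head is truly frozen at $N$ (so that the only reachable session is $S$), and that the conflicting request guaranteed by \lemref{close:conflicting} belongs to a process that is, and remains, in its \entry section. Once those are in place, the remainder is a routine trace of the \entry section code against \lemref{increment|open:joins} and \lemref{outer|while:new|session}.
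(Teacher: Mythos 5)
Your proof is correct and follows the same essential route as the paper's (much terser) argument: the session could only have \close[d]---and hence \adjourn[ed]---because of a conflicting request (\lemref{adjourn:closed} and \lemref{close:conflicting}), and the process holding that request cannot be satisfied by the current or any earlier session, so it eventually falls through its spin loop and appends a new node. Your contradiction framing and the explicit appeal to \lemref{outer|while:new|session} simply make rigorous what the paper compresses into ``clearly, this implies that at least one process tries to append a new node.''
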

\begin{proof}
A session \close[s] (and hence \adjourn[s]) only if there is a conflicting request in the system. Clearly, this implies that, after a session is \adjourned, at least one process in the system  tries to append a new node to the list (and establish a new session).
\end{proof}


Finally, we have

\begin{theorem}[lockout freedom]
The GME algorithm satisfies the lockout freedom property.
\end{theorem}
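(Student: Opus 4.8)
The plan is to argue by contradiction, fixing a single GME object as in the rest of the correctness proof. Suppose some process $p_i$ announces a request (\cref{line:initializenode:announce}) that is never fulfilled, so $p_i$ executes the outer while-do loop of its \entry{} section (\crefrange{line:trytoenter:while|begin}{line:trytoenter:while|end}) forever. Since $p_i$ is live it keeps taking steps, and since it never returns from \TryToEnter{} it never reaches \cref{line:trytoenter:retire|leader} or \cref{line:trytoenter:retire|follower}; in particular it never invokes \ReclaimNode{} on its own node, so that node is never \retire[d] and $\announceArray[\myid]$ keeps pointing to it throughout.

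The first and main step is to show that, while $p_i$ is stuck, infinitely many new sessions are established on $p_i$'s list. Consider any iteration of the outer loop that $p_i$ begins, and let $C$ be the node $p_i$ reads as the current head (\cref{line:trytoenter:readhead}); we may assume $C$ is not $p_i$'s own node, as otherwise $p_i$ would enter its critical section. If $\sessionOf{C}$ is compatible with $p_i$'s request and \open, then after the \FAA{} at \cref{line:trytoenter:size|increment} the session is either still \open---whence $p_i$ joins it (\cref{line:trytoenter:joined}) and its request is fulfilled, a contradiction---or it has \closed. In every remaining case $p_i$ either sets the \CONFLICTFLAG{} flag (\cref{line:trytoenter:conflict}), so the system then contains a request conflicting with $\sessionOf{C}$ and, by \lemref{conflicting:close}, $\sessionOf{C}$ eventually \close[s], or $p_i$ directly observes $\sessionOf{C}$ is already \closed. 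Either way $\sessionOf{C}$ \close[s], so by \lemref{close:adjourn} it eventually \adjourn[s]; $p_i$ then leaves the spin at \crefrange{line:trytoenter:spin:adjourned|begin}{line:trytoenter:spin:adjourned|end}, and by \lemref{adjourned|new} a new session is eventually established---by $p_i$ via \AppendNode{} if \TestHead{} still holds, or by some other process otherwise. Since \AppendNode{} contains no loop and ends by calling \UpdateHead{}, establishing a new session advances the head of the list. Hence the head advances infinitely often.

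The second step derives the contradiction from \lemref{atmost|n+1}, which states that at most $\n+1$ new sessions can be established after $p_i$ announces its request until that request is fulfilled. Since the first step produces more than $\n+1$ new sessions while $p_i$ is still waiting, $p_i$'s request must be fulfilled, contradicting the assumption. Concretely, within $\n+1$ new sessions the round-robin sequence number stored in the head node reaches $i$; since $p_i$'s node is never \retire[d] and $\announceArray[\myid]$ still points to it with matching instance, every process then executing \AppendNode{} appends exactly $p_i$'s node. Once $p_i$'s node becomes the head it remains the head until $p_i$ finishes its \exit{} section (by \lemref{not|adjourned:no|append} and \lemref{joined|adjourn:exit}), so the next iteration in which $p_i$ reads the head (\cref{line:trytoenter:readhead}) passes the test at \cref{line:trytoenter:node|match} and $p_i$ enters its critical section---again a contradiction.

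I expect the second step to be mostly bookkeeping against the cited lemmas; the real work is the first step, showing the system keeps making progress while $p_i$ waits. Its delicate points are: (i) ruling out that $p_i$ is compatible with an \open session it nonetheless fails to join, which is handled by the increment-then-recheck pattern together with \lemref{increment|open:joins}; (ii) showing the session $p_i$ spins on is guaranteed to \adjourn[], which rests on the observation that whenever $p_i$ does not join the current session its own pending request must conflict with that session---so \lemref{conflicting:close} applies---combined with \lemref{close:adjourn}; and (iii) relating the establishment of a new session to the head pointer's advancing so that \lemref{atmost|n+1} can be invoked.
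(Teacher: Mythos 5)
Your proof is correct and follows essentially the same route as the paper's: it chains \lemref{conflicting:close}, \lemref{close:adjourn}, and \lemref{adjourned|new} to show that while a request is pending either the process joins the current session or a new session is established, and then invokes \lemref{atmost|n+1} to bound the number of such sessions. The paper states this in two sentences; your version merely frames it as a contradiction and fills in the case analysis and the round-robin bookkeeping explicitly.
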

\begin{proof}
As long as a process has an outstanding request, \cref{lem:conflicting:close,lem:close:adjourn,lem:adjourned|new} imply that eventually 
either the process is able to join the session or the current session is \adjourned and a new session is established. 
The lockout freedom then follows from \cref{lem:atmost|n+1}.
\end{proof}

\subsubsection{Complexity Analysis}
\label{sec:complexity|analysis}

In this section, as in the previous section, unless explicity mentioned, we focus on a single GME object. Our complexity analysis easily carries over to multiple GME objects.

\begin{theorem}[worst case context switch complexity]
\label{thm:worst:csc}
The context switch complexity of a passage $\pi$ is at most $\min\{\ci(\pi),\n\}+1$ in the worst case.
\end{theorem}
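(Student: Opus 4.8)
The plan is to bound the number of distinct sessions that can be established during the passage $\pi$ by two separate arguments and take the minimum. First, I would invoke \cref{lem:atmost|n+1}: once the process executing $\pi$ has announced its request (which happens at the very start of the \entry{} section, in \InitializeNode{}), at most $\n+1$ new sessions can be established before its request is fulfilled. Since the passage ends when the process finishes its \exit{} section, and the request is fulfilled strictly before the \exit{} section begins, this immediately gives a bound of $\n+1$ on the number of sessions established while the process is waiting in its \entry{} section; counting the session the process itself eventually joins, the context switch complexity is at most $\n+1$. So one half of the bound, $\n+1$, is essentially a direct corollary of \cref{lem:atmost|n+1}.

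Next I would establish the $\ci(\pi)+1$ bound. The key observation is that, by \cref{lem:not|adjourned:no|append}, no new node can be appended to the list until the current session has \adjourned{}, and by \cref{lem:close:conflicting} (together with \cref{lem:adjourn:closed}) a session can \adjourn{} only after it has \closed{}, which in turn requires a conflicting request in the system. So every time a \emph{new} session is established during $\pi$, the session that was current just before had to transition open $\to$ closed $\to$ adjourned, and this transition is triggered by some conflicting request. I want to charge each newly established session to a distinct passage that overlaps $\pi$ and involves the same GME object. The natural choice is to charge a session to the \emph{leader} of that session — i.e., to the passage whose node was appended to create it. Since these leader-passages are all in progress at a moment overlapping with $\pi$ (the node is appended, and the head advanced, while $\pi$ is still outstanding), and they involve the same GME object, each such passage counts toward $\ci(\pi)$. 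The passage $\pi$ itself may be the leader of at most one of these sessions (the one it eventually joins as a leader, if it does so), so the number of sessions established during $\pi$, other than possibly $\pi$'s own, is at most $\ci(\pi)$; adding $1$ for $\pi$'s own session yields at most $\ci(\pi)+1$.

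The main obstacle I anticipate is making the ``charge each session to a distinct overlapping passage'' argument airtight. The subtlety is the helping mechanism: the node appended to establish a session need not belong to the process that performs the append; it may be a helpee's node (\FetchNode{}). So ``leader of the session'' must be defined as the \emph{owner} of the appended node, and I must verify (i) that by \cref{lem:append:same|pending} this owner has an outstanding request for the same GME object at the moment of the append, hence its passage overlaps $\pi$ and is counted by $\ci(\pi)$, and (ii) that two distinct established sessions have distinct owners — which should follow because a node, once appended and then \retired{}, is not appended again (we are in the regime where nodes are never reused), so each passage contributes at most one appended node. A secondary point to handle carefully is the boundary: ensuring the sessions being counted are exactly those established strictly \emph{after} the start of $\pi$ and \emph{before} (or at) the moment $\pi$'s request is fulfilled, so that the ``$+1$'' for $\pi$'s own joined session is not double-counted. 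Once these bookkeeping points are settled, the theorem follows by taking the minimum of the two bounds.
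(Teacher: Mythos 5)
Your proposal is correct and follows essentially the same route as the paper: the $\n+1$ bound is read off from \cref{lem:atmost|n+1}, and the $\ci(\pi)+1$ bound comes from charging each session established during $\pi$ to the leader of that session, whose passage (by \cref{lem:append:same|pending}) involves the same GME object and overlaps $\pi$. The paper states this in two sentences; your elaboration of the owner-of-the-appended-node subtlety and the distinctness of the charged passages is a faithful filling-in of the same argument.
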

\begin{proof}
\Cref{lem:atmost|n+1} implies that at most $\n+1$ new sessions can be established after a process has announced its request and before it is able to enter its critical section. Moreover, if a new session is established while a process is waiting to enter its critical section, then, clearly, the leader of that session has a request whose passage overlaps with that of the given process. 
\end{proof}

The main result in~\cite{GibGra:2015:DISC} implies that

\begin{theorem}[amortized case context switch complexity]
\label{thm:amortized:csc}
The context switch complexity of a passage $\pi$ is at most $\cp(\pi) + 1$ in the amortized case.
\end{theorem}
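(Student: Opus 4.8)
The plan is to derive this bound directly from the worst-case bound of \thmref{worst:csc} by invoking the amortization result of~\cite{GibGra:2015:DISC}, which converts a worst-case bound phrased in terms of interval contention into an amortized bound phrased in terms of point contention, provided the worst-case bound arises in a suitable combinatorial form. The first step is therefore to re-read the proof of \thmref{worst:csc} and extract the structural statement it actually establishes: not merely that the context-switch complexity of $\pi$ is at most $\min\{\ci(\pi),\n\}+1$, but that \emph{each} new session counted toward it can be attributed to a \emph{distinct} passage --- namely the passage of that session's leader --- whose execution overlaps $\pi$, and, more precisely, that $\pi$ and the leader's passage are simultaneously in progress at the instant that session is established.

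With that structural fact in hand, the second step is to phrase the contribution of $\pi$ to the total context-switch cost of an execution as an injective charging: every new session established while $\pi$ is waiting is charged to the instant at which its leader's passage coexists with $\pi$. Summed over all passages of the execution, this charging never counts against a single passage $\pi$ more than the number of passages simultaneously in progress with $\pi$ at some instant during $\pi$ --- at most $\cp(\pi)$ of them --- up to the same additive constant that already appears in the worst-case bound, which is exactly $\cp(\pi)+1$. This is precisely the situation covered by the main theorem of~\cite{GibGra:2015:DISC}, so the proof then amounts to checking that its hypotheses are satisfied and applying it.

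The main obstacle, I expect, is not the high-level idea but making the charging argument match the preconditions of the cited theorem exactly. Two points need care. First, \lemref{append:same|pending} together with \lemref{atmost|n+1} must be used to guarantee that the leaders attributed to the sessions counted toward $\pi$ are genuinely distinct passages (a process leads at most one session per passage, and no ``useless'' session is ever established), so that the charging map is injective. Second, the cited amortization meta-theorem typically requires the charged events to be ``local'' to $\pi$ in a technical sense --- roughly, that at each charged instant the passages being charged all overlap $\pi$ and the attribution is made consistently across passages --- and one must verify that the session-establishment events of our algorithm satisfy this, so that the bound collapses to $\cp(\pi)+1$ rather than to a constant-factor blow-up such as $2\cp(\pi)+O(1)$ that a naive interval-graph edge count would yield. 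Once those two verifications are in place, the theorem follows immediately.
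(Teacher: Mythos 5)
Your proposal takes essentially the same route as the paper, which gives no argument of its own beyond stating that the theorem follows from the main result of~\cite{GibGra:2015:DISC} applied to the charging underlying \thmref{worst:csc} (each session counted against $\pi$ is attributed to the distinct, overlapping passage of that session's leader at the instant the session is established). Your elaboration of the injective charging and the hypothesis-checking is simply the detail that the paper leaves implicit in the citation.
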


\begin{lemma}
\label{lem:entry:sessions}
Let $\s$ denotes the number of sessions that overlap with the \entry section of a process. Then the process  performs only $O(\s)$ remote references in its \entry section. 
\end{lemma}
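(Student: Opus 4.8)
The plan is to bound the number of remote references by analyzing one full iteration of the outer while-do loop at \crefrange{line:trytoenter:while|begin}{line:trytoenter:while|end} and then multiply by the number of iterations. First I would argue that the number of iterations of the outer loop is $O(\s)$: by \cref{lem:outer|while:new|session}, at the end of each iteration either the process joins the current session (in which case it terminates the loop) or a new session is established; and each newly established session, being a session that overlaps with the process's \entry section, contributes to the count $\s$. Hence the outer loop runs at most $O(\s)$ times before the process joins some session and returns.

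Next I would show that each iteration of the outer loop incurs only $O(1)$ remote references. Within one iteration the process calls \ReadHead{} once (one remote reference), reads and tests the session state a constant number of times, possibly performs a constant number of \FAA{} instructions on $\current \pointer \size$, and invokes \SetLeaderOrConflictFlag{} and \SetVacantFlag{} — each of which performs $O(1)$ remote references by \cref{lem:guard|flag:constant}. The only place where an iteration could perform an unbounded number of steps is the inner spin loop at \crefrange{line:trytoenter:spin:adjourned|begin}{line:trytoenter:spin:adjourned|end}, where the process busy-waits for $\current \pointer \gate$ to become \adjourned. This is where the CC model is essential: the process spins on the single shared word $\current \pointer \gate$, so under the cache-coherent model the entire spin generates at most two remote references (one to cache the word, one when the cached copy is invalidated by the write that sets the \VACANTFLAG flag), regardless of how many times the loop body executes. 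Finally, the call to \AppendNode{} at the end of an iteration performs only $O(1)$ remote references — a constant number of reads, one \CAS{} on $\current \pointer \nextinlist$, writes to fields of $\successor$, and one \CAS{} in \UpdateHead{} via \TestHead{}.

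Combining the two bounds, the \entry section performs $O(\s)$ iterations each costing $O(1)$ remote references, so the total is $O(\s)$, as claimed.

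The main obstacle I anticipate is making the ``each iteration costs $O(1)$ RMRs'' claim airtight for the inner spin loop. One has to be careful that the process genuinely spins on a \emph{single} fixed memory word (the state field of $\current$) and that this word is written only finitely often by other processes before the spin terminates — so that only a bounded number of invalidations occur. Since $\current$ is fixed for the duration of the iteration and the state word is monotone (flags are only ever set, never cleared, until retirement), the number of distinct writes to it is $O(1)$, and so the CC-model accounting of two RMRs per spin is valid. The other subtlety is the bound on the outer-loop count: I would need to confirm via \cref{lem:not|adjourned:no|append} that a new iteration is entered only after a new session is appended (or the process joins), so that the iteration count is indeed dominated by $\s$ rather than being unbounded.
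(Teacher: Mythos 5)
Your proposal is correct and follows essentially the same route as the paper's proof: bound the outer-loop iterations by $O(\s)$ via \cref{lem:outer|while:new|session}, charge $O(1)$ remote references to the straight-line code of each iteration, and argue that the inner spin on the single, monotone \state{} word of the fixed node $\current$ costs only $O(1)$ RMRs under the CC model because the flags are set but never reset. The extra care you take with \AppendNode{} and \SetLeaderOrConflictFlag{} is subsumed in the paper's ``$O(1)$ instructions outside the inner loop'' remark, so nothing further is needed.
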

\begin{proof}
\Cref{lem:outer|while:new|session} implies that a process performs at most $\s$ iterations of the outer while-do loop at \crefrange{line:trytoenter:while|begin}{line:trytoenter:while|end} in its \entry section.
In every iteration, a process performs at most $O(1)$ instructions outside of the inner while-do loop  at \crefrange{line:trytoenter:spin:adjourned|begin}{line:trytoenter:spin:adjourned|end}. While spinning in the inner-while loop, it reads the contents of the session \state (of the node pointed 
to by the head pointer) repeatedly. 
The session \state consists of four flags which, once set, are never reset (assuming no memory reclamation). Thus reading the session \state repeatedly in the loop is also responsible for only $O(1)$ remote references per list node. 
\end{proof}

\begin{lemma}
\label{lem:exit:constant}
A process performs only $O(1)$ remote references in its \exit section.
\end{lemma}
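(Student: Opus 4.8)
The plan is to bound the number of remote references by a simple inspection of the \exit{} section code, namely the \TryToLeave{} method (\crefrange{line:trytoleave|begin}{line:trytoleave|end}). The body of \TryToLeave{} is straight-line code except for the single loop buried inside \SetLeaderOrConflictFlag{}. First I would enumerate the shared-memory accesses on the straight-line part: \ReadHead{} performs one read of $\headArray[\myinstance]$ (\cref{line:trytoleave:readhead}); the conditional reads $\current \pointer \owner$ once; the \FAA{} on $\current \pointer \size$ is one access (\cref{line:trytoleave:size|decrement}); and \SetVacantFlag{} performs at most one read of $\current \pointer \gate$, at most one read of $\current \pointer \size$, and at most one \CAS{} on $\current \pointer \gate$ (\crefrange{line:vacant:state|read}{line:vacant:state|write}), with no loop. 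That accounts for $O(1)$ shared-memory accesses.

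Next I would handle \SetLeaderOrConflictFlag{} (\cref{line:trytoleave:leader}), the only place a loop can be entered from the \exit{} section. By \cref{lem:guard|flag:constant}, its while-do loop body runs $O(1)$ times per invocation, and each iteration performs only one read of $\node \pointer \gate$ and at most one \CAS{} on $\node \pointer \gate$; hence this method too contributes only $O(1)$ shared-memory accesses. Combining the two parts, the entire \exit{} section performs only $O(1)$ shared-memory accesses and contains no unbounded spin.

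Finally, I would invoke the standard fact that, under the CC model, each shared-memory access generates at most one remote memory reference; since the \exit{} section performs $O(1)$ such accesses, it generates $O(1)$ RMRs, as claimed. The argument is essentially a routine code walk; the one nontrivial ingredient is \cref{lem:guard|flag:constant}, which eliminates the only potential source of unbounded looping, so I do not anticipate any real obstacle --- this lemma is essentially the RMR analogue of the \emph{bounded exit} theorem's proof.
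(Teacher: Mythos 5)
Your proposal is correct and follows essentially the same route as the paper's proof: identify \SetLeaderOrConflictFlag{} as the only loop in the \exit{} section, bound its iterations via \cref{lem:guard|flag:constant}, and observe that the rest is straight-line code with $O(1)$ shared accesses. The paper states this more tersely, but the argument is the same.
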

\begin{proof}
The only loop in the \exit section is in \SetLeaderOrConflictFlag{} method, which is invoked only once. The result then follows from \cref{lem:guard|flag:constant}.
\end{proof}

\begin{theorem}[RMR complexity]
The RMR  complexity of \entry and \exit sections of a passage $\pi$ is $O(\min\{\ci(\pi),\n\})$ in the worst-case and $O(\cp(\pi))$ in the amortized case.
\end{theorem}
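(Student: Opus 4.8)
The plan is to reduce the theorem to the two structural lemmas on remote references that were already proved, \lemref{entry:sessions} and \lemref{exit:constant}, and then feed in the context-switch bounds of \thmref{worst:csc} and \thmref{amortized:csc}. Recall that \lemref{entry:sessions} says a process performs $O(\s)$ remote references in its \entry section, where $\s$ is the number of sessions overlapping that \entry section (equivalently, by \lemref{outer|while:new|session}, the number of iterations of the outer while-do loop), and \lemref{exit:constant} says the \exit section incurs only $O(1)$ remote references. So the whole task reduces to bounding $\s$ for a passage $\pi$ in terms of $\ci(\pi)$ and $\cp(\pi)$.

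The bridge is the inequality $\s \le \mathit{csc}(\pi) + 1$, where $\mathit{csc}(\pi)$ denotes the context-switch complexity of $\pi$: the session that is current when $\pi$ begins its \entry section accounts for the ``$+1$'', while every other session overlapping the \entry section is \emph{established} while the process is already waiting and is therefore counted by $\mathit{csc}(\pi)$ (the session the process ultimately joins is one of these, and the \exit section overlaps only that same session, so it contributes no new ones; in any case it is $O(1)$ by \lemref{exit:constant}). Plugging in \thmref{worst:csc} gives $\s \le \min\{\ci(\pi),\n\}+2 = O(\min\{\ci(\pi),\n\})$, so by \lemref{entry:sessions} the \entry section costs $O(\min\{\ci(\pi),\n\})$ remote references in the worst case, and adding the $O(1)$ \exit cost yields the claimed worst-case bound.

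For the amortized case I would invoke \thmref{amortized:csc}, which (via the main result of~\cite{GibGra:2015:DISC}) bounds $\mathit{csc}(\pi)$ by $\cp(\pi)+1$ amortized, i.e., $\sum_j \mathit{csc}(\pi_j) \le \sum_j(\cp(\pi_j)+1)$ over any batch of passages $\pi_1,\dots,\pi_k$ touching the GME object. Since the bound of \lemref{entry:sessions} is $c\,\s$ for a \emph{fixed} constant $c$ and the per-passage inequality $\s_j \le \mathit{csc}(\pi_j)+1$ holds, summing gives $\sum_j(\text{\entry-RMRs of }\pi_j) \le c\sum_j \s_j \le c\sum_j(\mathit{csc}(\pi_j)+1) = O\!\bigl(\sum_j \cp(\pi_j)\bigr)$; tacking on the $O(1)$ per-passage \exit cost preserves this, giving amortized \entry/\exit RMR complexity $O(\cp(\pi))$.

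The step I expect to be the most delicate is making the amortization transfer fully rigorous: one must be careful that the linear relation $\text{RMRs}=O(\s)$ really has a uniform constant, that the additive $+1$ terms are correctly bookkept across a batch of passages, and that composing a per-passage inequality with the amortized statement of \thmref{amortized:csc} is legitimate. A minor point worth noting explicitly is that \lemref{entry:sessions} (and hence this theorem) is applied in the no-memory-reclamation setting of this section, where the session-state flags are monotone, so under the CC model each spin on the session state costs only $O(1)$ RMRs per list node.
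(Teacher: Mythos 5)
Your proposal is correct and follows essentially the same route as the paper: the paper likewise bounds the number of sessions overlapping the \entry section by one plus the context-switch complexity of the passage and then combines \thmref{worst:csc}, \thmref{amortized:csc}, \lemref{entry:sessions} and \lemref{exit:constant}. Your write-up merely spells out the summation for the amortized case in more detail than the paper does.
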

\begin{proof}
The number of sessions that overlap with the \entry section of a process is upper-bounded by one plus the context-switch complexity of the corresponding passage.
The result then follows from \cref{thm:worst:csc,thm:amortized:csc,lem:entry:sessions,lem:exit:constant}.
\end{proof}

\begin{theorem}[concurrent entering step complexity]
The maximum number of steps a process has to execute in its \entry and \exit sections provided all current and future requests are for the same session is $O(1)$.
\end{theorem}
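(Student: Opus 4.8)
The plan is to bound the step complexity of the \entry and \exit sections separately under the assumption that all current and future requests are for the same session, i.e., the system is in a \homogeneous{} state throughout the passage (or at least from the point the process starts its \entry section). For the \exit section, the bound is immediate: \cref{lem:exit:constant} (or equivalently the bounded-exit theorem together with \cref{lem:guard|flag:constant}) already shows that the \exit section performs only $O(1)$ steps, since the only loop there is in \SetLeaderOrConflictFlag{}, invoked once, and that loop runs $O(1)$ times. So the real content is the \entry section.

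For the \entry section, I would invoke the \homogeneous-state machinery already developed. First, by \cref{lem:homogeneous|outer:twice}, once the system is in a \homogeneous{} state when the process begins an iteration of the outer while-do loop at \crefrange{line:trytoenter:while|begin}{line:trytoenter:while|end}, the process executes the body of that loop at most twice. So it suffices to show that each single iteration of the outer loop takes $O(1)$ steps. Within one iteration, everything outside the inner spin-loop at \crefrange{line:trytoenter:spin:adjourned|begin}{line:trytoenter:spin:adjourned|end} is a constant number of straight-line operations plus calls to \ReadHead{}, \ReclaimNode{}, \SetLeaderOrConflictFlag{}, \SetVacantFlag{}, \TestHead{}, and \AppendNode{}; by \cref{lem:guard|flag:constant} the loops inside these helper methods run $O(1)$ times, and \AppendNode{} itself is loop-free, so this part contributes $O(1)$ steps. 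The only remaining danger is the inner spin-loop waiting for the session to become \adjourned{}. Here I would apply \cref{lem:homogeneous|inner:once}: if the system is \homogeneous{} at the start of an iteration of the inner loop, the process executes the body of that loop at most once, hence the spin-loop terminates after $O(1)$ iterations. Since \homogeneity{} is stable (once the system enters a \homogeneous{} state it stays in one), and since by hypothesis all current and future requests are for the same session from the outset, the hypotheses of both \cref{lem:homogeneous|outer:twice} and \cref{lem:homogeneous|inner:once} are met throughout the passage.

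Combining: the \entry section executes $O(1)$ outer iterations, each costing $O(1)$ steps (including $O(1)$ spin-loop iterations), for $O(1)$ total; the \exit section is $O(1)$ by \cref{lem:exit:constant}. Hence the concurrent entering step complexity is $O(1)$.

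The main obstacle I anticipate is purely a matter of bookkeeping rather than a genuine difficulty: one must be careful that ``all current and future requests are for the same session'' is exactly the hypothesis ``the system is in a \homogeneous{} state'' used by \cref{lem:homogeneous|outer:twice} and \cref{lem:homogeneous|inner:once}, and that this hypothesis holds at the start of \emph{every} outer-loop iteration and \emph{every} inner-loop iteration the process executes (which follows from stability of \homogeneity{}). A secondary point worth stating explicitly is that the step count — not just the RMR count — is bounded: \cref{lem:entry:sessions} only counts remote references, so one cannot simply cite it; instead one must note that the spin-loop body does constant work per iteration and, by \cref{lem:homogeneous|inner:once}, runs $O(1)$ times, which bounds actual steps and not merely RMRs.
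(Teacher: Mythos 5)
Your proposal is correct and follows essentially the same route as the paper: both bound the \entry{} section by combining \cref{lem:homogeneous|outer:twice} (at most two outer iterations) with \cref{lem:homogeneous|inner:once} (constant spinning per iteration), and dispose of the \exit{} section as an immediate $O(1)$ bound. Your additional remarks --- that the helper-method loops are bounded via \cref{lem:guard|flag:constant}, and that one must count steps rather than merely RMRs --- are careful elaborations of details the paper leaves implicit, not a different argument.
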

\begin{proof}
Assume that the system is in a \homogeneous{} state. We first analyze the \entry{} section of the process. 
\Cref{lem:homogeneous|inner:once} implies that, during one iteration of the outer while-do loop,  the process executes only $O(1)$ iterations of the  inner while-do loop.  Further, \cref{lem:homogeneous|outer:twice} implies that, the process executes only $O(1)$ iterations of the outer while-do loop. Thus,  the process executes
only $O(1)$ steps in its \entry{} section after the system has entered a \homogeneous{} state. 
Clearly, a process executes only $O(1)$ in its \exit{} section.
\end{proof}


\subsection{Achieving Space Efficiency}
\label{sec:space|efficient:algorithm}

To achieve space efficiency, we describe a way to reuse/recycle nodes across \emph{all} GME objects in a safe manner while adding only $O(1)$ steps to each passage. 

Consider a node that is used to establish a new session. Note that the session may stay ``active'' long after the owner of the node (also the leader of the session) has left its critical section. To address this issue, the leader of a session, on leaving its critical section, relinquishes the ownership of its node and instead claims the ownership of its predecessor node. 
This is similar to the approach used in the well-known queue-based mutual exclusion algorithm presented in~\cite{Cra:1993:TR,MagLan+:1994:IPPS}. 
On the other hand, if a process joins a session as a follower, it retains the ownership of its node. 
In both cases, the node is considered to be retired and is not used to establish a new session.

\myparagraph{Claiming the ownership of the predecessor node}
Claiming the ownership of the predecessor node is relatively straightforward in~\cite{Cra:1993:TR,MagLan+:1994:IPPS} because, unlike in our algorithm, only one process is holding a reference to the predecessor node when it is reclaimed. In our algorithm, multiple processes may be holding a reference to the predecessor node because of the helping mechanism used to achieve starvation-freedom. Note that a node may be appended to the list by any process in the system and not necessarily by its owner. 

Note that, when a node is appended to the list, we store a pointer at the node to its predecessor. When the owner of the appended node (also the leader of the session associated with the node) leaves its critical section, it can use this pointer to access the predecessor node and claim it as its own node. 

\myparagraph{Reusing a retired node}

To determine when it is safe to reuse a node, we use a variant of the well known memory reclamation technique based on \emph{hazard pointers} first presented in~\cite{Mic:2004:TPDS}. The technique works as follows. Each process maintains information about the set of objects it is dereferencing currently or will dereference in the future, and hence it is ``hazardous'' to reclaim their memory. 
A process can reuse or recycle an object only if no process has declared it as a hazard pointer. 
To declare a hazard pointer, a process performs the following sequence of steps repeatedly until it succeeds: it first reads the address of the node it wishes to dereference, it then writes the address to a shared location (visible to other processes) and finally ascertains that the reference is still 
``valid'' for the task to be performed and that the task has not completed yet. If the validation succeeds in the last step, then it can be shown that, even if the node is \retired later, it cannot be reused as long as it is declared as a hazard pointer.

The above algorithm increases the step complexity of an operation by $O(1)$ in the amortized case but $O(\n)$ in the worst-case (assuming that each process holds only $O(1)$ hazard pointers). One disadvantage of the algorithm is that, when used to manage memory in a concurrent algorithm with wait-free operations, it weakens the progress guarantee of an operation from wait-freedom to lock-freedom. Aghazadeh \emph{et al.} improve upon the  above memory reclamation algorithm in two ways in~\cite{AghGol+:2014:PODC}. First, their algorithm increases the step complexity of an operation by only $O(1)$ in the \emph{worst-case}. Second, it does not degrade the progress guarantee of the underlying concurrent algorithm; wait-free operations remain wait-free.  

In our case, the mechanism used in the previous section to achieve starvation-freedom is not impacted by the memory reclamation algorithm based on hazard pointers. Thus we only focus on the lock-free version of Aghazadeh \emph{et al.}'s algorithm that guarantees the first property only, which works as follows. Each process maintains
a pool of $\Theta(\n)$ objects; the ownership of an object is fixed and does not change at run time. To identify which objects in its pool can be reused, a process scans the hazard pointers of processes in a \emph{lazy} manner; specifically, during each operation, it scans the hazard pointer(s) of only one process. An object can be reused if the following two conditions hold:
\begin{enumerate*}[label=(\alph*)]
\item it was \retired before the last $\n$ operations (\emph{i.e.}, the hazard pointer of each process was scanned \emph{after}
the node was \retired), and 
\item no process was found to hold a reference to it in its list of hazard pointers during the last $\n$ operations.
\end{enumerate*}

Note that Aghazadeh \emph{et al.}'s algorithm cannot be directly used to manage memory of nodes in our case because, in our GME algorithm, the ownership of a node may change over time, and some nodes, namely the head nodes of lists, are not owned by any process but by their respective objects. We adapt the lock-free version of their algorithm to work in our case, while adding only $O(1)$ cost to the complexity of each passage, as follows.

\subsubsection{A Lazy Memory Reclamation Algorithm}


\begin{algorithm}[!tp]
\tcp{add the following field to node structure}
\Indp
\Integer $\condition$\tcp*[r]{the \condition of the node with respect to memory reclamation}
\Indm
\label{line:node:condition}
\BlankLine\BlankLine
\Additional \Shared \\
\Indp
$\hpArray$: \Array[$1\ldots\n$][$1{\ldots}2$] of NodePtr\tcp*[r]{to store hazard pointers}
\label{line:hp|array}
\Indm
\BlankLine\BlankLine
\Additional \Private \\
\Indp 
$\poolArray$: \Array[$1\ldots\n$][$1{\ldots}2$][$1{\ldots}3\n$] of NodePtr\tcp*[r]{to store pools of nodes}
\label{line:pool|array}
$\whichArray$: \Array[$1\ldots\n$] of \Integer{}\tcp*[r]{to indicate which of the two pools is \apool{}}
\label{line:which|array}
$\markerArray$ \Array[$1\ldots\n$] of \Integer{}\tcp*[r]{pointer to the first \clean node in the \apool pool}
\label{line:marker|array}
\Indm
\BlankLine\BlankLine
\Initialization of additional variables \\
\Begin{
	\tcp{initialize shared variables}
	\lForEach{$i \in [1,\n], j \in [1,2]$}
	{  $\hpArray[i][j]$ := \Null}
	\BlankLine
	\tcp{initialize private variables}
	\ForEach{$i \in [1,\n]$}
	{
	   \ForEach{$j \in [1,2], k \in [1,3\n]$}
	   {
	      $\poolArray[i][j][k]$ := \New Node\tcp*[r]{create a new node}
	      $\poolArray[i][j][k] \pointer \owner$ := $\myid$\tcp*[r]{set the owner as myself}
	   }
	   $\whichArray[i]$ := 1\tcp*[r]{designate $\poolArray[i][1]$ as \apool{}}
	   $\markerArray[i]$ := 1\tcp*[r]{designate $\poolArray[i][1][1]$ as the first \clean node}
	}
}
\BlankLine\BlankLine
\tcp{changes to \ReadHead{} method - replace \cref{line:readhead:read} with \crefrange{line:readhead:repeat|begin}{line:readhead:repeat|end}}
   \Repeat{($\snapshotArray[\myid] = \headArray[\instance]$)}{
      \label{line:readhead:repeat|begin}
      $\snapshotArray[\myid]$ := $\headArray[\instance]$\tcp*[r]{read the current head pointer of the list}
      $\hpArray[\myid][1]$ := $\snapshotArray[\myid]$\tcp*[r]{declare it as a hazard pointer}
      \label{line:readhead:declare|hp}
      
   }
   \label{line:readhead:repeat|end}
\BlankLine\BlankLine
\tcp{changes to \InitializeNode{} method - replace \crefrange{line:initializenode:new}{line:initializenode:owner} with \crefrange{line:initializenode:reuse}{line:initializenode:dirty}}
NodePtr $\node$ := $\poolArray[\myid][\whichArray[\myid]][\markerArray[\myid]]$\tcp*[r]{get a \clean node from the \apool pool}
\label{line:initializenode:reuse}
$\node \pointer \condition$ := \DIRTYFLAG{}\tcp*[r]{mark it as \dirty{}}
\label{line:initializenode:dirty}
\BlankLine\BlankLine
\tcp{changes to \FetchNode{} method - insert \crefrange{line:fetchnode:declare|hp}{line:fetchnode:validate|hp} after \cref{line:fetchnode:helpee}}
$\hpArray[\myid][2]$ := $\helpee$\tcp*[r]{declare reference to the helpee's node as a hazard pointer}
\label{line:fetchnode:declare|hp}
\lIf(\tcp*[f]{request already fulfilled}){($\announceArray[\snapshotArray[\myid] \pointer \counter] \neq \helpee$)}{\Return $\mine$}
\label{line:fetchnode:validate|hp}
\BlankLine\BlankLine
\tcp{changes to \AppendNode{} method - insert \cref{line:appendnode:declare|hp} after \cref{line:appendnode:successor}}
$\hpArray[\myid][2]$ := $\successor$\tcp*[r]{declare reference to the successor node as a hazard pointer}
\label{line:appendnode:declare|hp}
\BlankLine\BlankLine
\tcp{changes to \ReclaimNode{} method - insert \crefrange{line:reclaimnode:owner}{line:reclaimnode:advance} before \cref{line:reclaimnode:state}}
$\node \pointer \owner$ := $\myid$\tcp*[r]{claim the ownership of the node}
\label{line:reclaimnode:owner}
$\poolArray[\myid][\whichArray[\myid]][\markerArray[\myid]]$ := $\node$\tcp*[r]{replace in case reclaiming the predecessor node}
\label{line:reclaimnode:replace}
$\markerArray[\myid]$ := $\markerArray[\myid] + 1$\tcp*[r]{advance the pointer for the \clean nodes}
\label{line:reclaimnode:advance}
\BlankLine\BlankLine
\tcp{used to identify safe nodes in the \ppool pool; executing the method once corresponds to one epoch}
\Cleanup(~) \\
\Begin{
  \Integer $\other$ := $3 - \whichArray[\myid]$\;
  \label{line:cleanup|begin}
  \label{line:cleanup:passive}
	\ForEach(\tcp*[f]{mark the \condition of all the nodes in the \ppool{} pool as \unknown{}}){$i \in [1,3\n]$}
	{
	   \label{line:cleanup:clean:for|begin} 
	   $\poolArray[\myid][\other][i] \pointer \condition$ := \UNKNOWNFLAG{}\;
	
	}
	\label{line:cleanup:clean:for|end} 

	\ForEach(\tcp*[f]{scan all the hazard pointers}){$i \in [1,\n], j \in [1,2]$}
	{
	   \label{line:cleanup:dirty:for|begin} 
	    NodePtr $\node$ := $\hpArray[i][j]$\;
	    \label{line:cleanup:dirty:scan}
		\If(\tcp*[f]{node is in a \ppool{} pool}){($\node \pointer \condition$ = \UNKNOWNFLAG{})}{
		    \label{line:cleanup:unknown:if|begin} 
		    \If(\tcp*[f]{I own the node}){($\node \pointer \owner$ = $\myid$)}
		    {
		 	   \label{line:cleanup:own:if|begin}      
		 	   	\If(\tcp*[f]{node must be in my \ppool{} pool}){($\node \pointer \condition$ = \UNKNOWNFLAG{})}{
		 	   	    \label{line:cleanup:unknown|re:if|begin}  
		       		$\node \pointer \condition$ := \DIRTYFLAG{}\tcp*[r]{mark the node as \dirty{}}
		       		\label{line:cleanup:own:dirty}
		       		
		       	}
		        \label{line:cleanup:unknown|re:if|end}  
		    }
		    \label{line:cleanup:own:if|end}
		}
		\label{line:cleanup:unknown:if|end} 
	
	}
    \label{line:cleanup:dirty:for|end} 
  \BlankLine
  let $\safeSet$ denote the subset of all nodes in the \ppool pool whose \condition is set to \UNKNOWNFLAG{}\;
  \label{line:cleanup:let}
  collect all nodes in $\safeSet$ toward the end of the \ppool pool using a method similar to the partition procedure used in   
  quick sort, which has linear running time, and also change their \condition to \CLEANFLAG{}\;
  \label{line:cleanup:collect}
    \BlankLine
  \tcp{start a new epoch}
  $\markerArray[\myid]$ := index of the first \clean node in the \ppool pool\;
  \label{line:cleanup:marker}
  $\whichArray[\myid]$ := $3 - \whichArray[\myid]$\tcp*[r]{switch the designations of the pools}
  \label{line:cleanup:switch}

  \label{line:cleanup|end}
}
\caption{Reusing retired nodes.}
\label{algo:method|cleanup}
\end{algorithm}


\Cref{algo:method|cleanup} shows the changes/additions we made to the pseudocode in \crefrange{algo:types|variables}{algo:methods|node} to reclaim memory of \retired nodes and, thus, achieve space efficiency. 

We say that a \retired{} node has become \emph{\clean{}} if no process was found to hold a reference to it as a hazard pointer (and thus can be 
reused to establish a new session); otherwise, we say that it is \emph{\dirty{}}. A node now contains an additional field, namely $\condition$, to indicate the status of the node with respect to memory reclamation---\CLEANFLAG{}, \DIRTYFLAG{} or \UNKNOWNFLAG{}  (\cref{line:node:condition}).

To enable memory reclamation, each process maintains a small number of (specifically, two) hazard pointers in an array with one entry for each process, denoted by $\hpArray$ (\cref{line:hp|array}).
Hazard pointers of a process contain the addresses of the following nodes associated with its current request:
\begin{enumerate*}[label=(\roman*)]
\item the last known head of the list (\cref{line:readhead:declare|hp}), and
\item its successor---potential (\cref{line:fetchnode:declare|hp}) or actual (\cref{line:appendnode:declare|hp}).
\end{enumerate*} 
Each process also maintains the following private variables:
\begin{enumerate*}[label=(\alph*)]
\item two disjoint pools of nodes, each consisting of $3\n$ nodes (\cref{line:pool|array}),
\item which of the two pools is \apool, \emph{i.e.}, currently used to service requests (\cref{line:which|array}), and
\item the index of the first \clean node in the \apool pool (all nodes that are \clean{} to resue are guaranteed to be stored toward the end in the pool) (\cref{line:marker|array}).
\end{enumerate*}

The method \ReadHead{} now works as follows: it repeatedly reads the pointer to the current head of the list, declares it as a hazard pointer and then  validates the reference, until the validation succeeds (\crefrange{line:readhead:repeat|begin}{line:readhead:repeat|end}).

The method \FetchNode{} now includes statements to declare the reference to the helpee node as a hazard pointer, followed by its validation \cref{line:fetchnode:declare|hp,line:fetchnode:validate|hp}.

The method \AppendNode{} now includes a statement to declare the reference to the successor node as a hazard pointer \cref{line:appendnode:declare|hp}, which is validated at \cref{line:appendnode:complete}.

The execution of a process is divided into \emph{epochs}. Each epoch consists of exactly $\n$ passages. During an epoch, one of the pools is designated as \emph{\apool}, while the other is designated as \emph{\ppool}.
Intuitively, during an epoch, the \apool pool is used to service critical section requests (\cref{line:initializenode:reuse}), whereas the \ppool pool is processed incrementally (in lazy manner) to identify at least $\n$ \clean{} nodes to service requests in the next epoch 
(\crefrange{line:cleanup|begin}{line:cleanup|end}).
The designation is switched at beginning of each epoch. 
To identify the subset of nodes in its \ppool pool that are reusable, a process first sets the \condition field of all the nodes in the \ppool pool to \UNKNOWNFLAG{} (\crefrange{line:cleanup:passive}{line:cleanup:clean:for|end}).  It then scans the hazard pointers of all processes (\crefrange{line:cleanup:dirty:for|begin}{line:cleanup:dirty:scan}) and changes the \condition field of 
any node whose \condition field is currently set to \UNKNOWNFLAG, that is owned by it and reference to which has been declared as a hazard pointer to \DIRTYFLAG{} (\crefrange{line:cleanup:retired:if|begin}{line:cleanup:retired:if|end}). It next collects all nodes in the \ppool pool whose \condition
field  is still set to \UNKNOWNFLAG{} towards the end of the pool and also changes their \condition field to \CLEANFLAG{} (\crefrange{line:cleanup:let}{line:cleanup:collect}). Finally, it switches the designation of the two pools (\crefrange{line:cleanup:marker}{line:cleanup:switch}).

To complete the memory reclamation algorithm, we make changes to two more methods.
In the \InitializeNode{} method, a node is obtained from the \apool pool and its \condition field is set to \DIRTYFLAG{} (\crefrange{line:initializenode:new}{line:initializenode:owner}).
Finally, in the \ReclaimNode{} method, when a process releases the ownership of a node and acquires the ownership of another node (the predecessor of the current head) (\cref{line:reclaimnode:owner}), it replaces the former node with the latter node in its \apool pool.

Our memory reclamation algorithm satisfies the following properties. First, a \retired{} node is deemed to be safe to reuse only after none of the processes has declared it as a hazard pointer \emph{after} the node was \retired. Second, a node belongs to \emph{at most one} pool. Third, 
the \condition of a node is set to \UNKNOWNFLAG{} \emph{only if} the node belongs to a \ppool pool. Fourth, if the node belongs to a pool, then its \emph{current owner} information is available in the node's \owner{} field. Fourth,  

The first property helps to guarantee that, once a process has validated a reference after declaring it as a hazard pointer, the node associated with the reference cannot be reused as long as it is declared to be hazard pointer and, thus, the  starvation-free 
GME algorithm does not interfere with the memory reclamation algorithm.
The last three properties help to guarantee that a process modifies the \condition field of a node at \cref{line:cleanup:own:dirty} only if the node belongs to its own \ppool pool. 
The relevant section of the pseudocode is from \crefrange{line:cleanup:unknown:if|begin}{line:cleanup:unknown:if|end}. Consider a process $p$ executing the \Cleanup{} method as part of some epoch. The first if-statement checks that the \condition{} field of the node is set to \UNKNOWNFLAG{}. This implies that the node is in the \ppool pool of some process, say $q$. Note that $q$ may be different from $p$. The second if-statement checks that the \owner{} field of the node is set to $p$. But both if-statements may also evaluate to true if the node has migrated from the \ppool pool of $q$ to the \apool pool of $p$ since the first if-statement was evaluated (recall that the \Cleanup{} method is executed incrementally).  In this case, however, the \condition{} field of the node is guaranteed to be set to \DIRTYFLAG{} because the node will stay in the \apool{} pool until the end of the epoch. Thus, the third if-statement ensures that the node is indeed in the \ppool pool of $p$.


\myparagraph{Step complexity analysis}

Note that the \Cleanup{} method can be executed in $O(\n)$ steps because a pool contains $3\n$ nodes and each process only holds two hazard pointers. By setting the size of each pool to $3\n$, we can ascertain that, by the end of an epoch, a process is able to identify at least $\n$ reusable nodes in its \ppool pool. Clearly, a \ppool pool can be processed in an incremental manner such that only $O(1)$ steps are added to each passage of a process contained in its epoch.

The only change to a method that may increase the step complexity (asymptotically) is the one made to the \ReadHead{} method since it now contains a loop; all other changes only add $O(1)$ steps to their respective methods. 
We first bound the total number of times the loop in the \ReadHead{} method is executed over \emph{all} invocations in the \entry section of a process.

\begin{lemma}
\label{lem:readhead:append}
The number of times the repeat-until loop at \crefrange{line:readhead:repeat|begin}{line:readhead:repeat|end}  is executed  by a process 
in one invocation of the \ReadHead{} method  is bounded by one plus the number of nodes appended to the list \emph{during} the loop execution.
\end{lemma}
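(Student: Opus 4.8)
The plan is to charge each ``extra'' pass through the repeat-until loop to a distinct node that gets appended to the list while that pass is running, exploiting the fact that the list head is only ever moved forward. First I would fix an invocation of \ReadHead{} by some process $p$ and let $k$ be the number of times the loop body (\crefrange{line:readhead:repeat|begin}{line:readhead:repeat|end}) is executed. For $1 \le i \le k$, let $b_i$ denote the step of iteration $i$ at which $p$ reads $\headArray[\instance]$ into $\snapshotArray[\myid]$, and let $d_i$ denote the step of iteration $i$ at which $p$ re-reads $\headArray[\instance]$ to evaluate the until-condition $\snapshotArray[\myid] = \headArray[\instance]$. The control flow of a repeat-until loop forces these steps to occur in the order $b_1 < d_1 < b_2 < d_2 < \cdots < b_k < d_k$; in particular the $k-1$ time intervals $(b_i,d_i)$ for $1 \le i \le k-1$ are pairwise disjoint and all lie within the span of the loop's execution.

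The core of the argument is the following. For each $i < k$ the loop does not terminate at iteration $i$, so the value of $\headArray[\instance]$ read at $d_i$ differs from the value stored in $\snapshotArray[\myid]$ at $b_i$; hence $\headArray[\instance]$ was modified strictly between $b_i$ and $d_i$. The only instruction that writes $\headArray[\instance]$ is the \CAS{} of \UpdateHead{} (\cref{line:updatehead:CAS}), invoked from \AppendNode{} at \cref{line:appendnode:advance|head}, and a successful such \CAS{} is precisely the step at which a node becomes the (new) head of the list. So at least one node is appended to the list inside each interval $(b_i,d_i)$. Since these $k-1$ intervals are pairwise disjoint and each append is a single step, the appends witnessed in distinct intervals are distinct; therefore at least $k-1$ nodes are appended to the list during the loop's execution, i.e.\ $k \le 1 + (\text{number of nodes appended to the list during the loop})$, which is the claim.

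The step I expect to be the main obstacle is making rigorous the two facts tacitly used above: (i) a \emph{change} in the value of $\headArray[\instance]$ genuinely reflects a \emph{new} append rather than the head returning to some earlier node (which matters here because the memory-reclamation machinery of \cref{algo:method|cleanup} lets node addresses be reused), and (ii) the append whose existence we infer actually occurs \emph{inside} $(b_i,d_i)$ rather than having happened earlier with the head only catching up within that interval. I would handle (i) by appealing to the monotone forward motion of the head along the doubly linked list---each successful \UpdateHead{} only swings the head from the current last node to its freshly installed successor, so over the loop's execution the head never takes the same value twice---together with the hazard-pointer protocol (\cref{line:readhead:declare|hp}, and the validation in the revised \ReadHead{}) ensuring that a node reclaimed during the loop is never observed by $p$ as a head value. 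I would handle (ii) by taking the linearization point of ``a node is appended to the list'' to be exactly the successful \CAS{} of \UpdateHead{}, so that every change of $\headArray[\instance]$ coincides with an append step, which then necessarily falls inside the interval in which that change was observed. With these two points nailed down, the disjoint-interval charging in the previous paragraph goes through verbatim.
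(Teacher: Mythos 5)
Your proposal is correct and follows essentially the same route as the paper's (one-sentence) proof: a new iteration occurs only because the head pointer was observed to change, and each such change is a successful \UpdateHead{} \CAS{}, i.e.\ an append, so the iteration count is at most one plus the number of appends during the loop. Your additional care about disjoint intervals and about ABA on the head pointer under node reuse is a sound elaboration of the same argument rather than a different approach.
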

\begin{proof}
A new iteration of the repeat-until loop is executed only if the process finds that the pointer to the head node of the list has changed from what it read at the beginning of the iteration.
\end{proof}

\begin{lemma}
\label{lem:readhead|loop|entry:sessions}
Let $\s$ denote the total number of sessions that overlap with the \entry section of a process. Further, let $\ell_i$ denote
the number of iterations of the repeat-until loop at \crefrange{line:readhead:repeat|begin}{line:readhead:repeat|end} 
executed by a process in the $i^{th}$ invocation of the \ReadHead{} method in its \entry section. Then, we have
\[
\left(\sum_i \ell_i \right) \quad \leq \quad 2\s
\]
\end{lemma}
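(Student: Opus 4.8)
The plan is to bound the total loop work by two quantities, each at most $\s$: the number of \ReadHead{} invocations in the \entry{} section, and the total number of list nodes appended while some \ReadHead{} loop is executing.

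First I would note that, in \TryToEnter{}, \ReadHead{} is called exactly once per iteration of the outer while-do loop at \crefrange{line:trytoenter:while|begin}{line:trytoenter:while|end} (namely at \cref{line:trytoenter:readhead}); by the reasoning in the proof of \cref{lem:entry:sessions} (which relies on \cref{lem:outer|while:new|session}), the process executes at most $\s$ iterations of that loop, so \ReadHead{} is invoked at most $\s$ times. Let $\ell_1, \ell_2, \ldots$ be the numbers of repeat-until iterations in the successive invocations. \Cref{lem:readhead:append} gives $\ell_i \le 1 + a_i$, where $a_i$ is the number of nodes appended to the list during the execution of the loop in the $i^{th}$ invocation. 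Summing over the (at most $\s$) invocations yields $\sum_i \ell_i \le \s + \sum_i a_i$.

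It then remains to show $\sum_i a_i \le \s$. The successive \ReadHead{} invocations occupy pairwise disjoint subintervals of the \entry{} section, so every node append is counted in at most one $a_i$; hence $\sum_i a_i$ is at most the number of nodes appended to this GME object's list during the process's \entry{} section. Distinct appends produce distinct nodes, hence distinct sessions, and by \cref{lem:append:same|pending} each such node is for the same GME object with its request still outstanding at the time of the append, so the session it hosts is established during --- and therefore overlaps --- the \entry{} section. Thus the number of such appends is at most $\s$, giving $\sum_i \ell_i \le \s + \s = 2\s$.

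The step I expect to be the most delicate is $\sum_i a_i \le \s$: one must ensure that a single appended node is not charged to two different \ReadHead{} invocations (handled by disjointness of the invocation intervals) and that every node appended during a \ReadHead{} loop genuinely corresponds to one of the $\s$ overlapping sessions rather than, say, to a session that has already \adjourned{} by the time it is read. The inequality is actually rather loose --- essentially the same bookkeeping shows $\sum_i \ell_i$ is closer to $\s + 1$ --- so the stated factor of $2$ leaves comfortable room.
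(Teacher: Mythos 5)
Your proof is correct and takes essentially the same route as the paper, whose proof is a one-line citation of \cref{lem:outer|while:new|session} and \cref{lem:readhead:append}; your write-up is just a faithful expansion of that argument (bounding the number of \ReadHead{} invocations by $\s$ via the first lemma, and the total appends charged across invocations by $\s$ via the second).
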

\begin{proof}
The result follows from \cref{lem:outer|while:new|session,lem:readhead:append}.
\end{proof}

\begin{lemma}
\label{lem:readhead|loop|exit:constant}
In any invocation of the \ReadHead{} method at \cref{line:trytoleave:readhead} in the \exit section of a process, the repeat-until loop at \crefrange{line:readhead:repeat|begin}{line:readhead:repeat|end} is executed only once.
\end{lemma}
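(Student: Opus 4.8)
The plan is to show that the head pointer of the list does not change while the process runs this particular \ReadHead{} loop, so that, by \lemref{readhead:append}, the loop body executes exactly once.

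Let $p$ be the process and let $H$ be the head node of the list corresponding to the session $p$ joined (whether $p$ entered its critical section as a leader or as a follower, $H$ is the head node of the list at the moment $p$ joined, and by \lemref{joined|adjourn:exit} together with \lemref{not|adjourned:no|append} the head has not moved since then, so $\headArray$ still points to $H$ when $p$ reaches \cref{line:trytoleave:readhead}). The key step I would carry out next is to argue that $\sessionOf{H}$ is still not \adjourned throughout $p$'s execution of the \ReadHead{} call at \cref{line:trytoleave:readhead}. By \lemref{joined|adjourn:exit}, $\sessionOf{H}$ cannot \adjourn[] before $p$ starts its \exit{} section; and it cannot \adjourn[] during the \ReadHead{} call either, because that call precedes $p$'s decrement of $H \pointer \size$ at \cref{line:trytoleave:size|decrement}. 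Indeed, $p$'s own increment of $H \pointer \size$, performed when it joined $\sessionOf{H}$, has not yet been undone, so $H \pointer \size \geq 1$ during this interval (spurious increments by other processes are each matched by a decrement of the same process, and every other legitimate participant of $\sessionOf{H}$ decrements only in its own \exit{} section — this is exactly the invariant used in the proof of \lemref{joined|adjourn:exit}). Hence \SetVacantFlag{} cannot set the \VACANTFLAG{} flag in the \state of $H$ during this interval, so $\sessionOf{H}$ remains not \adjourned.

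Finally, by \lemref{not|adjourned:no|append}, no new node can be appended to the list while the session associated with the current head has not \adjourned; consequently no node is appended to the list during $p$'s execution of the loop at \crefrange{line:readhead:repeat|begin}{line:readhead:repeat|end}, and in particular the head pointer does not move. Applying \lemref{readhead:append}, the repeat-until loop is executed at most once, hence exactly once.

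I expect the main obstacle to be the middle step: \lemref{joined|adjourn:exit} only rules out adjournment \emph{before} the \exit{} section begins, so one must additionally track the session \size and the fact that $p$ has not yet removed its own contribution to it by the time it reaches \cref{line:trytoleave:readhead}; the rest is a direct composition of earlier lemmas.
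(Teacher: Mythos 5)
Your proof is correct and follows essentially the same route as the paper's (one-sentence) argument: the head cannot advance until every participant of the session has left it, and $p$ has not yet decremented the session \size{} when it executes \ReadHead{} in its \exit{} section, so the head is stable and the loop runs once. Your version merely makes explicit the size-accounting invariant that the paper leaves implicit.
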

\begin{proof}
Note that the head of the list cannot advance until after all processes that joined the session have also left the session.
\end{proof}

All lemma and theorem statements in \cref{sec:correctness|proof} and \cref{sec:complexity|analysis} still hold. The only proof that needs to be modified is for \cref{lem:entry:sessions}; it particular, it needs to incorporate the result of \cref{lem:readhead|loop|entry:sessions} (but the statement still holds).

\myparagraph{Space complexity analysis}
Finally, we analyze the space complexity of our GME algorithm considering that the system may contain multiple GME objects and a process may hold locks on multiple GME objects at the same time. Note that our GME algorithm still works without any modification even if a process needs to hold lock on multiple GME objects at the same time.

\begin{theorem}[multi-object space complexity]
The space complexity of our GME algorithm is $O(\m + \n^2 + \n \g)$ space, where $\n$ denotes the number of processes, $\m$ denotes the number of GME objects and $\g$ denotes the maximum number of locks a process needs to hold at the same time. 
\end{theorem}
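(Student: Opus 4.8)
The plan is to treat this as a straightforward accounting exercise: enumerate every piece of memory the algorithm occupies and sum the sizes, with one non-routine ingredient, namely showing that the set of \node{}s in existence never grows beyond what is allocated at start-up. I would partition the memory into three buckets: (i) per-object shared state, (ii) per-process state that must be kept separately for each of the up to $\g$ locks a process holds concurrently, and (iii) the per-process \node{} pools together with the \node{}s they contain.

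First I would tally the static arrays. The shared array $\headArray$ has one entry per GME object, contributing $O(\m)$ words, and the $\m$ ``dummy'' head \node{}s created in \cref{algo:types|variables} add another $O(\m)$. The structures that carry a $\Theta(\n)$ ``process'' dimension---$\announceArray$, $\hpArray$, $\snapshotArray$, $\whichArray$, $\markerArray$---must, to let a single process sit inside the \entry/\exit{} sections of up to $\g$ GME objects simultaneously, be held separately for each held lock (by extending the relevant dimension or replicating the state), contributing $O(\n\g)$ in total. The array $\poolArray$ has dimensions $\Theta(\n)\times O(1)\times\Theta(\n)$ ($n$ processes, two pools, $\Theta(\n)$ \node{}-pointers each---$\Theta(\n+\g)$ per pool if a pool must also absorb the $O(\g)$ \node{}s a process can legitimately hold in flight), hence $O(\n^2)$ (resp.\ $O(\n^2+\n\g)$) pointers.

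Second I would bound the \node{}s themselves, which is the only part requiring an invariant rather than a direct count. The key observation is that in the space-efficient variant every \node{} is allocated at initialization---$\m$ head \node{}s plus $2\cdot 3\n$ pool \node{}s per process---and \emph{no \node{} is ever allocated at run time}, since \InitializeNode{} draws a \clean{} \node{} from the \apool pool instead of calling the memory manager. It then remains to establish a \emph{conservation invariant}: the total population of \node{}s is constant over time. This hinges on the only site where ownership migrates, \ReclaimNode{} (as amended in \cref{algo:method|cleanup}): when a session leader relinquishes ownership of its \node{} (now the list head) and claims its predecessor, the claimed \node{} is written back into the freed pool slot (\cref{line:reclaimnode:replace}) and the $\owner$/$\condition$ fields are updated accordingly, so a \node{} merely moves between a pool slot and the ``head-of-list'' role---none is created and none is lost, and at any time all but $O(\m)$ head \node{}s and $O(\g)$ in-flight \node{}s per process reside in exactly one pool. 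Hence the \node{} count stays $O(\m+\n^2)$ (resp.\ $O(\m+\n^2+\n\g)$) throughout, each \node{} being $O(1)$ words.

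Summing the three tallies yields $O(\m+\n^2+\n\g)$; I would close by noting that the pool size $\Theta(\n)$ (resp.\ $\Theta(\n+\g)$) is precisely what the earlier lazy-reclamation analysis already demands (\Cleanup{} scanning all $3\n$ pool entries and all hazard pointers within one epoch of $\n$ passages), so no larger pools are needed. The main obstacle I anticipate is the conservation invariant: because, unlike in the queue locks of \cite{Cra:1993:TR,MagLan+:1994:IPPS}, a predecessor \node{} may be referenced by several processes via the helping mechanism and may even have migrated into the reclaimer's \apool pool between scan steps, one must argue carefully that the ownership transfer in \ReclaimNode{} never strands a \node{} outside every pool (beyond the $O(\m)$ current heads and $O(\g)$ in-flight \node{}s) and that the $\owner$ and $\condition$ fields remain mutually consistent, so that the $3\n$-slot pool accounting---and thus the whole space bound---is not thrown off.
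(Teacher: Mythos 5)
Your proposal is correct and follows the same decomposition the paper intends: $O(\m)$ for the per-object head pointers and dummy head nodes, $O(\n^2)$ for the shared announce/hazard-pointer structures and the $2\times 3\n$ node pools per process, and $O(\g)$ per process for the state that must be replicated per concurrently held lock. The paper's own proof is essentially a one-sentence restatement of this tally, so your additional work---in particular the conservation invariant that no node is allocated at run time and that the replace-and-advance step in \ReclaimNode{} keeps every node in exactly one pool slot apart from the $O(\m)$ current heads---supplies detail the paper omits rather than diverging from it.
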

\begin{proof}
Our algorithm uses only $O(\m + \n^2)$ space for managing $\m$ GME objects, where $O(\n^2)$ space is shared among all $\m$ GME objects. In addition, each process needs only $O(\g)$ space, where $\g$ denotes the maximum number of GME objects (or locks) a process needs to hold at the same time.  
\end{proof}

\section{Experimental Evaluation}
\label{sec:evaluation}

\pgfplotsset
{
	select coords between index/.style 2 args=
	{
    x filter/.code=
		{
        \ifnum\coordindex<#1\def\pgfmathresult{}\fi
        \ifnum\coordindex>#2\def\pgfmathresult{}\fi
    }
	}
}

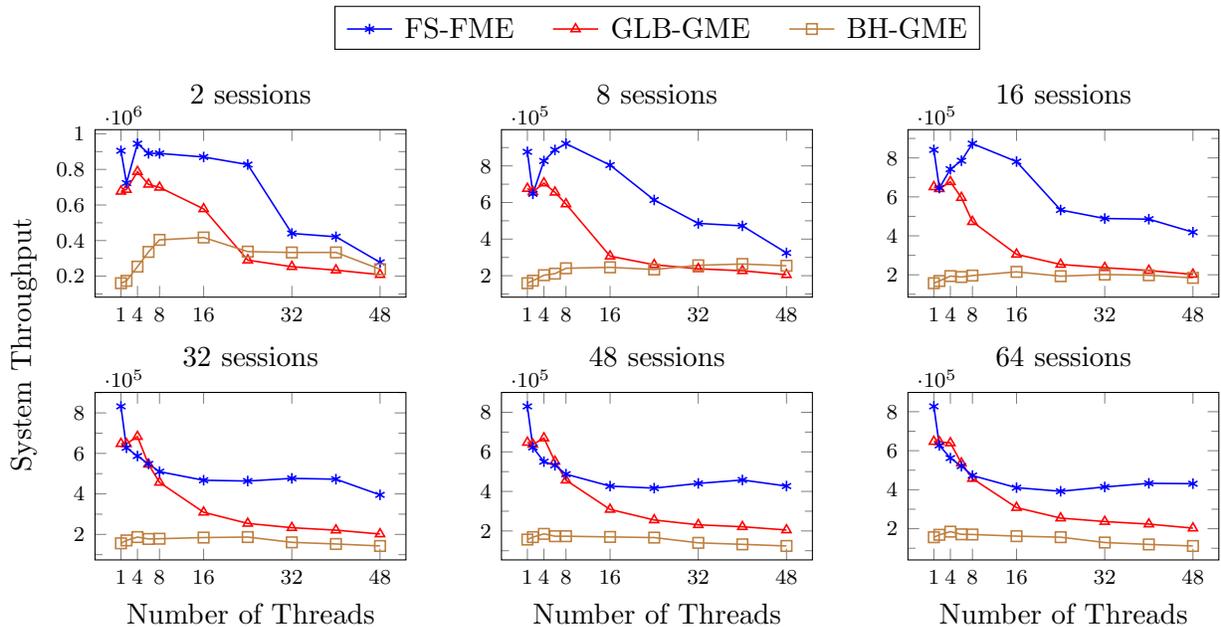
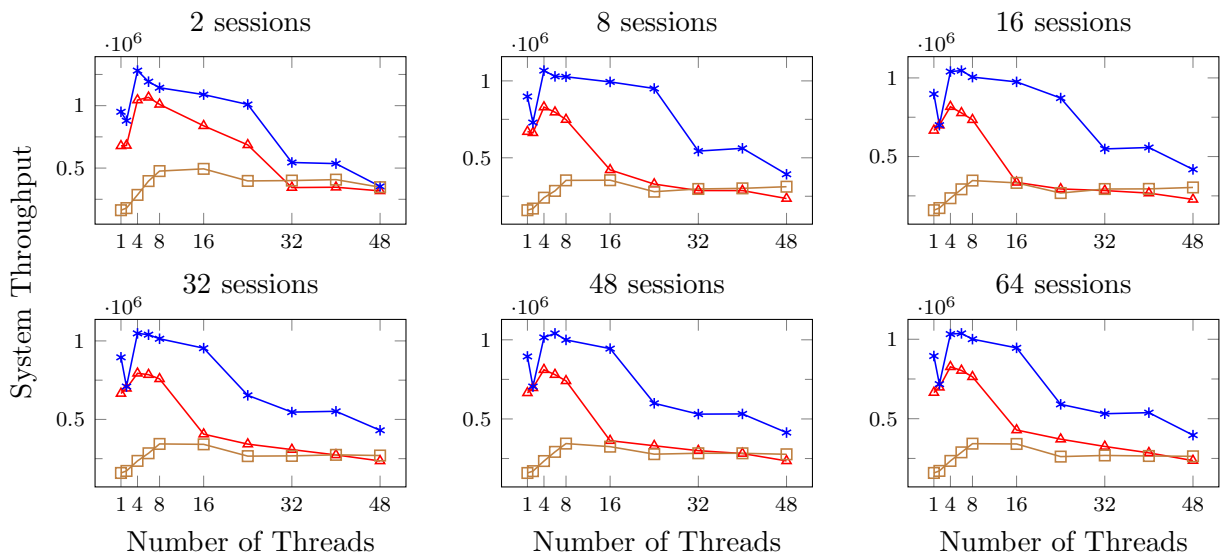
\begin{figure}[!t]
\centering
\begin{subfigure}[t]{\textwidth}
\begin{tikzpicture}
	\begin{groupplot}[group style={group size= 3 by 2, horizontal sep=0.5in, vertical sep=0.5in}, height=1.5in, width=2.25in, max space between ticks=20, minor tick num=1,tick label style={font=\scriptsize}]
		\nextgroupplot[title=2 sessions,xtick={1,4,8,16,32,48}]
				\addplot[red,semithick,mark=triangle] 	[select coords between index={0}{9}] table[x=ThreadCount,y=GLB-GME,col sep=space]{Data/lonestar/threadSweep-uniform-new.csv};  \label{plots:GLB-GME:1}
				\addplot[blue,semithick,mark=asterisk] 	[select coords between index={0}{9}] table[x=ThreadCount,y=FS-GME,col sep=space]{Data/lonestar/threadSweep-uniform-new.csv};  \label{plots:FS-GME:1}
				\addplot[brown, semithick,mark=square] 	[select coords between index={0}{9}] table[x=ThreadCount,y=BH-GME,col sep=space]{Data/lonestar/threadSweep-uniform-new.csv};  \label{plots:BH-GME:1}
				\coordinate (top) at (rel axis cs:0,1);
		\nextgroupplot[title=8 sessions,xtick={1,4,8,16,32,48}]
				\addplot[red,semithick,mark=triangle] 	[select coords between index={10}{19}] table[x=ThreadCount,y=GLB-GME,col sep=space]{Data/lonestar/threadSweep-uniform-new.csv};
				\addplot[blue,semithick,mark=asterisk] 	[select coords between index={10}{19}] table[x=ThreadCount,y=FS-GME,col sep=space]{Data/lonestar/threadSweep-uniform-new.csv};  
				\addplot[brown, semithick,mark=square] 	[select coords between index={10}{19}] table[x=ThreadCount,y=BH-GME,col sep=space]{Data/lonestar/threadSweep-uniform-new.csv};
		\nextgroupplot[title=16 sessions,xtick={1,4,8,16,32,48}]
				\addplot[red,semithick,mark=triangle] 	[select coords between index={20}{29}] table[x=ThreadCount,y=GLB-GME,col sep=space]{Data/lonestar/threadSweep-uniform-new.csv};
				\addplot[blue,semithick,mark=asterisk] 	[select coords between index={20}{29}] table[x=ThreadCount,y=FS-GME,col sep=space]{Data/lonestar/threadSweep-uniform-new.csv};  
				\addplot[brown, semithick,mark=square] 	[select coords between index={20}{29}] table[x=ThreadCount,y=BH-GME,col sep=space]{Data/lonestar/threadSweep-uniform-new.csv};
		\nextgroupplot[title=32 sessions,xlabel={Number of Threads}, xtick={1,4,8,16,32,48}]
				\addplot[red,semithick,mark=triangle] 	[select coords between index={30}{39}] table[x=ThreadCount,y=GLB-GME,col sep=space]{Data/lonestar/threadSweep-uniform-new.csv};
				\addplot[blue,semithick,mark=asterisk] 	[select coords between index={30}{39}] table[x=ThreadCount,y=FS-GME,col sep=space]{Data/lonestar/threadSweep-uniform-new.csv};  
				\addplot[brown, semithick,mark=square] 	[select coords between index={30}{39}] table[x=ThreadCount,y=BH-GME,col sep=space]{Data/lonestar/threadSweep-uniform-new.csv};
		\nextgroupplot[title=48 sessions,xlabel={Number of Threads}, xtick={1,4,8,16,32,48}]
				\addplot[red,semithick,mark=triangle] 	[select coords between index={40}{49}] table[x=ThreadCount,y=GLB-GME,col sep=space]{Data/lonestar/threadSweep-uniform-new.csv};
				\addplot[blue,semithick,mark=asterisk] 	[select coords between index={40}{49}] table[x=ThreadCount,y=FS-GME,col sep=space]{Data/lonestar/threadSweep-uniform-new.csv};  
				\addplot[brown, semithick,mark=square] 	[select coords between index={40}{49}] table[x=ThreadCount,y=BH-GME,col sep=space]{Data/lonestar/threadSweep-uniform-new.csv};
		\nextgroupplot[title=64 sessions,xlabel={Number of Threads},xtick={1,4,8,16,32,48}]
				\addplot[red,semithick,mark=triangle] 	[select coords between index={50}{59}] table[x=ThreadCount,y=GLB-GME,col sep=space]{Data/lonestar/threadSweep-uniform-new.csv};
				\addplot[blue,semithick,mark=asterisk] 	[select coords between index={50}{59}] table[x=ThreadCount,y=FS-GME,col sep=space]{Data/lonestar/threadSweep-uniform-new.csv};  
				\addplot[brown, semithick,mark=square] 	[select coords between index={50}{59}] table[x=ThreadCount,y=BH-GME,col sep=space]{Data/lonestar/threadSweep-uniform-new.csv};
				\coordinate (bot) at (rel axis cs:1,0);
	\end{groupplot}
	\path (top-|current bounding box.west)-- node[anchor=south,rotate=90] {System Throughput} (bot-|current bounding box.west);
	\path (top|-current bounding box.north)-- coordinate(legendpos) (bot|-current bounding box.north);
	\matrix[matrix of nodes, anchor=south, draw, inner sep=0.2em] at ([yshift=2ex]legendpos)
	{
		\ref{plots:FS-GME:1}   & \FSGME{} & [10pt]
		\ref{plots:GLB-GME:1}    & \GLBGME{}   & [10pt]
		\ref{plots:BH-GME:1}    & \BHGME{} \\
	};
\end{tikzpicture}
\caption{Uniform session distribution}
\end{subfigure}
\begin{subfigure}[t]{\textwidth}
\begin{tikzpicture}
	\begin{groupplot}[group style={group size= 3 by 2, horizontal sep=0.5in, vertical sep=0.5in},height=1.5in,width=2.25in, max space between ticks=20, minor tick num=1,tick label style={font=\scriptsize}]
		\nextgroupplot[title=2 sessions,xtick={1,4,8,16,32,48}]
				\addplot[red,semithick,mark=triangle] 	[select coords between index={0}{9}] table[x=ThreadCount,y=GLB-GME,col sep=space]{Data/lonestar/threadSweep-nonuniform-new.csv};  \label{plots:GLB-GME:2}
				\addplot[blue,semithick,mark=asterisk] 	[select coords between index={0}{9}] table[x=ThreadCount,y=FS-GME,col sep=space]{Data/lonestar/threadSweep-nonuniform-new.csv};  \label{plots:FS-GME:2}
				\addplot[brown, semithick,mark=square] 	[select coords between index={0}{9}] table[x=ThreadCount,y=BH-GME,col sep=space]{Data/lonestar/threadSweep-nonuniform-new.csv};  \label{plots:BH-GME:2}
				\coordinate (top) at (rel axis cs:0,1);
		\nextgroupplot[title=8 sessions,xtick={1,4,8,16,32,48}]
				\addplot[red,semithick,mark=triangle] 	[select coords between index={10}{19}] table[x=ThreadCount,y=GLB-GME,col sep=space]{Data/lonestar/threadSweep-nonuniform-new.csv};
				\addplot[blue,semithick,mark=asterisk] 	[select coords between index={10}{19}] table[x=ThreadCount,y=FS-GME,col sep=space]{Data/lonestar/threadSweep-nonuniform-new.csv};  
				\addplot[brown, semithick,mark=square] 	[select coords between index={10}{19}] table[x=ThreadCount,y=BH-GME,col sep=space]{Data/lonestar/threadSweep-nonuniform-new.csv};
		\nextgroupplot[title=16 sessions, xtick={1,4,8,16,32,48}]
				\addplot[red,semithick,mark=triangle] 	[select coords between index={20}{29}] table[x=ThreadCount,y=GLB-GME,col sep=space]{Data/lonestar/threadSweep-nonuniform-new.csv};
				\addplot[blue,semithick,mark=asterisk] 	[select coords between index={20}{29}] table[x=ThreadCount,y=FS-GME,col sep=space]{Data/lonestar/threadSweep-nonuniform-new.csv};  
				\addplot[brown, semithick,mark=square] 	[select coords between index={20}{29}] table[x=ThreadCount,y=BH-GME,col sep=space]{Data/lonestar/threadSweep-nonuniform-new.csv};
		\nextgroupplot[title=32 sessions,,xlabel={Number of Threads}, xtick={1,4,8,16,32,48}]
				\addplot[red,semithick,mark=triangle] 	[select coords between index={30}{39}] table[x=ThreadCount,y=GLB-GME,col sep=space]{Data/lonestar/threadSweep-nonuniform-new.csv};
				\addplot[blue,semithick,mark=asterisk] 	[select coords between index={30}{39}] table[x=ThreadCount,y=FS-GME,col sep=space]{Data/lonestar/threadSweep-nonuniform-new.csv};  
				\addplot[brown, semithick,mark=square] 	[select coords between index={30}{39}] table[x=ThreadCount,y=BH-GME,col sep=space]{Data/lonestar/threadSweep-nonuniform-new.csv};
		\nextgroupplot[title=48 sessions,xlabel={Number of Threads}, xtick={1,4,8,16,32,48}]
				\addplot[red,semithick,mark=triangle] 	[select coords between index={40}{49}] table[x=ThreadCount,y=GLB-GME,col sep=space]{Data/lonestar/threadSweep-nonuniform-new.csv};
				\addplot[blue,semithick,mark=asterisk] 	[select coords between index={40}{49}] table[x=ThreadCount,y=FS-GME,col sep=space]{Data/lonestar/threadSweep-nonuniform-new.csv};  
				\addplot[brown, semithick,mark=square] 	[select coords between index={40}{49}] table[x=ThreadCount,y=BH-GME,col sep=space]{Data/lonestar/threadSweep-nonuniform-new.csv};
		\nextgroupplot[title=64 sessions,xlabel={Number of Threads},xtick={1,4,8,16,32,48}]
				\addplot[red,semithick,mark=triangle] 	[select coords between index={50}{59}] table[x=ThreadCount,y=GLB-GME,col sep=space]{Data/lonestar/threadSweep-nonuniform-new.csv};
				\addplot[blue,semithick,mark=asterisk] 	[select coords between index={50}{59}] table[x=ThreadCount,y=FS-GME,col sep=space]{Data/lonestar/threadSweep-nonuniform-new.csv};  
				\addplot[brown, semithick,mark=square] 	[select coords between index={50}{59}] table[x=ThreadCount,y=BH-GME,col sep=space]{Data/lonestar/threadSweep-nonuniform-new.csv};
				\coordinate (bot) at (rel axis cs:1,0);
	\end{groupplot}
	\path (top-|current bounding box.west)-- node[anchor=south,rotate=90] {System Throughput} (bot-|current bounding box.west);
	\path (top|-current bounding box.north)-- coordinate(legendpos) (bot|-current bounding box.north);
	\end{tikzpicture}
\caption{Non-uniform session distribution}
\end{subfigure}
\caption{Comparison of system throughput of different algorithms. Higher the throughput, better the performance of the algorithm.}
\label{fig:lonestar:throughput}
\end{figure} 

\pgfplotsset
{
	select coords between index/.style 2 args=
	{
    x filter/.code=
		{
        \ifnum\coordindex<#1\def\pgfmathresult{}\fi
        \ifnum\coordindex>#2\def\pgfmathresult{}\fi
    }
	}
}

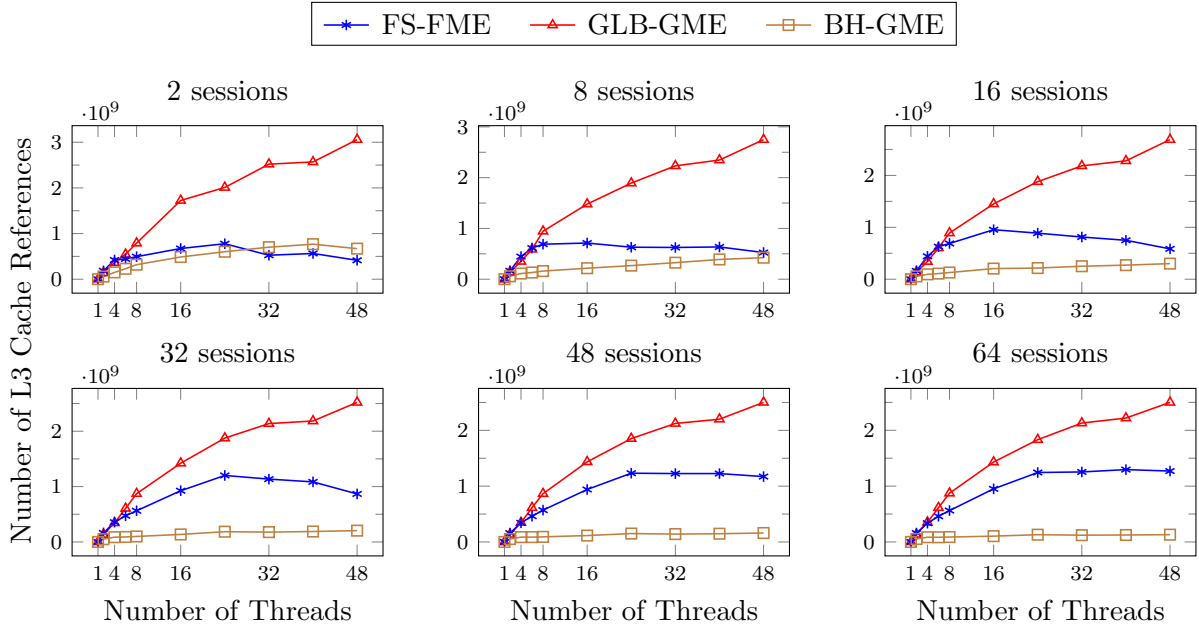
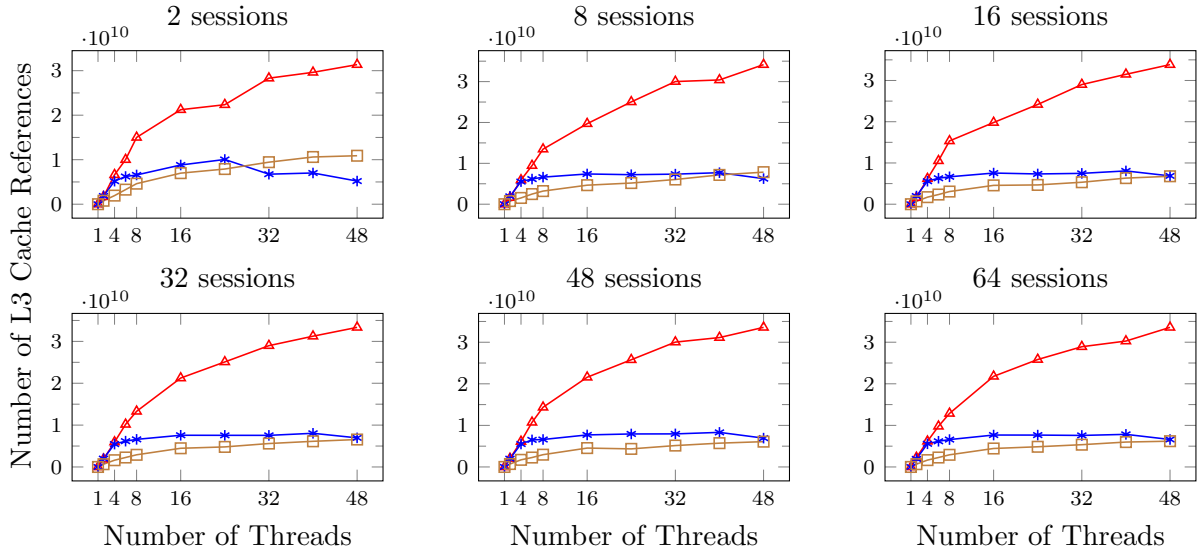
\begin{figure}[!t]
\centering
\begin{subfigure}{\textwidth}
\begin{tikzpicture}
	\begin{groupplot}[group style={group size= 3 by 2, horizontal sep=0.5in, vertical sep=0.5in},height=1.5in,width=2.25in, max space between ticks=20, minor tick num=1,tick label style={font=\scriptsize}]
		\nextgroupplot[title=2 sessions,xtick={1,4,8,16,32,48}]
				\addplot[red,semithick,mark=triangle] 	[select coords between index={0}{9}] table[x=ThreadCount,y=GLB-GME,col sep=space]{Data/lonestar/threadSweep-cache-ref.csv};  \label{plots:GLB-GME:5}
				\addplot[blue,semithick,mark=asterisk] 	[select coords between index={0}{9}] table[x=ThreadCount,y=FS-GME,col sep=space]{Data/lonestar/threadSweep-cache-ref.csv};  \label{plots:FS-GME:5}
				\addplot[brown, semithick,mark=square] 	[select coords between index={0}{9}] table[x=ThreadCount,y=BH-GME,col sep=space]{Data/lonestar/threadSweep-cache-ref.csv};  \label{plots:BH-GME:5}
				\coordinate (top) at (rel axis cs:0,1);
		\nextgroupplot[title=8 sessions,xtick={1,4,8,16,32,48}]
				\addplot[red,semithick,mark=triangle] 	[select coords between index={10}{19}] table[x=ThreadCount,y=GLB-GME,col sep=space]{Data/lonestar/threadSweep-cache-ref.csv};
				\addplot[blue,semithick,mark=asterisk] 	[select coords between index={10}{19}] table[x=ThreadCount,y=FS-GME,col sep=space]{Data/lonestar/threadSweep-cache-ref.csv};  
				\addplot[brown, semithick,mark=square] 	[select coords between index={10}{19}] table[x=ThreadCount,y=BH-GME,col sep=space]{Data/lonestar/threadSweep-cache-ref.csv};
		\nextgroupplot[title=16 sessions,xtick={1,4,8,16,32,48}]
				\addplot[red,semithick,mark=triangle] 	[select coords between index={20}{29}] table[x=ThreadCount,y=GLB-GME,col sep=space]{Data/lonestar/threadSweep-cache-ref.csv};
				\addplot[blue,semithick,mark=asterisk] 	[select coords between index={20}{29}] table[x=ThreadCount,y=FS-GME,col sep=space]{Data/lonestar/threadSweep-cache-ref.csv};  
				\addplot[brown, semithick,mark=square] 	[select coords between index={20}{29}] table[x=ThreadCount,y=BH-GME,col sep=space]{Data/lonestar/threadSweep-cache-ref.csv};
		\nextgroupplot[title=32 sessions,xlabel={Number of Threads},xtick={1,4,8,16,32,48}]
				\addplot[red,semithick,mark=triangle] 	[select coords between index={30}{39}] table[x=ThreadCount,y=GLB-GME,col sep=space]{Data/lonestar/threadSweep-cache-ref.csv};
				\addplot[blue,semithick,mark=asterisk] 	[select coords between index={30}{39}] table[x=ThreadCount,y=FS-GME,col sep=space]{Data/lonestar/threadSweep-cache-ref.csv};  
				\addplot[brown, semithick,mark=square] 	[select coords between index={30}{39}] table[x=ThreadCount,y=BH-GME,col sep=space]{Data/lonestar/threadSweep-cache-ref.csv};
		\nextgroupplot[title=48 sessions,xlabel={Number of Threads},xtick={1,4,8,16,32,48}]
				\addplot[red,semithick,mark=triangle] 	[select coords between index={40}{49}] table[x=ThreadCount,y=GLB-GME,col sep=space]{Data/lonestar/threadSweep-cache-ref.csv};
				\addplot[blue,semithick,mark=asterisk] 	[select coords between index={40}{49}] table[x=ThreadCount,y=FS-GME,col sep=space]{Data/lonestar/threadSweep-cache-ref.csv};  
				\addplot[brown, semithick,mark=square] 	[select coords between index={40}{49}] table[x=ThreadCount,y=BH-GME,col sep=space]{Data/lonestar/threadSweep-cache-ref.csv};
		\nextgroupplot[title=64 sessions,xlabel={Number of Threads},xtick={1,4,8,16,32,48}]
				\addplot[red,semithick,mark=triangle] 	[select coords between index={50}{59}] table[x=ThreadCount,y=GLB-GME,col sep=space]{Data/lonestar/threadSweep-cache-ref.csv};
				\addplot[blue,semithick,mark=asterisk] 	[select coords between index={50}{59}] table[x=ThreadCount,y=FS-GME,col sep=space]{Data/lonestar/threadSweep-cache-ref.csv};  
				\addplot[brown, semithick,mark=square] 	[select coords between index={50}{59}] table[x=ThreadCount,y=BH-GME,col sep=space]{Data/lonestar/threadSweep-cache-ref.csv};
				\coordinate (bot) at (rel axis cs:1,0);
	\end{groupplot}
	\path (top-|current bounding box.west)-- node[anchor=south,rotate=90] {Number of L3 Cache References} (bot-|current bounding box.west);
	\path (top|-current bounding box.north)-- coordinate(legendpos) (bot|-current bounding box.north);
	\matrix[matrix of nodes, anchor=south, draw, inner sep=0.2em] at ([yshift=2ex]legendpos)
    {
	   \ref{plots:FS-GME:5}   & \FSGME{} & [10pt]
	   \ref{plots:GLB-GME:5}    & \GLBGME{}   & [10pt]
	   \ref{plots:BH-GME:5}    & \BHGME{} \\
	};
\end{tikzpicture}
\caption{Uniform session distribution}
\end{subfigure}
\begin{subfigure}{\textwidth}
\begin{tikzpicture}
	\begin{groupplot}[group style={group size= 3 by 2, horizontal sep=0.5in, vertical sep=0.5in},height=1.5in,width=2.25in, max space between ticks=20, minor tick num=1,tick label style={font=\scriptsize}]
		\nextgroupplot[title=2 sessions,xtick={1,4,8,16,32,48}]
				\addplot[red,semithick,mark=triangle] 	[select coords between index={0}{9}] table[x=ThreadCount,y=GLB-GME,col sep=space]{Data/lonestar/threadSweep-cache-ref-nu.csv};  \label{plots:GLB-GME:6}
				\addplot[blue,semithick,mark=asterisk] 	[select coords between index={0}{9}] table[x=ThreadCount,y=FS-GME,col sep=space]{Data/lonestar/threadSweep-cache-ref-nu.csv};  \label{plots:FS-GME:6}
				\addplot[brown, semithick,mark=square] 	[select coords between index={0}{9}] table[x=ThreadCount,y=BH-GME,col sep=space]{Data/lonestar/threadSweep-cache-ref-nu.csv};  \label{plots:BH-GME:6}
				\coordinate (top) at (rel axis cs:0,1);
		\nextgroupplot[title=8 sessions,xtick={1,4,8,16,32,48}]
				\addplot[red,semithick,mark=triangle] 	[select coords between index={10}{19}] table[x=ThreadCount,y=GLB-GME,col sep=space]{Data/lonestar/threadSweep-cache-ref-nu.csv};
				\addplot[blue,semithick,mark=asterisk] 	[select coords between index={10}{19}] table[x=ThreadCount,y=FS-GME,col sep=space]{Data/lonestar/threadSweep-cache-ref-nu.csv};  
				\addplot[brown, semithick,mark=square] 	[select coords between index={10}{19}] table[x=ThreadCount,y=BH-GME,col sep=space]{Data/lonestar/threadSweep-cache-ref-nu.csv};
		\nextgroupplot[title=16 sessions,xtick={1,4,8,16,32,48}]
				\addplot[red,semithick,mark=triangle] 	[select coords between index={20}{29}] table[x=ThreadCount,y=GLB-GME,col sep=space]{Data/lonestar/threadSweep-cache-ref-nu.csv};
				\addplot[blue,semithick,mark=asterisk] 	[select coords between index={20}{29}] table[x=ThreadCount,y=FS-GME,col sep=space]{Data/lonestar/threadSweep-cache-ref-nu.csv};  
				\addplot[brown, semithick,mark=square] 	[select coords between index={20}{29}] table[x=ThreadCount,y=BH-GME,col sep=space]{Data/lonestar/threadSweep-cache-ref-nu.csv};
		\nextgroupplot[title=32 sessions,xlabel={Number of Threads},xtick={1,4,8,16,32,48}]
				\addplot[red,semithick,mark=triangle] 	[select coords between index={30}{39}] table[x=ThreadCount,y=GLB-GME,col sep=space]{Data/lonestar/threadSweep-cache-ref-nu.csv};
				\addplot[blue,semithick,mark=asterisk] 	[select coords between index={30}{39}] table[x=ThreadCount,y=FS-GME,col sep=space]{Data/lonestar/threadSweep-cache-ref-nu.csv};  
				\addplot[brown, semithick,mark=square] 	[select coords between index={30}{39}] table[x=ThreadCount,y=BH-GME,col sep=space]{Data/lonestar/threadSweep-cache-ref-nu.csv};
		\nextgroupplot[title=48 sessions,xlabel={Number of Threads},xtick={1,4,8,16,32,48}]
				\addplot[red,semithick,mark=triangle] 	[select coords between index={40}{49}] table[x=ThreadCount,y=GLB-GME,col sep=space]{Data/lonestar/threadSweep-cache-ref-nu.csv};
				\addplot[blue,semithick,mark=asterisk] 	[select coords between index={40}{49}] table[x=ThreadCount,y=FS-GME,col sep=space]{Data/lonestar/threadSweep-cache-ref-nu.csv};  
				\addplot[brown, semithick,mark=square] 	[select coords between index={40}{49}] table[x=ThreadCount,y=BH-GME,col sep=space]{Data/lonestar/threadSweep-cache-ref-nu.csv};
		\nextgroupplot[title=64 sessions,xlabel={Number of Threads},xtick={1,4,8,16,32,48}]
				\addplot[red,semithick,mark=triangle] 	[select coords between index={50}{59}] table[x=ThreadCount,y=GLB-GME,col sep=space]{Data/lonestar/threadSweep-cache-ref-nu.csv};
				\addplot[blue,semithick,mark=asterisk] 	[select coords between index={50}{59}] table[x=ThreadCount,y=FS-GME,col sep=space]{Data/lonestar/threadSweep-cache-ref-nu.csv};  
				\addplot[brown, semithick,mark=square] 	[select coords between index={50}{59}] table[x=ThreadCount,y=BH-GME,col sep=space]{Data/lonestar/threadSweep-cache-ref-nu.csv};
				\coordinate (bot) at (rel axis cs:1,0);
	\end{groupplot}
	\path (top-|current bounding box.west)-- node[anchor=south,rotate=90] {Number of L3 Cache References} (bot-|current bounding box.west);
	\path (top|-current bounding box.north)-- coordinate(legendpos) (bot|-current bounding box.north);
\end{tikzpicture}
\caption{Non-uniform session distribution}
\end{subfigure}
\caption{Comparison of number of L3 cache references of different algorithms.}
\label{fig:lonestar:l3|references}
\end{figure} 

\pgfplotsset
{
	select coords between index/.style 2 args=
	{
    x filter/.code=
		{
        \ifnum\coordindex<#1\def\pgfmathresult{}\fi
        \ifnum\coordindex>#2\def\pgfmathresult{}\fi
    }
	}
}


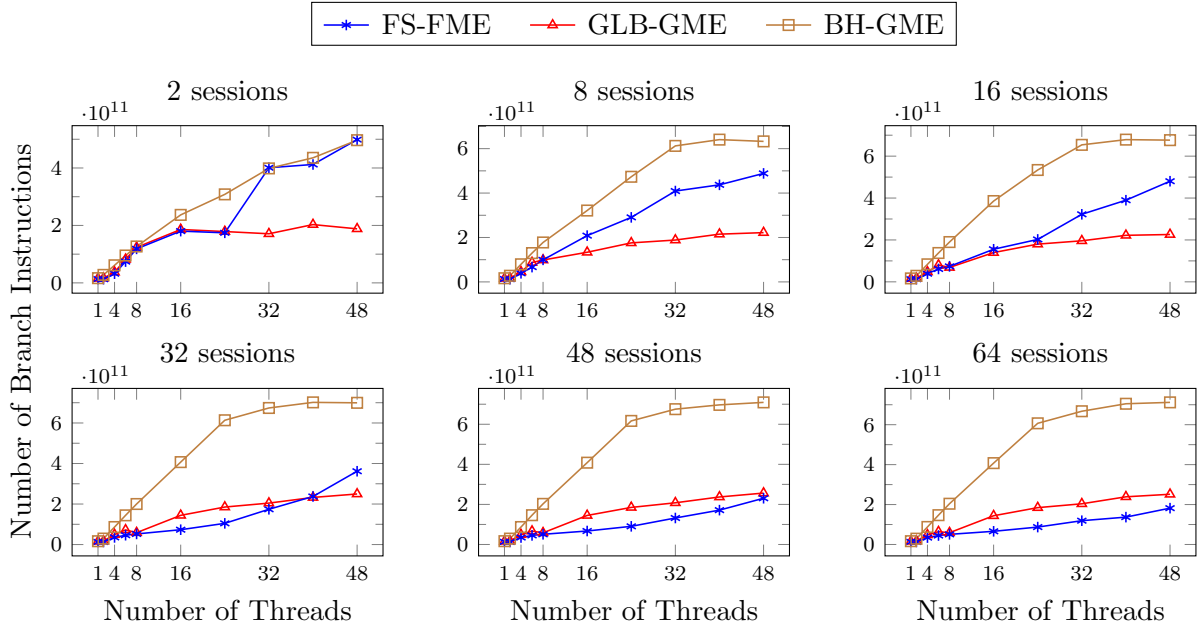
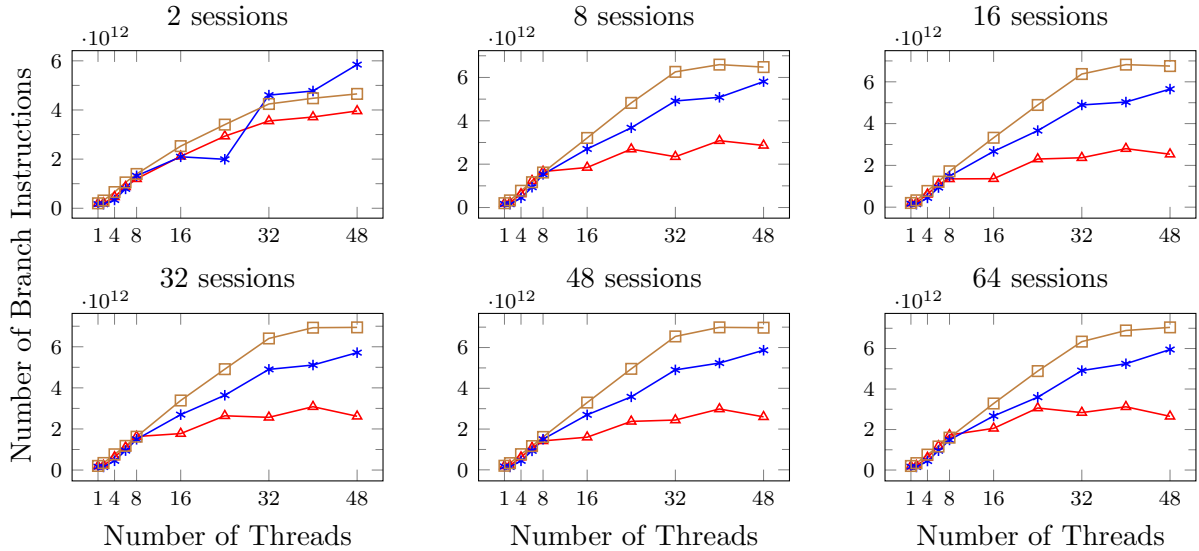
\begin{figure}[!t]
\centering
\begin{subfigure}{\textwidth}
\begin{tikzpicture}
	\begin{groupplot}[group style={group size= 3 by 2, horizontal sep=0.5in, vertical sep=0.5in},height=1.5in,width=2.25in, max space between ticks=20, minor tick num=1,tick label style={font=\scriptsize}]
		\nextgroupplot[title=2 sessions,xtick={1,4,8,16,32,48}]
				\addplot[red,semithick,mark=triangle] 	[select coords between index={0}{9}] table[x=ThreadCount,y=GLB-GME,col sep=space]{Data/lonestar/threadSweep-branches.csv};  \label{plots:GLB-GME:3}
				\addplot[blue,semithick,mark=asterisk] 	[select coords between index={0}{9}] table[x=ThreadCount,y=FS-GME,col sep=space]{Data/lonestar/threadSweep-branches.csv};  \label{plots:FS-GME:3}
				\addplot[brown, semithick,mark=square] 	[select coords between index={0}{9}] table[x=ThreadCount,y=BH-GME,col sep=space]{Data/lonestar/threadSweep-branches.csv};  \label{plots:BH-GME:3}
				\coordinate (top) at (rel axis cs:0,1);
		\nextgroupplot[title=8 sessions,xtick={1,4,8,16,32,48}]
				\addplot[red,semithick,mark=triangle] 	[select coords between index={10}{19}] table[x=ThreadCount,y=GLB-GME,col sep=space]{Data/lonestar/threadSweep-branches.csv};
				\addplot[blue,semithick,mark=asterisk] 	[select coords between index={10}{19}] table[x=ThreadCount,y=FS-GME,col sep=space]{Data/lonestar/threadSweep-branches.csv};  
				\addplot[brown, semithick,mark=square] 	[select coords between index={10}{19}] table[x=ThreadCount,y=BH-GME,col sep=space]{Data/lonestar/threadSweep-branches.csv};
		\nextgroupplot[title=16 sessions,xtick={1,4,8,16,32,48}]
				\addplot[red,semithick,mark=triangle] 	[select coords between index={20}{29}] table[x=ThreadCount,y=GLB-GME,col sep=space]{Data/lonestar/threadSweep-branches.csv};
				\addplot[blue,semithick,mark=asterisk] 	[select coords between index={20}{29}] table[x=ThreadCount,y=FS-GME,col sep=space]{Data/lonestar/threadSweep-branches.csv};  
				\addplot[brown, semithick,mark=square] 	[select coords between index={20}{29}] table[x=ThreadCount,y=BH-GME,col sep=space]{Data/lonestar/threadSweep-branches.csv};
		\nextgroupplot[title=32 sessions,xlabel={Number of Threads}, xtick={1,4,8,16,32,48}]
				\addplot[red,semithick,mark=triangle] 	[select coords between index={30}{39}] table[x=ThreadCount,y=GLB-GME,col sep=space]{Data/lonestar/threadSweep-branches.csv};
				\addplot[blue,semithick,mark=asterisk] 	[select coords between index={30}{39}] table[x=ThreadCount,y=FS-GME,col sep=space]{Data/lonestar/threadSweep-branches.csv};  
				\addplot[brown, semithick,mark=square] 	[select coords between index={30}{39}] table[x=ThreadCount,y=BH-GME,col sep=space]{Data/lonestar/threadSweep-branches.csv};
		\nextgroupplot[title=48 sessions,xlabel={Number of Threads}, xtick={1,4,8,16,32,48}]
				\addplot[red,semithick,mark=triangle] 	[select coords between index={40}{49}] table[x=ThreadCount,y=GLB-GME,col sep=space]{Data/lonestar/threadSweep-branches.csv};
				\addplot[blue,semithick,mark=asterisk] 	[select coords between index={40}{49}] table[x=ThreadCount,y=FS-GME,col sep=space]{Data/lonestar/threadSweep-branches.csv};  
				\addplot[brown, semithick,mark=square] 	[select coords between index={40}{49}] table[x=ThreadCount,y=BH-GME,col sep=space]{Data/lonestar/threadSweep-branches.csv};
		\nextgroupplot[title=64 sessions,xlabel={Number of Threads}, xtick={1,4,8,16,32,48}]
				\addplot[red,semithick,mark=triangle] 	[select coords between index={50}{59}] table[x=ThreadCount,y=GLB-GME,col sep=space]{Data/lonestar/threadSweep-branches.csv};
				\addplot[blue,semithick,mark=asterisk] 	[select coords between index={50}{59}] table[x=ThreadCount,y=FS-GME,col sep=space]{Data/lonestar/threadSweep-branches.csv};  
				\addplot[brown, semithick,mark=square] 	[select coords between index={50}{59}] table[x=ThreadCount,y=BH-GME,col sep=space]{Data/lonestar/threadSweep-branches.csv};
				\coordinate (bot) at (rel axis cs:1,0);
	\end{groupplot}
	\path (top-|current bounding box.west)-- node[anchor=south,rotate=90] {Number of Branch Instructions} (bot-|current bounding box.west);
	\path (top|-current bounding box.north)-- coordinate(legendpos) (bot|-current bounding box.north);
	\matrix[matrix of nodes, anchor=south, draw, inner sep=0.2em] at ([yshift=2ex]legendpos)
    {
	   \ref{plots:FS-GME:3}   & \FSGME{} & [10pt]
	   \ref{plots:GLB-GME:3}    & \GLBGME{}   & [10pt]
	   \ref{plots:BH-GME:3}    & \BHGME{} \\
	};

\end{tikzpicture}
\caption{Uniform session distribution}
\end{subfigure}
\begin{subfigure}{\textwidth}
\begin{tikzpicture}
	\begin{groupplot}[group style={group size= 3 by 2, horizontal sep=0.5in, vertical sep=0.5in},height=1.5in,width=2.25in, max space between ticks=20, minor tick num=1,tick label style={font=\scriptsize}]
		\nextgroupplot[title=2 sessions,xtick={1,4,8,16,32,48}]
				\addplot[red,semithick,mark=triangle] 	[select coords between index={0}{9}] table[x=ThreadCount,y=GLB-GME,col sep=space]{Data/lonestar/threadSweep-branches-nu.csv};  \label{plots:GLB-GME:4}
				\addplot[blue,semithick,mark=asterisk] 	[select coords between index={0}{9}] table[x=ThreadCount,y=FS-GME,col sep=space]{Data/lonestar/threadSweep-branches-nu.csv};  \label{plots:FS-GME:4}
				\addplot[brown, semithick,mark=square] 	[select coords between index={0}{9}] table[x=ThreadCount,y=BH-GME,col sep=space]{Data/lonestar/threadSweep-branches-nu.csv};  \label{plots:BH-GME:4}
				\coordinate (top) at (rel axis cs:0,1);
		\nextgroupplot[title=8 sessions,xtick={1,4,8,16,32,48}]
				\addplot[red,semithick,mark=triangle] 	[select coords between index={10}{19}] table[x=ThreadCount,y=GLB-GME,col sep=space]{Data/lonestar/threadSweep-branches-nu.csv};
				\addplot[blue,semithick,mark=asterisk] 	[select coords between index={10}{19}] table[x=ThreadCount,y=FS-GME,col sep=space]{Data/lonestar/threadSweep-branches-nu.csv};  
				\addplot[brown, semithick,mark=square] 	[select coords between index={10}{19}] table[x=ThreadCount,y=BH-GME,col sep=space]{Data/lonestar/threadSweep-branches-nu.csv};
		\nextgroupplot[title=16 sessions,xtick={1,4,8,16,32,48}]
				\addplot[red,semithick,mark=triangle] 	[select coords between index={20}{29}] table[x=ThreadCount,y=GLB-GME,col sep=space]{Data/lonestar/threadSweep-branches-nu.csv};
				\addplot[blue,semithick,mark=asterisk] 	[select coords between index={20}{29}] table[x=ThreadCount,y=FS-GME,col sep=space]{Data/lonestar/threadSweep-branches-nu.csv};  
				\addplot[brown, semithick,mark=square] 	[select coords between index={20}{29}] table[x=ThreadCount,y=BH-GME,col sep=space]{Data/lonestar/threadSweep-branches-nu.csv};
		\nextgroupplot[title=32 sessions,xlabel={Number of Threads},xtick={1,4,8,16,32,48}]
				\addplot[red,semithick,mark=triangle] 	[select coords between index={30}{39}] table[x=ThreadCount,y=GLB-GME,col sep=space]{Data/lonestar/threadSweep-branches-nu.csv};
				\addplot[blue,semithick,mark=asterisk] 	[select coords between index={30}{39}] table[x=ThreadCount,y=FS-GME,col sep=space]{Data/lonestar/threadSweep-branches-nu.csv};  
				\addplot[brown, semithick,mark=square] 	[select coords between index={30}{39}] table[x=ThreadCount,y=BH-GME,col sep=space]{Data/lonestar/threadSweep-branches-nu.csv};
		\nextgroupplot[title=48 sessions,xlabel={Number of Threads},xtick={1,4,8,16,32,48}]
				\addplot[red,semithick,mark=triangle] 	[select coords between index={40}{49}] table[x=ThreadCount,y=GLB-GME,col sep=space]{Data/lonestar/threadSweep-branches-nu.csv};
				\addplot[blue,semithick,mark=asterisk] 	[select coords between index={40}{49}] table[x=ThreadCount,y=FS-GME,col sep=space]{Data/lonestar/threadSweep-branches-nu.csv};  
				\addplot[brown, semithick,mark=square] 	[select coords between index={40}{49}] table[x=ThreadCount,y=BH-GME,col sep=space]{Data/lonestar/threadSweep-branches-nu.csv};
		\nextgroupplot[title=64 sessions,xlabel={Number of Threads},xtick={1,4,8,16,32,48}]
				\addplot[red,semithick,mark=triangle] 	[select coords between index={50}{59}] table[x=ThreadCount,y=GLB-GME,col sep=space]{Data/lonestar/threadSweep-branches-nu.csv};
				\addplot[blue,semithick,mark=asterisk] 	[select coords between index={50}{59}] table[x=ThreadCount,y=FS-GME,col sep=space]{Data/lonestar/threadSweep-branches-nu.csv};  
				\addplot[brown, semithick,mark=square] 	[select coords between index={50}{59}] table[x=ThreadCount,y=BH-GME,col sep=space]{Data/lonestar/threadSweep-branches-nu.csv};
				\coordinate (bot) at (rel axis cs:1,0);
	\end{groupplot}
	\path (top-|current bounding box.west)-- node[anchor=south,rotate=90] {Number of Branch Instructions} (bot-|current bounding box.west);
	\path (top|-current bounding box.north)-- coordinate(legendpos) (bot|-current bounding box.north);
\end{tikzpicture}
\caption{Non-uniform session distribution}
\end{subfigure}
\caption{Comparison of number of branch instructions of different algorithms.}
\label{fig:lonestar:branches}
\end{figure} 

\pgfplotsset
{
	select coords between index/.style 2 args=
	{
    x filter/.code=
		{
        \ifnum\coordindex<#1\def\pgfmathresult{}\fi
        \ifnum\coordindex>#2\def\pgfmathresult{}\fi
    }
	}
}

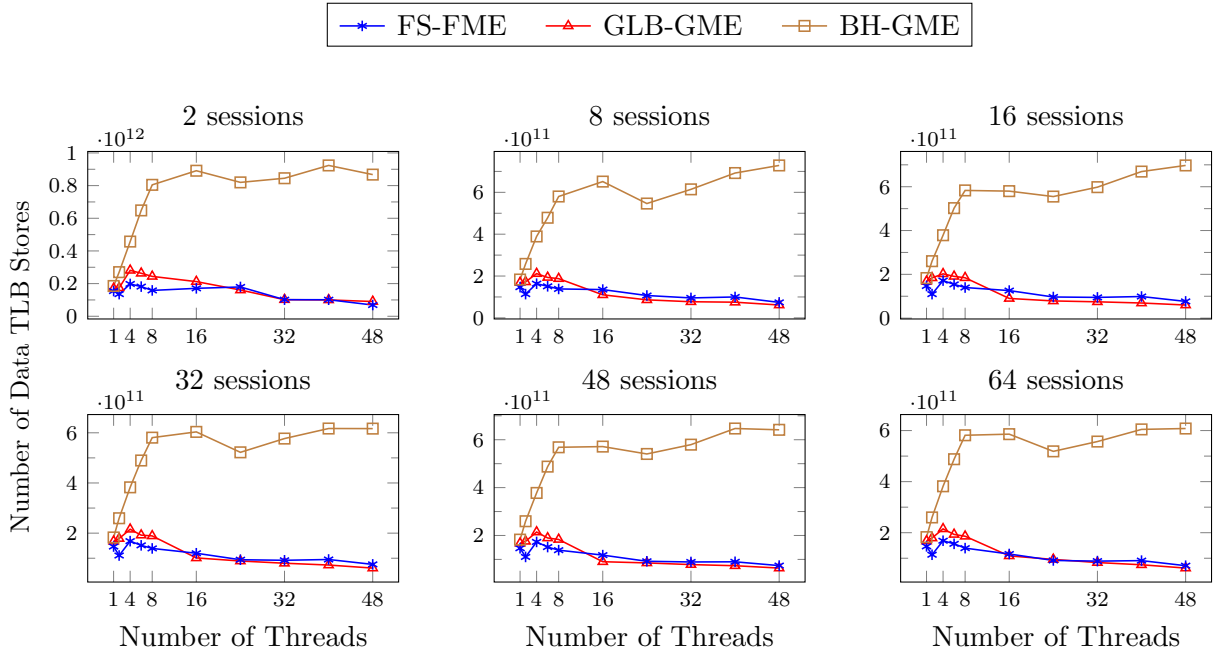
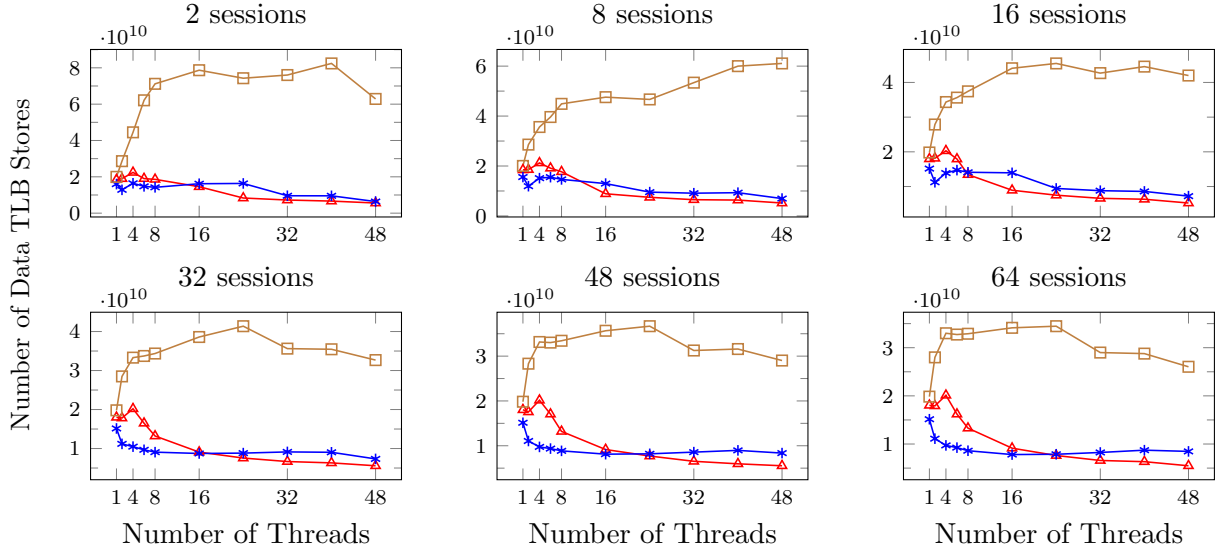
\begin{figure}[!t]
\centering
\begin{subfigure}[t]{\textwidth}
\begin{tikzpicture}
	\begin{groupplot}[group style={group size= 3 by 2, horizontal sep=0.5in, vertical sep=0.5in},height=1.5in,width=2.25in, max space between ticks=20, minor tick num=1,tick label style={font=\scriptsize}]
		\nextgroupplot[title=2 sessions,xtick={1,4,8,16,32,48}]
				\addplot[red,semithick,mark=triangle] 	[select coords between index={0}{9}] table[x=ThreadCount,y=GLB-GME,col sep=space]{Data/lonestar/threadSweep-dtlb-stores.csv};  \label{plots:GLB-GME:9}
				\addplot[blue,semithick,mark=asterisk] 	[select coords between index={0}{9}] table[x=ThreadCount,y=FS-GME,col sep=space]{Data/lonestar/threadSweep-dtlb-stores.csv};  \label{plots:FS-GME:9}
				\addplot[brown, semithick,mark=square] 	[select coords between index={0}{9}] table[x=ThreadCount,y=BH-GME,col sep=space]{Data/lonestar/threadSweep-dtlb-stores.csv};  \label{plots:BH-GME:9}
				\coordinate (top) at (rel axis cs:0,1);
		\nextgroupplot[title=8 sessions,xtick={1,4,8,16,32,48}]
				\addplot[red,semithick,mark=triangle] 	[select coords between index={10}{19}] table[x=ThreadCount,y=GLB-GME,col sep=space]{Data/lonestar/threadSweep-dtlb-stores.csv};
				\addplot[blue,semithick,mark=asterisk] 	[select coords between index={10}{19}] table[x=ThreadCount,y=FS-GME,col sep=space]{Data/lonestar/threadSweep-dtlb-stores.csv};  
				\addplot[brown, semithick,mark=square] 	[select coords between index={10}{19}] table[x=ThreadCount,y=BH-GME,col sep=space]{Data/lonestar/threadSweep-dtlb-stores.csv};
		\nextgroupplot[title=16 sessions,xtick={1,4,8,16,32,48}]
				\addplot[red,semithick,mark=triangle] 	[select coords between index={20}{29}] table[x=ThreadCount,y=GLB-GME,col sep=space]{Data/lonestar/threadSweep-dtlb-stores.csv};
				\addplot[blue,semithick,mark=asterisk] 	[select coords between index={20}{29}] table[x=ThreadCount,y=FS-GME,col sep=space]{Data/lonestar/threadSweep-dtlb-stores.csv};  
				\addplot[brown, semithick,mark=square] 	[select coords between index={20}{29}] table[x=ThreadCount,y=BH-GME,col sep=space]{Data/lonestar/threadSweep-dtlb-stores.csv};
		\nextgroupplot[title=32 sessions,xlabel={Number of Threads},xtick={1,4,8,16,32,48}]
				\addplot[red,semithick,mark=triangle] 	[select coords between index={30}{39}] table[x=ThreadCount,y=GLB-GME,col sep=space]{Data/lonestar/threadSweep-dtlb-stores.csv};
				\addplot[blue,semithick,mark=asterisk] 	[select coords between index={30}{39}] table[x=ThreadCount,y=FS-GME,col sep=space]{Data/lonestar/threadSweep-dtlb-stores.csv};  
				\addplot[brown, semithick,mark=square] 	[select coords between index={30}{39}] table[x=ThreadCount,y=BH-GME,col sep=space]{Data/lonestar/threadSweep-dtlb-stores.csv};
		\nextgroupplot[title=48 sessions,xlabel={Number of Threads},xtick={1,4,8,16,32,48}]
				\addplot[red,semithick,mark=triangle] 	[select coords between index={40}{49}] table[x=ThreadCount,y=GLB-GME,col sep=space]{Data/lonestar/threadSweep-dtlb-stores.csv};
				\addplot[blue,semithick,mark=asterisk] 	[select coords between index={40}{49}] table[x=ThreadCount,y=FS-GME,col sep=space]{Data/lonestar/threadSweep-dtlb-stores.csv};  
				\addplot[brown, semithick,mark=square] 	[select coords between index={40}{49}] table[x=ThreadCount,y=BH-GME,col sep=space]{Data/lonestar/threadSweep-dtlb-stores.csv};
		\nextgroupplot[title=64 sessions,xlabel={Number of Threads},xtick={1,4,8,16,32,48}]
				\addplot[red,semithick,mark=triangle] 	[select coords between index={50}{59}] table[x=ThreadCount,y=GLB-GME,col sep=space]{Data/lonestar/threadSweep-dtlb-stores.csv};
				\addplot[blue,semithick,mark=asterisk] 	[select coords between index={50}{59}] table[x=ThreadCount,y=FS-GME,col sep=space]{Data/lonestar/threadSweep-dtlb-stores.csv};  
				\addplot[brown, semithick,mark=square] 	[select coords between index={50}{59}] table[x=ThreadCount,y=BH-GME,col sep=space]{Data/lonestar/threadSweep-dtlb-stores.csv};
				\coordinate (bot) at (rel axis cs:1,0);
	\end{groupplot}
	\path (top-|current bounding box.west)-- node[anchor=south,rotate=90] {\small Number of Data TLB Stores} (bot-|current bounding box.west);
	\path (top|-current bounding box.north)-- coordinate(legendpos) (bot|-current bounding box.north);
	\matrix[matrix of nodes, anchor=south, draw, inner sep=0.2em] at ([yshift=4ex]legendpos)
    {
	   \ref{plots:FS-GME:9}   & \FSGME{} & [10pt]
	   \ref{plots:GLB-GME:9}    & \GLBGME{}   & [10pt]
	   \ref{plots:BH-GME:9}    & \BHGME{} \\
	};

\end{tikzpicture}
\caption{Uniform session distribution}
\end{subfigure}
\begin{subfigure}[t]{\textwidth}
\begin{tikzpicture}
	\begin{groupplot}[group style={group size= 3 by 2, horizontal sep=0.5in, vertical sep=0.5in},height=1.5in,width=2.25in, max space between ticks=20, minor tick num=1,tick label style={font=\scriptsize}]
		\nextgroupplot[title=2 sessions,xtick={1,4,8,16,32,48}]
				\addplot[red,semithick,mark=triangle] 	[select coords between index={0}{9}] table[x=ThreadCount,y=GLB-GME,col sep=space]{Data/lonestar/threadSweep-dtlb-stores-nu.csv};  \label{plots:GLB-GME:10}
				\addplot[blue,semithick,mark=asterisk] 	[select coords between index={0}{9}] table[x=ThreadCount,y=FS-GME,col sep=space]{Data/lonestar/threadSweep-dtlb-stores-nu.csv};  \label{plots:FS-GME:10}
				\addplot[brown, semithick,mark=square] 	[select coords between index={0}{9}] table[x=ThreadCount,y=BH-GME,col sep=space]{Data/lonestar/threadSweep-dtlb-stores-nu.csv};  \label{plots:BH-GME:10}
				\coordinate (top) at (rel axis cs:0,1);
		\nextgroupplot[title=8 sessions,xtick={1,4,8,16,32,48}]
				\addplot[red,semithick,mark=triangle] 	[select coords between index={10}{19}] table[x=ThreadCount,y=GLB-GME,col sep=space]{Data/lonestar/threadSweep-dtlb-stores-nu.csv};
				\addplot[blue,semithick,mark=asterisk] 	[select coords between index={10}{19}] table[x=ThreadCount,y=FS-GME,col sep=space]{Data/lonestar/threadSweep-dtlb-stores-nu.csv};  
				\addplot[brown, semithick,mark=square] 	[select coords between index={10}{19}] table[x=ThreadCount,y=BH-GME,col sep=space]{Data/lonestar/threadSweep-dtlb-stores-nu.csv};
		\nextgroupplot[title=16 sessions,xtick={1,4,8,16,32,48}]
				\addplot[red,semithick,mark=triangle] 	[select coords between index={20}{29}] table[x=ThreadCount,y=GLB-GME,col sep=space]{Data/lonestar/threadSweep-dtlb-stores-nu.csv};
				\addplot[blue,semithick,mark=asterisk] 	[select coords between index={20}{29}] table[x=ThreadCount,y=FS-GME,col sep=space]{Data/lonestar/threadSweep-dtlb-stores-nu.csv};  
				\addplot[brown, semithick,mark=square] 	[select coords between index={20}{29}] table[x=ThreadCount,y=BH-GME,col sep=space]{Data/lonestar/threadSweep-dtlb-stores-nu.csv};
		\nextgroupplot[title=32 sessions,xlabel={Number of Threads},xtick={1,4,8,16,32,48}]
				\addplot[red,semithick,mark=triangle] 	[select coords between index={30}{39}] table[x=ThreadCount,y=GLB-GME,col sep=space]{Data/lonestar/threadSweep-dtlb-stores-nu.csv};
				\addplot[blue,semithick,mark=asterisk] 	[select coords between index={30}{39}] table[x=ThreadCount,y=FS-GME,col sep=space]{Data/lonestar/threadSweep-dtlb-stores-nu.csv};  
				\addplot[brown, semithick,mark=square] 	[select coords between index={30}{39}] table[x=ThreadCount,y=BH-GME,col sep=space]{Data/lonestar/threadSweep-dtlb-stores-nu.csv};
		\nextgroupplot[title=48 sessions,xlabel={Number of Threads},xtick={1,4,8,16,32,48}]
				\addplot[red,semithick,mark=triangle] 	[select coords between index={40}{49}] table[x=ThreadCount,y=GLB-GME,col sep=space]{Data/lonestar/threadSweep-dtlb-stores-nu.csv};
				\addplot[blue,semithick,mark=asterisk] 	[select coords between index={40}{49}] table[x=ThreadCount,y=FS-GME,col sep=space]{Data/lonestar/threadSweep-dtlb-stores-nu.csv};  
				\addplot[brown, semithick,mark=square] 	[select coords between index={40}{49}] table[x=ThreadCount,y=BH-GME,col sep=space]{Data/lonestar/threadSweep-dtlb-stores-nu.csv};
		\nextgroupplot[title=64 sessions,xlabel={Number of Threads},xtick={1,4,8,16,32,48}]
				\addplot[red,semithick,mark=triangle] 	[select coords between index={50}{59}] table[x=ThreadCount,y=GLB-GME,col sep=space]{Data/lonestar/threadSweep-dtlb-stores-nu.csv};
				\addplot[blue,semithick,mark=asterisk] 	[select coords between index={50}{59}] table[x=ThreadCount,y=FS-GME,col sep=space]{Data/lonestar/threadSweep-dtlb-stores-nu.csv};  
				\addplot[brown, semithick,mark=square] 	[select coords between index={50}{59}] table[x=ThreadCount,y=BH-GME,col sep=space]{Data/lonestar/threadSweep-dtlb-stores-nu.csv};
				\coordinate (bot) at (rel axis cs:1,0);
	\end{groupplot}
	\path (top-|current bounding box.west)-- node[anchor=south,rotate=90,yshift=0.7em] {\small Number of Data TLB Stores} (bot-|current bounding box.west);
\end{tikzpicture}
\caption{Non-uniform session distribution}
\end{subfigure}
\caption{Comparison of number of data TLB (DTLB) stores  of different algorithms.}
\label{fig:lonestar:DTLB|stores}
\end{figure} 

\pgfplotsset
{
	select coords between index/.style 2 args=
	{
    x filter/.code=
		{
        \ifnum\coordindex<#1\def\pgfmathresult{}\fi
        \ifnum\coordindex>#2\def\pgfmathresult{}\fi
    }
	}
}

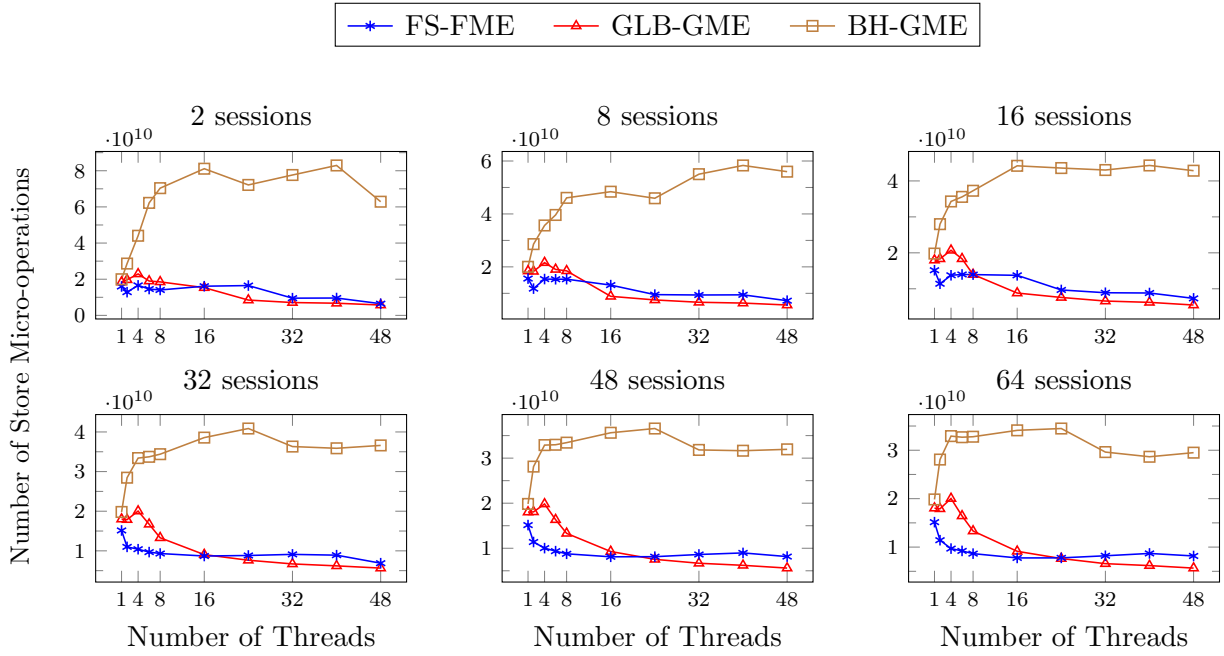
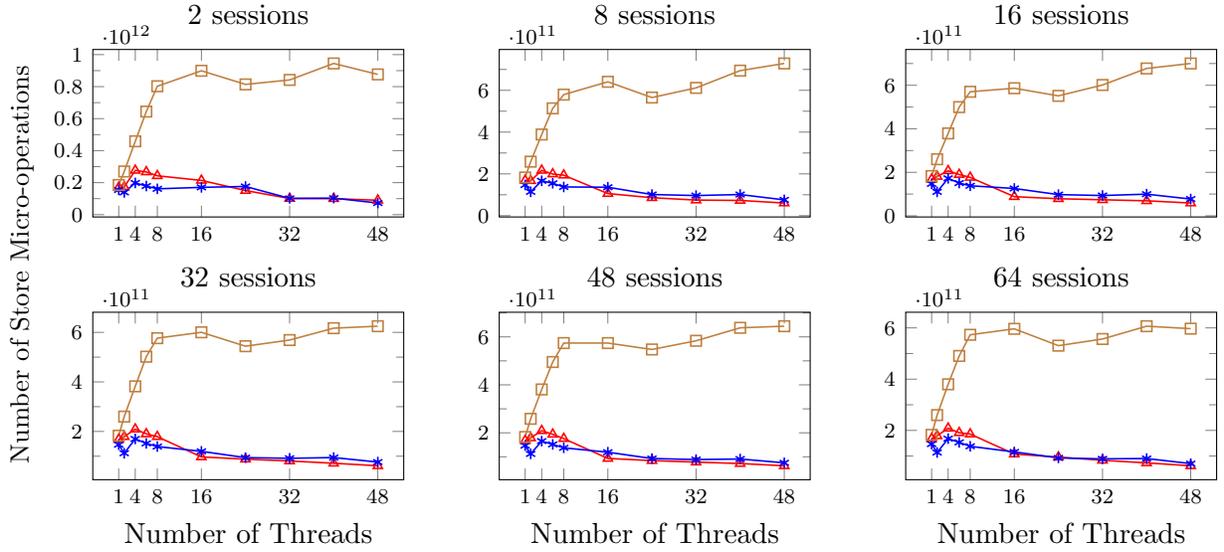
\begin{figure}[!t]
\centering
\begin{subfigure}{\textwidth}
\begin{tikzpicture}
	\begin{groupplot}[group style={group size= 3 by 2, horizontal sep=0.5in, vertical sep=0.5in},height=1.5in,width=2.25in, max space between ticks=20, minor tick num=1,tick label style={font=\scriptsize}]
		\nextgroupplot[title=2 sessions,xtick={1,4,8,16,32,48}]
				\addplot[red,semithick,mark=triangle] 	[select coords between index={0}{9}] table[x=ThreadCount,y=GLB-GME,col sep=space]{Data/lonestar/threadSweep-stores.csv};  \label{plots:GLB-GME:11}
				\addplot[blue,semithick,mark=asterisk] 	[select coords between index={0}{9}] table[x=ThreadCount,y=FS-GME,col sep=space]{Data/lonestar/threadSweep-stores.csv};  \label{plots:FS-GME:11}
				\addplot[brown, semithick,mark=square] 	[select coords between index={0}{9}] table[x=ThreadCount,y=BH-GME,col sep=space]{Data/lonestar/threadSweep-stores.csv};  \label{plots:BH-GME:11}
				\coordinate (top) at (rel axis cs:0,1);
		\nextgroupplot[title=8 sessions,xtick={1,4,8,16,32,48}]
				\addplot[red,semithick,mark=triangle] 	[select coords between index={10}{19}] table[x=ThreadCount,y=GLB-GME,col sep=space]{Data/lonestar/threadSweep-stores.csv};
				\addplot[blue,semithick,mark=asterisk] 	[select coords between index={10}{19}] table[x=ThreadCount,y=FS-GME,col sep=space]{Data/lonestar/threadSweep-stores.csv};  
				\addplot[brown, semithick,mark=square] 	[select coords between index={10}{19}] table[x=ThreadCount,y=BH-GME,col sep=space]{Data/lonestar/threadSweep-stores.csv};
		\nextgroupplot[title=16 sessions,xtick={1,4,8,16,32,48}]
				\addplot[red,semithick,mark=triangle] 	[select coords between index={20}{29}] table[x=ThreadCount,y=GLB-GME,col sep=space]{Data/lonestar/threadSweep-stores.csv};
				\addplot[blue,semithick,mark=asterisk] 	[select coords between index={20}{29}] table[x=ThreadCount,y=FS-GME,col sep=space]{Data/lonestar/threadSweep-stores.csv};  
				\addplot[brown, semithick,mark=square] 	[select coords between index={20}{29}] table[x=ThreadCount,y=BH-GME,col sep=space]{Data/lonestar/threadSweep-stores.csv};
		\nextgroupplot[title=32 sessions,xlabel={Number of Threads},xtick={1,4,8,16,32,48}]
				\addplot[red,semithick,mark=triangle] 	[select coords between index={30}{39}] table[x=ThreadCount,y=GLB-GME,col sep=space]{Data/lonestar/threadSweep-stores.csv};
				\addplot[blue,semithick,mark=asterisk] 	[select coords between index={30}{39}] table[x=ThreadCount,y=FS-GME,col sep=space]{Data/lonestar/threadSweep-stores.csv};  
				\addplot[brown, semithick,mark=square] 	[select coords between index={30}{39}] table[x=ThreadCount,y=BH-GME,col sep=space]{Data/lonestar/threadSweep-stores.csv};
		\nextgroupplot[title=48 sessions,xlabel={Number of Threads}, xtick={1,4,8,16,32,48}]
				\addplot[red,semithick,mark=triangle] 	[select coords between index={40}{49}] table[x=ThreadCount,y=GLB-GME,col sep=space]{Data/lonestar/threadSweep-stores.csv};
				\addplot[blue,semithick,mark=asterisk] 	[select coords between index={40}{49}] table[x=ThreadCount,y=FS-GME,col sep=space]{Data/lonestar/threadSweep-stores.csv};  
				\addplot[brown, semithick,mark=square] 	[select coords between index={40}{49}] table[x=ThreadCount,y=BH-GME,col sep=space]{Data/lonestar/threadSweep-stores.csv};
		\nextgroupplot[title=64 sessions,xlabel={Number of Threads},xtick={1,4,8,16,32,48}]
				\addplot[red,semithick,mark=triangle] 	[select coords between index={50}{59}] table[x=ThreadCount,y=GLB-GME,col sep=space]{Data/lonestar/threadSweep-stores.csv};
				\addplot[blue,semithick,mark=asterisk] 	[select coords between index={50}{59}] table[x=ThreadCount,y=FS-GME,col sep=space]{Data/lonestar/threadSweep-stores.csv};  
				\addplot[brown, semithick,mark=square] 	[select coords between index={50}{59}] table[x=ThreadCount,y=BH-GME,col sep=space]{Data/lonestar/threadSweep-stores.csv};
				\coordinate (bot) at (rel axis cs:1,0);
	\end{groupplot}
	\path (top-|current bounding box.west)-- node[anchor=south,rotate=90,yshift=0.7em] {\small  Number of Store Micro-operations} (bot-|current bounding box.west);
	\path (top|-current bounding box.north)-- coordinate(legendpos) (bot|-current bounding box.north);
	\matrix[matrix of nodes, anchor=south, draw, inner sep=0.2em] at ([yshift=4ex]legendpos)
    {
	   \ref{plots:FS-GME:11}   & \FSGME{} & [10pt]
	   \ref{plots:GLB-GME:11}    & \GLBGME{}   & [10pt]
	   \ref{plots:BH-GME:11}    & \BHGME{} \\
	};

\end{tikzpicture}
\caption{Uniform session distribution}
\end{subfigure}
\begin{subfigure}{\textwidth}
\begin{tikzpicture}
	\begin{groupplot}[group style={group size= 3 by 2, horizontal sep=0.5in, vertical sep=0.5in},height=1.5in,width=2.25in, max space between ticks=20, minor tick num=1,tick label style={font=\scriptsize}]
		\nextgroupplot[title=2 sessions,xtick={1,4,8,16,32,48}]
				\addplot[red,semithick,mark=triangle] 	[select coords between index={0}{9}] table[x=ThreadCount,y=GLB-GME,col sep=space]{Data/lonestar/threadSweep-stores-nu.csv};  \label{plots:GLB-GME:12}
				\addplot[blue,semithick,mark=asterisk] 	[select coords between index={0}{9}] table[x=ThreadCount,y=FS-GME,col sep=space]{Data/lonestar/threadSweep-stores-nu.csv};  \label{plots:FS-GME:12}
				\addplot[brown, semithick,mark=square] 	[select coords between index={0}{9}] table[x=ThreadCount,y=BH-GME,col sep=space]{Data/lonestar/threadSweep-stores-nu.csv};  \label{plots:BH-GME:12}
				\coordinate (top) at (rel axis cs:0,1);
		\nextgroupplot[title=8 sessions,xtick={1,4,8,16,32,48}]
				\addplot[red,semithick,mark=triangle] 	[select coords between index={10}{19}] table[x=ThreadCount,y=GLB-GME,col sep=space]{Data/lonestar/threadSweep-stores-nu.csv};
				\addplot[blue,semithick,mark=asterisk] 	[select coords between index={10}{19}] table[x=ThreadCount,y=FS-GME,col sep=space]{Data/lonestar/threadSweep-stores-nu.csv};  
				\addplot[brown, semithick,mark=square] 	[select coords between index={10}{19}] table[x=ThreadCount,y=BH-GME,col sep=space]{Data/lonestar/threadSweep-stores-nu.csv};
		\nextgroupplot[title=16 sessions,xtick={1,4,8,16,32,48}]
				\addplot[red,semithick,mark=triangle] 	[select coords between index={20}{29}] table[x=ThreadCount,y=GLB-GME,col sep=space]{Data/lonestar/threadSweep-stores-nu.csv};
				\addplot[blue,semithick,mark=asterisk] 	[select coords between index={20}{29}] table[x=ThreadCount,y=FS-GME,col sep=space]{Data/lonestar/threadSweep-stores-nu.csv};  
				\addplot[brown, semithick,mark=square] 	[select coords between index={20}{29}] table[x=ThreadCount,y=BH-GME,col sep=space]{Data/lonestar/threadSweep-stores-nu.csv};
		\nextgroupplot[title=32 sessions,xlabel={Number of Threads},xtick={1,4,8,16,32,48}]
				\addplot[red,semithick,mark=triangle] 	[select coords between index={30}{39}] table[x=ThreadCount,y=GLB-GME,col sep=space]{Data/lonestar/threadSweep-stores-nu.csv};
				\addplot[blue,semithick,mark=asterisk] 	[select coords between index={30}{39}] table[x=ThreadCount,y=FS-GME,col sep=space]{Data/lonestar/threadSweep-stores-nu.csv};  
				\addplot[brown, semithick,mark=square] 	[select coords between index={30}{39}] table[x=ThreadCount,y=BH-GME,col sep=space]{Data/lonestar/threadSweep-stores-nu.csv};
		\nextgroupplot[title=48 sessions,xlabel={Number of Threads},xtick={1,4,8,16,32,48}]
				\addplot[red,semithick,mark=triangle] 	[select coords between index={40}{49}] table[x=ThreadCount,y=GLB-GME,col sep=space]{Data/lonestar/threadSweep-stores-nu.csv};
				\addplot[blue,semithick,mark=asterisk] 	[select coords between index={40}{49}] table[x=ThreadCount,y=FS-GME,col sep=space]{Data/lonestar/threadSweep-stores-nu.csv};  
				\addplot[brown, semithick,mark=square] 	[select coords between index={40}{49}] table[x=ThreadCount,y=BH-GME,col sep=space]{Data/lonestar/threadSweep-stores-nu.csv};
		\nextgroupplot[title=64 sessions,xlabel={Number of Threads},xtick={1,4,8,16,32,48}]
				\addplot[red,semithick,mark=triangle] 	[select coords between index={50}{59}] table[x=ThreadCount,y=GLB-GME,col sep=space]{Data/lonestar/threadSweep-stores-nu.csv};
				\addplot[blue,semithick,mark=asterisk] 	[select coords between index={50}{59}] table[x=ThreadCount,y=FS-GME,col sep=space]{Data/lonestar/threadSweep-stores-nu.csv};  
				\addplot[brown, semithick,mark=square] 	[select coords between index={50}{59}] table[x=ThreadCount,y=BH-GME,col sep=space]{Data/lonestar/threadSweep-stores-nu.csv};
				\coordinate (bot) at (rel axis cs:1,0);
	\end{groupplot}
	\path (top-|current bounding box.west)-- node[anchor=south,rotate=90] {\small Number of Store Micro-operations} (bot-|current bounding box.west);
\end{tikzpicture}
\caption{Non-uniform session distribution}
\end{subfigure}
\caption{Comparison of number of store micro-operations of different algorithms.}
\label{fig:lonestar:store|uops}
\end{figure}

In this section, we present our experimental results of evaluating different GME algorithms.

\subsection{Different Group Mutual Exclusion Algorithms}

We compare the performance of the following implementations of GME algorithms:
\begin{enumerate}[label=(\alph*)]
\item the GME algorithm proposed by Bhatt and Huang~\cite{BhaHua:2010:PODC}, which is based on $f$-array data structure~\cite{Jay:2002:PODC}, denoted by \BHGME{}, 
\item the GME algorithm proposed by He {\em et al.}~\cite{HeGop+:2016:ICDCN}, which is a generalization of the classical Lamport's Bakery algorithm, denoted by \GLBGME{}, and
\item the GME algorithm presented in this work, denoted by \FSGME{}.
\end{enumerate}
		
We chose \GLBGME{} and \BHGME{} for comparison due to the following reasons. First, to our knowledge, \BHGME{} has the best RMR complexity among all existing GME algorithms, and \GLBGME{} is the most recently proposed GME algorithm. Second, both algorithms satisfy the First-Come-First-Serve (FCFS) property---relaxed in the case of \BHGME{} and strict in the case of \GLBGME{}. Additionally, \BHGME{} also satisfies the pulling property. Third, \BHGME{} uses load-linked and store-conditional (\LLSC{}) RMW instructions whereas \GLBGME{} does not use any RMW instruction. 

Since the system we used in our experiment did not support \LLSC{} instruction, 
we modified \BHGME{} to use \CAS{} instructions instead by using a timestamp (packed into the same word as the variable) to detect any writes to the variable. We used at least 32 bits for a timestamp, 
which never rolled over in our experiments. 

To our knowledge, no current implementations of \GLBGME{} and \BHGME{} exist (confirmed with the authors) so we implemented them ourselves. 
All implementations were written in C/C++.

\subsection{Experimental Setup}

\myparagraph{System used}

We conducted our experiments on a dual socket Intel Xeon E5-2690 v3  processor
consisting of 12 2.6~GHz cores per socket with hyper-threading enabled, yielding 48 logical cores in total, and 64~GB RAM.
The machine had 64~KB L1 cache (32~KB each of instruction and data) per core, 256~KB L2 cache per core and 30~MB L3 cache per socket. 
In addition, the machine had 128-entry instruction TLB (ITLB) and 64-entry data TLB (DTLB).
We used g++ compiler with optimization flags set to -O3.

\myparagraph{Experimental parameters}

To comparatively evaluate different implementations, we considered the following parameters:

\begin{enumerate}[leftmargin=*]
	
	\item \textbf{Number of Different Sessions:} We considered six different values of 2, 8, 16, 32, 48 and 64.
		
	\item \textbf{Distribution of Sessions:} We considered two different session distributions: 
	\begin{enumerate*}[label=(\alph*)]
	   \item \emph{uniform:} all session types are requested with the same probability.
	   \item \emph{non-uniform:} different session types are requested with different probabilities. In our experiments, we assumed that two session types are requested 90\% of the time and the remaining 10\% of the time (90/10 distribution)~\cite{Pla:2017:PhD}. 
	\end{enumerate*}

	\item \textbf{Maximum Degree of Contention:} This depends on number of threads that can concurrently request entry to their critical sections. We varied the number of threads from 1 to 48 in suitable increments.
	
\end{enumerate}

\myparagraph{Testing framework}

In each run of the experiment, every thread repeatedly generated requests for a (single) GME lock. Upon obtaining the lock, in its critical section, each thread executed an RMW instruction (\FAA) on one shared variable and a simple write instruction on a certain number of local variables (chosen randomly between 1 and 100 each time). The non-critical section was essentially empty.

\myparagraph{Run details}

For the uniform distribution, each experiment was run for eight seconds and the results were averaged over ten runs. For the non-uniform distribution, each experiment was run for two minutes and the results were averaged over five runs. Longer running time was required to conform to the desired probability distribution.  To generate random numbers, we used the Mersenne Twister pseudo-random number generator.
For both experiments, each run had a two second ``warm-up'' phase whose numbers were excluded from the calculations to minimize the effect of initial caching on the computed statistics.

\myparagraph{Evaluation metric}
We compared the performance of different implementations with respect to \emph{system throughput}, which is given by the number of critical section executions completed per unit time.


\subsection{Results}

\Cref{fig:lonestar:throughput}  depicts the system throughput of the three GME algorithms for the parameter values discussed above.
As the graphs clearly show, \FSGME{} outperformed the other two GME algorithms in \emph{almost all} the cases.
The difference was really stark at medium and larger thread count values when the throughput of \FSGME{} was sometimes as much as \gainnextbest{}  \emph{higher} than the next best performer. 
Even though, \BHGME{} has the lowest (worst-case) RMR complexity among the three algorithms, it had the worst performance. 

To understand the reasons for the differences in the performance, we used Linux performance analyzing tool \texttt{perf}. Specifically, we 
measured the following metrics for the three GME algorithms:
\begin{enumerate*}[label=(\alph*)]
\item number of L3 cache references,
\item number of data TLB stores, 
\item number of retired store micro-operations, and
\item number of branch instructions.
\end{enumerate*}
The third metric measures store micro-operations across the entire cache hierarchy.
The above four metrics for the three algorithms are shown in 
\cref{fig:lonestar:l3|references,fig:lonestar:DTLB|stores,fig:lonestar:store|uops,fig:lonestar:branches}, respectively.
(Many other options offered by \texttt{perf} tool were not supported by the system hardware.)

We believe that the reasons for the poor performance of \GLBGME{} compared to \FSGME{} are:
\begin{enumerate*}[label=(\roman*)]
\item \GLBGME{} has much higher RMR step complexity than \FSGME{}, and
\item \GLBGME{} satisfies strict FCFS property.
\end{enumerate*} 
Recall that \GLBGME{} has $\Omega(\n)$ RMR complexity.
In the \entry{} section of \GLBGME{}, a thread examines the request of every other thread and busy waits on that request to complete if it conflicts with its own and has a higher priority. 
As the graphs in \cref{fig:lonestar:l3|references} confirm, \GLBGME{} generates significantly larger number of L3 cache references  than the other two algorithms (implying worse performance with respect to L1 and L2 caches) and, moreover, the gap grows with the number of threads.
Further, with regard to FCFS property, as either the number of threads or the number of different sessions increases, the probability that requests of different threads conflict also increases. Joung proved analytically in \cite{Jou:2000:DC}  that, as the likelihood of conflicts increases, a GME algorithm that satisfies strict FCFS property will degenerate to a traditional ME algorithm in which only one thread is able to execute its critical section at a time.

We believe that the reason for the poor performance of \BHGME{} compared to \FSGME{} is its poor memory access pattern. 
In the \entry{} section of \BHGME{}, a thread has to perform many checks before it can enter its critical section. 
As the graphs in \cref{fig:lonestar:branches} show, the execution history of \BHGME{} exhibited higher branching compared to that of \FSGME{} and \GLBGME{}.
Excessive branching is undesirable and may adversely impact the performance of an algorithm significantly because branching inhibits many of the compiler and hardware optimizations.
Further, the graphs in \cref{fig:lonestar:DTLB|stores,fig:lonestar:store|uops} indicate that \BHGME{} has much higher number of store operations with respect to data TLB as well as cache hierarchy.
Finally, \texttt{perf-record} and \texttt{perf-annotate} tools also indicated that $f$-array based queue operations were the bottleneck and responsible for a large fraction of the execution time (of \BHGME{}). A more efficient implementation of a concurrent priority queue may help improve the performance of \BHGME{}.

For the non-uniform case, we conducted experiments using 80/20 and 70/30 session distributions as well. The gap between our GME algorithm and the other two GME algorithms narrowed by 10-15\%, but the trend was still the same.

We also conducted experiments in which threads were bound to cores using \linebreak \texttt{pthread\_setaffinity\_np()} 
function available in \texttt{sched.h} library. We observed that binding threads to cores had no significant impact on the performance and, thus, we have not included those results here.

\section{An Optimal GME Algorithm for DSM Model}

\begin{algorithm}[h]
\Additional \Shared \\
\Indp
$\readyArray$: \Array[$1\ldots\n$] of NodePtr\tcp*[r]{used for spinning - $\readyArray[i]$ is local to process $p_i$}
\label{line:ready|array}
\Indm
\BlankLine\BlankLine
\tcp{changes to \TryToEnter{} method - replace \crefrange{line:trytoenter:spin:adjourned|begin}{line:trytoenter:spin:adjourned|end} with \crefrange{line:tryotenter:ready|initialize}{line:tryotenter:ready:reset|end}}
$\readyArray[\myid]$ := $\current$\tcp*[r]{the node hosting the current session}
\label{line:tryotenter:ready|initialize}
\If{\IsAdjourned($\current \pointer \gate$)}
{
    $\readyArray[\myid]$ := \Null{}\tcp*[r]{session already \adjourned{} - no need to spin}
}
\While(\tcp*[f]{spin until the entry contains \Null pointer}){($\readyArray[\myid]$ $\neq$ \Null{})}          
{ 
   \label{line:tryotenter:ready:reset|begin}
   \tcp*[r]{do nothing}
}
\label{line:tryotenter:ready:reset|end}
\BlankLine\BlankLine
\tcp{notify a specific process to stop spinning}
\Notify(\Integer $i$, NodePtr $\node$) \\
\Begin{
\remove{
	$\current$ := $\announceArray[i]$\tcp*[r]{locate the request node of the process}
   	$\hpArray[\myid][2]$ := $\current$\tcp*[r]{declare it as a hazard pointer}
   	\If{($\current$ = \Null{}) \LOr{} ($\announceArray[i]$ $\neq$ $\current$)}
   	{
      \tcp{either process has no outstanding request or its request has already been fulfilled}
   	  \Continue\;
   	}
   	\BlankLine
   	\If{($\current \pointer \instance$ = $\node \pointer \instance$)}
   {
   	  \tcp{process has an outstanding request for the same GME object}
      \CAS($\readyArray[i]$, $\node$, \Null{})\tcp*[r]{signal the process to stop spinning}
   }
}
   \CAS($\readyArray[i]$, $\node$, \Null{})\tcp*[r]{signal the process to stop spinning}
   \label{line:notify:CAS}
}
\BlankLine\BlankLine
\tcp{notify all processes to stop spinning}
\NotifyAll(NodePtr $\node$) \\
\Begin{
   \lForEach{$i \in [1,\n]$}{\Notify($i$, $\node$)}
   \label{line:notifyall:for}
}
\BlankLine\BlankLine
\tcp{changes to the \AppendNode{} method - insert \cref{line:appendnode:notify} just after \cref{line:appendnode:advance|head}}
\Notify($\successor \pointer \owner$, $\current$)\;
\label{line:appendnode:notify}
\BlankLine\BlankLine
\tcp{changes to the \TryToEnter{} method - insert \cref{line:trytoenter:notify} just before \cref{line:trytoenter:retire|leader}}
\NotifyAll($\mynode \pointer \previousinlist$)\;
\label{line:trytoenter:notify}
\BlankLine\BlankLine
\tcp{changes to the \TryToLeave{} method - replace \cref{line:trytoleave:vacant} with \cref{line:trytoleave:vacant|notify}}
\lIf{\SetVacantFlag($\current$)}{\NotifyAll($\current$)}
\label{line:trytoleave:vacant|notify}
\caption{Changes for the DSM Model.}
\label{algo:dsm}
\end{algorithm}

In the DSM model, the lower bound on the RME step complexity of a request is $\Omega(\n)$. We show how to modify our GME algorithm to achieve this lower bound while maintaining all the other desirable properties 

The main idea is that, instead of busy waiting on session \state{} (until it \adjourn[s]), a process busy waits on a variable in its local memory (but still accessible to other processes); the local memory for process $p_i$ is denoted by $\readyArray[i]$. A process notifies a spinning process that the relevant session has \adjourned{} under the following conditions:
\begin{enumerate*}[label=(\arabic*)]
\item if it is the last process to leave the session provided it is also responsible for \adjourn[ing] the session,
\item if it is the leader of the next session, or
\item if it is trying to establish a new session and the spinning process is the leader of the new session.
\end{enumerate*}
To ensure that only relevant processes are notified, a process stores the address of the node hosting the session it is waiting to \adjourn[] in the location it will spin on (\emph{i.e.}. $\readyArray[i]$ for process $p_i$). A process notifies a spinning process that the session it is waiting to be \adjourned has indeed \adjourned by resetting the spin location to a \Null{} pointer using a \CAS{} instruction provided the location contains the address of the host node of the sesssion.

\section{Related Work}
\label{sec:related}

Several algorithms  have been proposed to solve the GME problem for shared-memory systems in the last two decades~\cite{Jou:2000:DC,KeaMoi:1999:PODC,Had:2001:PODC,TakIga:2003:COCOON,JayPet+:2003:PODC,DanHad:2004:DISC,BhaHua:2010:PODC,HeGop+:2016:ICDCN}.
Most of the earlier algorithms use only read and write instructions whereas many of the later algorithms use atomic instructions as well.
Different algorithms provide different fairness, concurrency and performance guarantees.

Many GME algorithms use a \emph{traditional} or an \emph{abortable} mutual exclusion (ME) algorithm as a subroutine.
The GME algorithm proposed by Keane and Moir in~\cite{KeaMoi:1999:PODC} uses a traditional ME algorithm as an exclusive lock to protect access to \entry and \exit sections of the algorithm. As such, this algorithm does not satisfy bounded exit and concurrent entering properties.
The GME algorithms presented in~\cite{DanHad:2004:DISC,BhaHua:2010:PODC} use an abortable ME algorithm as a subroutine. The main idea is that a process can enter its critical section using multiple pathways: 
\begin{enumerate*}[label=(\roman*)]
\item as a ``leader'' by establishing a new session, or
\item as a ``follower'' by joining an existing session. 
\end{enumerate*}
The first case occurs if the process is able to acquire the exclusive lock. 
The second case occurs if the process learns that a session ``compatible'' with its own request is already in progress in which case it aborts the ME algorithm and joins that session.
Both pathways are explored concurrently and, as soon as one of them allows the process enter its critical section, the other one is abandoned.

\subsection{Fairness and Concurrency Guarantees}

In many (group) mutual exclusion algorithms, the \entry section consists of two distinct subsections: a \emph{doorway} and a \emph{waiting-room}. A doorway is the wait-free portion of the \entry section that a process can complete within a bounded number of its own steps. 
A waiting-room of the \entry section is the portion where a process is blocked until it is its turn to execute its critical section. 

We say that two active processes are \emph{fellow} processes if they are requesting the same session (of the same GME object) and \emph{conflicting} processes if they are requesting different sessions (of the same GME object).

We say that an active process $p$ \emph{doorway-preceeds} another active process $q$ if $p$ completes the doorway before 
$q$ enters the doorway. 
Besides the four properties listed in \cref{sec:model|definition}, a GME algorithm may satisfy one or more of the  properties listed below.
These properties, which were defined in~\cite{Had:2001:PODC, JayPet+:2003:PODC, BhaHua:2010:PODC}, describe additional guarantees that a GME algorithm may provide.

\begin{description} 
\item[(P\propcount) Strong Concurrent Entering] If a process $p$ has completed its doorway, and $p$ doorway-precedes
every active conflicting  process, then $p$ enters its critical section within a bounded number of its own steps.
\label{pty:sce}
\item[(P\propcount) First-Come-First-Served (FCFS)] If  $p$ and $q$ are two conflicting processes such that $p$ doorway-preceeds $q$, then
$p$ enters its critical section before $q$. 
\item[(P\propcount) Relaxed FCFS] If  $p$ and $q$ are two conflicting processes such that $p$ doorway-preceeds $q$ but $q$ enter its critical section before $p$, then there exists another  process $r$ whose current attempt overlaps with that of $q$ such that $q$ and $r$ are fellow processes $p$ does not doorway-preceed $r$.
\item[(P\propcount) First-In-First-Enabled (FIFE)] If $p$ and $q$ are two fellow processes such that $p$ doorway-preceeds 
$q$ and $q$ enters its critical section before $p$, then $p$ can enter its critical section within a bounded number of its own steps.
\item[(P\propcount) Pulling] Suppose $p$ and $q$ are two fellow processes such that $p$ is currently in its critical section and doorway-preceeds all conflicting processes. If $q$ is currently in the waiting room, then $q$ can enter its critical section within a bounded number of its own steps.
\end{description}


\begin{table}[tp]
\centering
\tabulinesep=0.25em 
\resizebox{\textwidth}{!}{
\begin{tabu}{|X[12,l,m]|X[1,c,m]|X[1,c,m]|X[1,c,m]|X[1,c,m]|X[1,c,m]|X[1,c,m]|X[1,c,m]|X[1,c,m]|} \hline
\multicolumn{1}{|c|}{\textbf{Algorithm}} & \textbf{P2} & \textbf{P3} & \textbf{P4} & \textbf{P5} & \textbf{P6} & \textbf{P7} & \textbf{P8} & \textbf{P9} \\ 
\hline \hline
Joung \cite{Jou:2000:DC} & \cmark & \cmark & \cmark & \cmark & \xmark & \xmark & \xmark & \xmark  \\ 
\hline
Keane \& Moir \cite{KeaMoi:1999:PODC} & \cmark & \xmark & \xmark & \xmark & \xmark & \xmark & \xmark & \xmark \\
\hline
Hadzilacos \cite{Had:2001:PODC} & \cmark & \cmark & \cmark & \xmark & \cmark & \cmark & \xmark & \xmark \\
\hline
Takamura \& Igarashi \cite[Algorithm 1]{TakIga:2003:COCOON}  & \xmark & \cmark & \xmark & \xmark & \xmark & \xmark & \xmark & \xmark \\
\hline
Takamura \& Igarashi \cite[Algorithm 2]{TakIga:2003:COCOON}  &  \cmark & \xmark & \xmark & \xmark & \xmark & \xmark & \xmark & \xmark \\
\hline
Takamura \& Igarashi \cite[Algorithm 3]{TakIga:2003:COCOON}   &  \cmark & \xmark & \xmark & \xmark & \xmark & \xmark & \xmark & \xmark \\
\hline
Jayanti \emph{et al.} \cite[Algorithm 1]{JayPet+:2003:PODC}  &  \cmark & \cmark & \cmark & \xmark & \cmark & \cmark & \xmark & \xmark \\
\hline
Jayanti \emph{et al.} \cite[Algorithm 2]{JayPet+:2003:PODC}  &  \cmark & \cmark & \cmark & \cmark & \cmark & \xmark & \cmark & \xmark \\
\hline
Jayanti \emph{et al.} \cite[Algorithm 3]{JayPet+:2003:PODC}  &  \cmark & \cmark & \cmark & \cmark & \cmark & \xmark & \cmark & \xmark \\
\hline
Danek \& Hadzilacos \cite[Algorithm 1]{DanHad:2004:DISC}   &  \cmark & \cmark & \cmark & \cmark & \cmark & \cmark & \cmark & \xmark \\
\hline
Danek \& Hadzilacos \cite[Algorithm 2]{DanHad:2004:DISC}   &  \cmark & \cmark & \cmark & \xmark & \xmark & \cmark & \xmark & \xmark \\
\hline
Danek \& Hadzilacos \cite[Algorithm 3]{DanHad:2004:DISC}   &  \cmark & \cmark & \cmark & \xmark & \cmark & \cmark & \xmark & \xmark \\
\hline
Bhatt \& Huang \cite{BhaHua:2010:PODC} & \cmark & \cmark & \cmark & \xmark & \xmark & \cmark & \xmark & \cmark \\ 
\hline
He \emph{et al.} \cite[Algorithm 1]{HeGop+:2016:ICDCN}   &  \cmark & \cmark & \cmark & \xmark & \cmark & \xmark & \xmark & \xmark \\
\hline
He \emph{et al.} \cite[Algorithm 2]{HeGop+:2016:ICDCN}  &  \cmark & \cmark & \cmark & \xmark & \cmark & \xmark & \xmark & \xmark \\
\hline
Our Algorithm  [This Work] &  \cmark & \cmark & \cmark & \xmark & \xmark & \xmark & \xmark & \xmark \\
\hline
\end{tabu}
}
\caption{Fairness and concurrency properties satisfied by different algorithms. Note that all algorithms satisfy P1.}
\label{tab:fairness|concurrency}
\end{table}


\begin{table}[tp]
\centering
\tabulinesep=0.25em
\resizebox{\textwidth}{!}{
\begin{tabu} to 1.1\textwidth {|X[12,l,m]|X[5,c,m]|X[5,c,m]|X[5,c,m]|X[5,c,m]|X[6,c,m]|} \hline
\multicolumn{1}{|c|}{\textbf{Algorithm}} & \textbf{Space Complexity} & \textbf{Space Shareable Across Multiple Objects} & \textbf{Solitary Request Step Complexity} & \textbf{RMR Complexity} & \textbf{RMW Instructions} \\
\hline \hline
Yang \& Anderson's Algorithm~1 \cite{YanAnd:1995:DC}  & $O(\n)$ & \xmark & $O(\log \n)$ & $O(\log \n)$ & - \\ 
\hline
Mellor-Crummey \& Scott's Algorithm \cite{MelSco:1991:trcs}  & $O(1)$ & \cmark & $O(1)$ & $O(\n)$ & \FAS{} \\ 
\hline
\multicolumn{5}{l}{$\n$: number of processes} 
\end{tabu}
}
\caption{Complexity measures for ME algorithms used by some GME algorithms.}
\label{tab:complexity|ME}
\end{table}


\begin{table}[tp]
\centering
\tabulinesep=0.25em
\resizebox{\textwidth}{!}{
\begin{tabu} to 1.4\textwidth {|X[14,l,m]|X[7,c,m]|X[7,c,m]|X[9,c,m]|X[9,c,m]|X[6,c,m]|X[7,c,m]|} \hline
\multicolumn{1}{|c|}{\textbf{Algorithm}} & \textbf{Multi-Object Space Complexity} & \textbf{Solitary Request Step Complexity} & \textbf{Concurrent Entering  Step Complexity} & \textbf{RMR Complexity} & \textbf{Bounded Shared Variables} & \textbf{RMW Instructions} \\
\hline \hline
Joung \cite{Jou:2000:DC} &  $O(\m \n)$ & $\Omega(\n)$ & $\Omega(\n)$ & $\infty$ & \cmark & - \\ 
\hline
Keane \& Moir \cite{KeaMoi:1999:PODC} \newline  (with Yang \& Anderson's Algorithm~1)  & $O(\m \n)$ & $O(\log \n)$ & \na{} & $O(\log \n + \pc)$ & \cmark & - \\
\hline
Keane \& Moir \cite{KeaMoi:1999:PODC} \newline (with Mellor-Crummey and Scott's Algorithm) & $O(\m + \n)$  & $O(1)$ & \na{} & $O(\n)$ & \cmark & \FAS \\
\hline
Hadzilacos \cite{Had:2001:PODC} & $O(\m \n^2)$ & $\Omega(\n)$ & $\Omega(\n)$ & $O(\n + \pc^2)$ & \cmark & - \\ 
\hline
Takamura \& Igarashi \newline \cite[Algorithm 1]{TakIga:2003:COCOON}  &  $O(\m + \n)$ & $\Omega(\n)$ & \na{} & $\infty$ & \cmark & - \\ 
\hline
Takamura \& Igarashi \newline \cite[Algorithm 2]{TakIga:2003:COCOON}  &  $O(\m + \n)$ & $\Omega(\n)$ & \na{} & $O(\n)$ & \xmark & - \\
\hline
Takamura \& Igarashi \newline  \cite[Algorithm 3]{TakIga:2003:COCOON}   &  $O(\m + \n)$ & $\Omega(\n)$ & \na{} & $O(\n)$ & \xmark & - \\
\hline
Jayanti \emph{et al.} \newline \cite[Algorithm 1]{JayPet+:2003:PODC}  &  $O(\m \n)$ & $\Omega(\n)$ & $\Omega(\n)$ & $O(\n + \pc^2)$ & \cmark & - \\ \hline
He \emph{et al.} \newline \cite[Algorithm 1]{HeGop+:2016:ICDCN}  &  $O(\m + \n)$ & $\Omega(\n)$ & $\Omega(\n)$ & $O(\n)$ & \xmark & - \\
\hline
He \emph{et al.} \newline \cite[Algorithm 2]{HeGop+:2016:ICDCN}  &  $O(\m + \n)$ & $\Omega(\n)$ & $\Omega(\n)$ & $O(\n)$ & \cmark & - \\
\hline
Bhatt \& Huang \cite{BhaHua:2010:PODC} & $O(\m \n)$ & $O(1)$ & $O(\min\{\log \n, \pc\})$ & $O(\min\{\log \n, \pc\})$ & \xmark & \LL{}/\SC{} \\
\hline
Our Algorithm \newline [This Work] & $O(\m + \n^2)$ & $O(1)$ & $O(1)$ & $O(\pc)^\ast$ & \cmark & \CAS{} and \FAA{} \\
\hline
\multicolumn{6}{l}{\na{}: the algorithm does not satisfy P4} \\
\multicolumn{3}{l}{$\n$: number of processes} &
\multicolumn{3}{l}{$\m$: number of GME objects} \\
\multicolumn{3}{l}{$\s$: number of different types of sessions} & 
\multicolumn{3}{l}{$\pc$: point contention of the request} \\
\multicolumn{3}{l}{$\ast$: amortized case} 
\end{tabu}
}
\caption{Complexity measures of GME algorithms excluding those in~\cite{JayPet+:2003:PODC,DanHad:2004:DISC} that use an abortable ME algorithm as a subroutine.}
\label{tab:complexity}
\end{table}


\begin{table}[tp]
\centering
\tabulinesep=0.25em
\resizebox{\textwidth}{!}{
\begin{tabu} to 1.5\textwidth {|X[6,c,m]|X[10,l,m]|X[6,c,m]|X[6,c,m]|X[9,c,m]|X[9,c,m]|X[6,c,m]|X[7,c,m]|} \hline
\textbf{Abortable ME Algorithm} & \multicolumn{1}{|c|}{\textbf{Algorithm}} & \textbf{Space Complexity} & \textbf{Solitary Request Step Complexity} & \textbf{Concurrent Entering Step Complexity} & \textbf{RMR Complexity} & \textbf{Bounded Shared Variables} & \textbf{RMW Instructions} \\
\hline \hline
\multirow{5}{*}{$\n$-bit FCFS} & Jayanti \emph{et al.} \newline \cite[Algorithm~2]{JayPet+:2003:PODC} & $O(\m \n^2)$ & $\Omega(\n)$ & $\Omega(\n)$ & $O(\n)$ & \cmark  & -  \\ \cline{2-8}
& Jayanti \emph{et al.} \newline \cite[Algorithm~3]{JayPet+:2003:PODC} & $O(\m \n^2)$ & $\Omega(\n)$ & $\Omega(\n)$ & $O(\n)$ & \xmark & - \\ \cline{2-8}
 & Danek \& Hadzilacos \newline \cite[Algorithm~1]{DanHad:2004:DISC} & $O(\m \n^2)$ & $\Omega(\n)$ & $\Omega(\n)$ & $O(\n)$ & \cmark & - \\ 
\cline{2-8}
&  Danek \& Hadzilacos \newline \cite[Algorithm~2]{DanHad:2004:DISC} & $O(\m \n^2)$ & $\Omega(\n)$ & $\Omega(\n)$ & $O(\n)$ & \cmark & \CAS{} and \FAA{} \\ 
\cline{2-8}
&  Danek \& Hadzilacos \newline \cite[Algorithm~3]{DanHad:2004:DISC} & $O(\m \n^2 \s)$ & $O(\n \log \s)$ & $O(\n \log \s)$ & $O(\n \log \s)$ & \xmark & \CAS{} and \FAA{} \\ 
\hline
\multirow{5}{*}{\STAB{modified \\ Bakery \\  algorithm}} & Jayanti \emph{et al.} \newline \cite[Algorithm~2]{JayPet+:2003:PODC} & $O(\m \n^2)$ & $\Omega(\n)$ & $\Omega(\n)$ & $O(\n)$ & \xmark & - \\ \cline{2-8}
& Jayanti \emph{et al.} \newline \cite[Algorithm~3]{JayPet+:2003:PODC}  &  $O(\m \n)$ & $\Omega(\n)$ & $\Omega(\n)$ & $O(\n)$ & \xmark & -  \\ \cline{2-8}
&  Danek \& Hadzilacos \newline \cite[Algorithm~1]{DanHad:2004:DISC} & $O(\m + \n^2)$ & $\Omega(\n)$ & $\Omega(\n)$ & $O(\n)$ & \xmark & - \\ 
\cline{2-8}
&  Danek \& Hadzilacos \newline \cite[Algorithm~2]{DanHad:2004:DISC} & $O(\m + \n^2)$ & $\Omega(\n)$ & $\Omega(\n)$ & $O(\n)$ & \xmark & \CAS{} and \FAA{} \\ 
\cline{2-8}
&  Danek \& Hadzilacos \newline \cite[Algorithm~3]{DanHad:2004:DISC} & $O(\m \n \s)$ & $O(\n \log \s)$ & $O(\n \log \s)$ & $O(\n \log \s)$ & \xmark & \CAS{} and \FAA{} \\ 
\hline
\multirow{5}{*}{\STAB{Jayanti's \\ algorithm}} & Jayanti \emph{et al.} \newline \cite[Algorithm~2]{JayPet+:2003:PODC} & $O(\m \n^2)$ & $\Omega(\n)$ & $\Omega(\n)$ & $O(\n)$ & \xmark & - \\ \cline{2-8}
& Jayanti \emph{et al.} \newline \cite[Algorithm~3]{JayPet+:2003:PODC} & $O(\m \n)$ & $\Omega(\n)$ & $\Omega(\n)$ & $O(\n)$ & \xmark & - \\ \cline{2-8}
&  Danek \& Hadzilacos \newline \cite[Algorithm~1]{DanHad:2004:DISC} & $O(\m \n + \n^2)$ & $\Omega(\n)$ & $\Omega(\n)$ & $O(\n)$ & \xmark & - \\ 
\cline{2-8}
&  Danek \& Hadzilacos \newline \cite[Algorithm~2]{DanHad:2004:DISC} & $O(\m \n + \n^2)$ & $\Omega(\n)$ & $\Omega(\n)$ & $O(\n)$ & \xmark & \CAS{} and \FAA{} \\ 
\cline{2-8}
&  Danek \& Hadzilacos \newline \cite[Algorithm~3]{DanHad:2004:DISC} & $O(\m \n \s)$ & $O(\log \s)$ & $O\left(
\begin{array}{@{}c@{}}\log \s  \  \times \\ \min\{\log \n, \pc\} \end{array}
\right)$ 
& $O\left(
\begin{array}{@{}c@{}}\log \s  \  \times \\ \min\{\log \n, \pc\} \end{array}
\right)$ 
& \xmark & \CAS{} and \FAA{}  \\
\hline
\multicolumn{3}{l}{$\n$: number of processes} &
\multicolumn{3}{l}{$\m$: number of GME objects} \\
\multicolumn{3}{l}{$\s$: number of different types of sessions} & 
\multicolumn{3}{l}{$\pc$: point contention of the request} 
\end{tabu}
}
\caption{Complexity measures of the GME algorithms in~\cite{JayPet+:2003:PODC,DanHad:2004:DISC} using the three abortable mutex algorithms.}
\label{tab:complexity|abortable}
\end{table}

\remove{


\begin{table}[tp]
\centering
\tabulinesep=0.25em
\resizebox{\textwidth}{!}{
\begin{tabu}{|X[5,l,m]|X[5,c,m]|X[5,c,m]|X[5,c,m]|X[5,c,m]|X[5,c,m]|} \hline
\multicolumn{1}{|c|}{\textbf{Algorithm}} & \textbf{Space Complexity} & \textbf{Solitary Request Step Complexity} & \textbf{Concurrent Entering Step Complexity} & \textbf{RMR Step Complexity} & \textbf{Bounded Shared Variables} \\
\hline \hline
Algorithm 1 & $O(\m + \n^2)$ & $\Omega(\n)$ & $\Omega(\n)$ & $O(\n)$ & \xmark \\ 
\hline
Algorithm 2 & $O(\m + \n^2)$ & $\Omega(\n)$ & $\Omega(\n)$ & $O(\n)$ & \xmark \\ 
\hline
Algorithm 3 & $O(\m \n \s)$ & $O(\n \log \s)$ & $O(\n \log \s)$ & $O(\n \log \s)$ & \xmark \\ 
\hline
\end{tabu}
}
\caption{Complexity measures of the three GME algorithms in~\cite{DanHad:2004:DISC} using Lamport's Bakery algorithm as the abortable mutex algorithm.}
\label{tab:abortcomp2}
\end{table}


\begin{table}[tp]
\centering
\tabulinesep=0.25em
\resizebox{\textwidth}{!}
{
\begin{tabu} to 1.1\textwidth {|X[6,l,m]|X[6,c,m]|X[6,c,m]|X[9,c,m]|X[9,c,m]|X[6,c,m]|} \hline
\multicolumn{1}{|c|}{\textbf{Algorithm}} & \textbf{Space Complexity} & \textbf{Solitary Request Step Complexity} & \textbf{Concurrent Entering Step Complexity} & \textbf{RMR Step Complexity} & \textbf{Bounded Shared Variables} \\
\hline \hline
Algorithm 1 & $O(\m \n + \n^2)$ & $\Omega(\n)$ & $\Omega(\n)$ & $O(\n)$ & \xmark \\ 
\hline
Algorithm 2 & $O(\m \n + \n^2)$ & $\Omega(\n)$ & $\Omega(\n)$ & $O(\n)$ & \xmark \\ 
\hline
Algorithm 3 & $O(\m \n \s)$ & $O(\log \s)$ & $O\left(
\begin{array}{@{}c@{}}\log \s  \  \times \\ \min\{\log \n, \pc\} \end{array}
\right)$ 
& $O\left(
\begin{array}{@{}c@{}}\log \s  \  \times \\ \min\{\log \n, \pc\} \end{array}
\right)$ 
& \xmark \\
\hline
\end{tabu}
}
\caption{Complexity measures of the three GME algorithms in~\cite{DanHad:2004:DISC} using Jayanti's algorithm as the abortable mutex algorithm.}
\label{tab:abortcomp3}
\end{table}


\begin{table}[tp]
\centering
\tabulinesep=0.25em
\resizebox{\textwidth}{!}{
\begin{tabu} to 1.1\textwidth {|X[12,l,m]|X[7,c,m]|X[8,c,m]|X[9,c,m]|X[9,c,m]|X[6,c,m]|} \hline
\multicolumn{1}{|c|}{\textbf{Algorithm}} & \textbf{Space Complexity} & \textbf{Solitary Request Step Complexity} & \textbf{Concurrent Entering Step Complexity} & \textbf{RMR Step Complexity} & \textbf{Bounded Shared Variables} \\
\hline \hline
Bhatt \& Huang \cite{BhaHua:2010:PODC} & $O(\m \n)$ & $O(1)$ & $O(\min\{\log \n, \pc\})$ & $O(\min\{\log \n, \pc\})$ & \xmark \\
\hline
Our Algorithm \newline [This Work] & $O(\m + \n)$ & $O(1)$ & $O(1)$ & $O(\n)$ 
& \cmark \\
\hline
\end{tabu}
}
\caption{Complexity measures of the other GME algorithms that use atomic instructions.}
\label{tab:rmwcomp}
\end{table}


}

\Cref{tab:fairness|concurrency} compares all algorithms with respect to the properties they satisfy.
Takamura and Igarashi's GME algorithms~\cite{TakIga:2003:COCOON} do not satisfy \cref{pty:sce} because it is possible that processes requesting the same session can delay one another while executing the \entry section.

\subsection{Synchronization Instructions}

In addition to simple read and write instructions, GME algorithms may use one or more of the following RMW instructions: compare-and-swap (\CAS), fetch-and-add (\FAA), load-linked and store conditional (\LL{}/\SC{}) and fetch-and-store (\FAS) instructions. Compare-and-swap and fetch-and-add instructions are as defined in \cref{sec:model|definition}.

Load-linked and store-conditional  instructions are performed in pairs and behave in a similar manner as their simpler load and store counterparts, but with some additional features. A load-linked instruction takes a shared variable $x$ as input, and returns the current value of $x$ as output. 
A store-conditional instruction takes a shared variable $x$ and a value $v$ as inputs. If the value of $x$ has not been modified by any process since the \emph{associated} load-linked instruction was performed on $x$, it  overwrites the current value of $x$ with $v$,  and returns true as output. Otherwise, it leaves $x$ unchanged, and returns false as output.

A fetch-and-store instruction takes a shared variable $x$  and a value $v$ as inputs, returns the current value of $x$ as output and, at the same time, stores $v$ in $x$.

\subsection{Complexity Measures}

\Cref{tab:complexity,tab:complexity|abortable} shows the complexity measures of different GME algorithms with respect to the metrics described in \cref{sec:model|definition}.

Note that complexity measures for the algorithm in~\cite{KeaMoi:1999:PODC}, which uses a traditional ME algorithm 
as a subroutine,
and the algorithms in~\cite{JayPet+:2003:PODC,DanHad:2004:DISC,BhaHua:2010:PODC}, which use an abortabe ME algorithm  as a subroutine, depend on which ME algorithm is used.
Also, the work in~\cite{JayPet+:2003:PODC,DanHad:2004:DISC} describes multiple GME algorithms, each of which can be combined with one of the several abortable ME algorithms, to yield GME algorithms with different complexity measures.

To analyze the performance of the Keane and Moir's GME algorithm, we consider two different traditional ME algorithms, namely
\begin{enumerate*}[label=(\roman*)]
\item a tree-based algorithm by Yang and Anderson~\cite[Algorithm 1]{YanAnd:1995:DC}, which has small RMR complexity, and
\item a queue-based algorithm by Mellor-Crummey and Scott~\cite{MelSco:1991:trcs}, which has small space complexity as well as small solitary request step complexity.
\end{enumerate*}
Both use bounded space variables. 
\Cref{tab:complexity|ME} displays the performance of the two ME algorithms with respect to various complexity measures.

To analyze the performance of a GME algorithm that uses an abortable ME algorithm as a subroutine, we consider the following abortable ME algorithms, namely
\begin{enumerate*}[label=(\roman*)]
\item Bakery algorithm by Lamport\cite{Lam:1974:CACM} modified to support aborts~\cite{JayPet+:2003:PODC},
\item $\n$-bit FCFS algorithm by Lamport~\cite{Lam:1986:JACM}, and
\item an algorithm by Jayanti~\cite{Jay:2003:PODC}.
\end{enumerate*}
The second (middle) algorithm uses bounded space variables whereas the other two do not.
Note that the six GME algorithms in~\cite{JayPet+:2003:PODC,DanHad:2004:DISC} can be combined with each of the three abortable ME algorithm to yield eighteen GME algorithms with potentially different complexity measures. 
For clarity, the complexity measures of these eighteen GME algorithms are given in \cref{tab:complexity|abortable} and those of the remaining GME algorithms are given in \cref{tab:complexity}.

\remove{

The $\n$-bit FCFS algorithm uses bounded space variables, which cannot be shared across multiple locks. YA~1~\cite{YanAnd:1995:DC} is a tree-based algorithm; it uses a binary tree in which each node represents a critical section shared by its descendants. The root of the tree is the main critical section shared by all the processes. A process is required to traverse the path 
from a leaf node to the root node entering all critical sections on this path. Upon exiting, the process traverses the path in reverse. Within each node, the process makes only a constant number of remote references. 
This yield $O(\log \n)$ step complexity for solitary request step and RMR complexities. 
The algorithm requires constant size space per node but this space cannot be shared across nodes as well as across locks. 
YA~2~\cite{YanAnd:1995:DC} is a modification of YA~1 (the first algorithm). It essentially provides a fast path in the absence of contention for the (main) critical section. When a process detects a conflict, it reverts to a slow path similar to the first algorithm. The MCS algorithm is a queue based algorithm where each critical section request by a process is represented by a node in the queue. The lock requires only a sentinel tail pointer per queue. Thus, this constant size space can be shared across multiple locks. Note that all these algorithms use bounded shared variables.


Also, the Session Switch Complexity for almost all group mutual algorithms including our algorithm is $O(\n)$. The session switch complexity for Keane and Moir~\cite{KeaMoi:1999:PODC} varies based on which algorithm is used to implement the lock. Hence, this complexity measure will not be analyzed in the following discussions.

An analysis of the Solitary Request Complexity of each algorithm can be done as follows. Joung's solution is based on an array of flags. In order to enter the critical section (i.e. attend a session), a process must read the flags of all other processes, often more than once. As a result, this solution has a complexity of $O(\n)$ for entering the critical section in the absence of contention. Other array based algorithms that suffer from this same limitation are~\cite{Had:2001:PODC, TakIga:2003:COCOON, JayPet+:2003:PODC, DanHad:2004:DISC, HeGop+:2016:ICDCN}.
The queue-based algorithms~\cite{KeaMoi:1999:PODC, BhaHua:2010:PODC} require more analysis. For~\cite{KeaMoi:1999:PODC}, the Solitary Request Complexity analysis depends on the choice of the mutual exclusion algorithm used to implement the lock's \textit{Acquire} and \textit{Release} procedures. For~\cite{BhaHua:2010:PODC}, their algorithm uses shared objects such as a priority process-queue (a wait-free implementation of a restricted version of a priority queue) and a shared counter (a wait-free implementation of a counter object) which are implemented using f-arrays~\cite{Jay:2002:PODC}. An f-array generalizes the multi-writer snapshot object and can be used to design efficient, linearizable, and wait-free implementations of other shared objects. The enqueue ($Enqueue$) operation of the process priority queue and the increment operation ($inc$) of the shared counter, both require $O(1)$ Solitary Request Complexity. Hence, the  complexity for~\cite{BhaHua:2010:PODC} is $O(1)$.
Our algorithm also requires only a constant number of steps to enter the critical section in the absence of contention.

The concurrent entering complexity for array based algorithms~\cite{Jou:2000:DC, Had:2001:PODC, JayPet+:2003:PODC, DanHad:2004:DISC, HeGop+:2016:ICDCN} wherein a process has to check the flags of all other processes (often more than once) to enter the critical section even in the absence of conflicting processes would be $O(\n)$. For~\cite{BhaHua:2010:PODC} even in the absence of conflicting processes, each process needs to execute the $inc$ and $Enqueue$ operations which are implemented using f-arrays and thus, require \OkLogNmin{} steps. Hence, the concurrent entering complexity of~\cite{BhaHua:2010:PODC}, is \OkLogNmin{}.

We consider RMR complexity in the CC model only. Although Joung's algorithm satisfies all 4 properties, it has unbounded RMR complexity. For~\cite{KeaMoi:1999:PODC}, the overall RMR complexity depends on the RMR complexity of the \textit{Acquire} and \textit{Release} procedures of the exclusive lock and on the algorithm statements independent of these procedures. In the case of multiple GME objects, the RMR complexity of the latter would be \Ok{} where at most $k$ processes end up in the shared queue. For the former, it depends on the RMR complexity of the algorithm used to implement the exclusive lock.
For~\cite{BhaHua:2010:PODC}, each process executes the $inc$, $Enqueue$, $Dequeue$, and $Min$ operations in the algorithm. These operations, implemented using f-arrays, require \OkLogNmin{} RMR combined with the fact that they are executing only a constant number of times, gives a total RMR of \OkLogNmin{}. The array based algorithms (Algorithm 2 \& 3 in~\cite{JayPet+:2003:PODC}, Algorithm 1 \& 2 in~\cite{DanHad:2004:DISC}, and~\cite{HeGop+:2016:ICDCN} have RMR complexity of \ON{}. The algorithm in~\cite{Had:2001:PODC} and Algorithm 1 of~\cite{JayPet+:2003:PODC} are based on the Burns-Lamport mutual exclusion algorithm~\cite{Bur:1981:PhD, Lam:1986:JACM} which has been shown to have a complexity of $O(\n^2)$ in~\cite{HeGop+:2016:ICDCN}. However, in the case of multiple GME objects, contention for the same GME object would be rare. Hence, the complexity would be at least $O(N)$ because each process has to make a remote reference to check the $Competing$ flag of all other processes and at most $O(k^2)$ where $k$ is point contention.

The Algorithm 3 of~\cite{DanHad:2004:DISC} requires further analysis. This is a 2-session GME algorithm. To construct an algorithm for $s$ sessions requires a tournament-tree like execution where each node in the tree implements the 2-session algorithm. Thus, for a process to enter the critical section in the $s$-session algorithm, it has to traverse the 2-session nodes along the path to the root. This path traversal requires a step complexity of $O(\log \s)$. Within each 2-session node, a process makes only a constant number of RMRs excluding the RMRs required by the underlying mutex algorithm. Thus, the overall algorithm has a RMR complexity of $O(\log \s \cdot f(N))$ where $f(N)$ is the RMR complexity of the underlying abortable mutual exclusion algorithm. The algorithm assumes the use of~\cite{Jay:2003:PODC} as the underlying abortable mutual exclusion algorithm which has a RMR complexity of \OkLogNmin{} since it uses f-arrays~\cite{Jay:2002:PODC}. However, note that the f-array based mutex algorithm~\cite{Jay:2003:PODC} has a solitary request complexity of $O(1)$.

\Tabref{abortcomp1}, \Tabref{abortcomp2}, and \Tabref{abortcomp3} provide an analysis of the complexity measure of the GME algorithms that use these abortable ME algorithms depending on which abortable ME algorithm is chosen.

Finally, \Tabref{rmwcomp} compares the remaining two algorithms that are based on atomic instructions apart from Algorithm 2 \& 3 of~\cite{DanHad:2004:DISC}.

Nevertheless, none of the combinations yield a GME algorithm that has $O(1)$ solitary request (let alone concurrent entering) step complexity \emph{as well as} $O(1)$ space complexity per GME object. In fact, to our knowledge, only one GME algorithm has $O(1)$ solitary request step complexity, but uses unbounded variables~\cite{BhaHua:2010:PODC}.

}

\section{Conclusion and Future Work}
\label{sec:conclusion|future}

In this work, we have presented a suite of GME algorithms for an asynchronous shared memory system, each successively building on and addressing the limitations of the previous algorithm.  Specifically, the final version uses bounded space variables and satisfies the four most important properties of the GME problem, namely group mutual exclusion, lockout freedom, bounded exit and concurrent entering. At the same, it has $O(1)$ step-complexity in the absence of any conflicting requests, and $O(1)$ space-complexity per GME object when the system contains $\Omega(\n)$ GME objects.
To the best of our knowledge, our algorithm is the \emph{first} GME algorithm that has constant complexity for both metrics. Finally, the RMR complexity of our GME algorithm in the general case depends on the contention encountered by a request.
In our experimental results, our GME algorithm vastly outperformed two of the well-known existing GME algorithms especially for higher thread counts.

As future work, we plan to extend our GME algorithm so that 
it provides stronger fairness or concurrency guarantees such as some combination of first-come-first-served (FCFS)~\cite{Had:2001:PODC}, first-in-first-enabled (FIFE)~\cite{JayPet+:2003:PODC}, strong concurrent entry~\cite{JayPet+:2003:PODC} and pulling~\cite{BhaHua:2010:PODC} among others. 
We also plan to investigate the \emph{trade-off} between the RMR complexity of a GME algorithm (in the presence of conflicting requests) and its space complexity with large number of GME objects under the CC model. 
At this point, it is not clear to us if we can design a GME algorithm that has $O(1)$ complexity for both the metrics.
Finally, we plan to extend our GME algorithm so that it has good RMR complexity under the DSM model.

\remove{

As future work, we plan to extend our GME algorithm so that 
\begin{enumerate*}[label=(\roman*)]
\item it does not use unbounded variables, and
\item it is adaptive in the sense its context switch and remote memory reference complexities depend on the actual contention rather than the maximum contention.
\end{enumerate*}
We also plan to experimentally evaluate the performance of various GME algorithms with regard to multiple metrics. 

In the literature, several other desirable properties have been defined for the GME problem that provide different trade-off between fairness and concurrency. Examples include first-come-first-served (FCFS)~\cite{Had:2001:PODC}, first-in-first-enabled (FIFE)~\cite{JayPet+:2003:PODC}, strong concurrent entry~\cite{JayPet+:2003:PODC} and pulling~\cite{BhaHua:2010:PODC} among others. As future work, we plan to investigate designing GME algorithms that satisfy stronger fairness or concurrency properties while maintaining low concurrent entering step- and space-complexites. 

}

\bibliography{Bibliography/citations}

\end{document}